\newtheorem{thm}{Theorem}[]
\newtheorem{cor}{Corollary}[]
\theoremstyle{remark}
\newtheorem{rem}[]{Remark}
\theoremstyle{definition}
\begin{document}

\title{On Compress-Forward without Wyner-Ziv Binning for Relay Networks}
\author{Peng Zhong, Ahmad Abu Al Haija, and Mai Vu
\thanks{The authors are with Department of Electrical and Computer Engineering, McGill University, Montreal, Canada
(e-mails: peng.zhong@mail.mcgill.ca; ahmad.abualhaija@mail.mcgill.ca; mai.h.vu@mcgill.ca).}
\thanks{A part of this paper was presented at the Allerton conference, Sept 2011.}
}

\maketitle

\begin{abstract}
Noisy network coding is recently proposed for the general multi-source
network by Lim, Kim, El Gamal and Chung. This scheme builds on
compress-forward (CF) relaying but involves three new ideas, namely no
Wyner-Ziv binning, relaxed simultaneous decoding and message
repetition. In this paper, using the two-way relay channel as the
underlining example, we analyze the impact of each of these
ideas on the achievable rate region of relay networks.\par

First, CF without binning but with joint decoding of both the message
and compression index can achieve a larger rate region than the
original CF scheme for multi-destination relay networks. With binning
and successive decoding, the compression rate at each relay is
constrained by the weakest link from the relay to a destination; but
without binning, this constraint is relaxed. Second, simultaneous
decoding of all messages over all blocks without uniquely decoding the
compression indices can remove the constraints on compression rate
completely, but is still subject to the message block boundary
effect. This boundary effect is caused by the decoding of signal
received in the last block, in which not all inputs and outputs may be
regularly involved. Third, message repetition is necessary to overcome
this boundary effect and achieve the noisy network coding region for
multi-source networks. The rate region is enlarged with increasing
repetition times.\par

We also apply CF without binning specifically to the one-way and
two-way relay channels and analyze the rate regions in detail. For the
one-way relay channel, it achieves the same rate as the original CF
and noisy network coding but has only 1 block decoding delay. For the
two-way relay channel, we derive the explicit channel conditions in
the Gaussian and fading cases for CF without binning to achieve the
same rate region or sum rate as noisy network  coding. These analyses
may be appealing to practical implementation because of the shorter
encoding and decoding delay in CF without binning.

\end{abstract}

\begin{IEEEkeywords}
Compress-forward relaying, noisy network coding, Wyner-Ziv binning, joint decoding, message repetition, two-way relay channel.
\end{IEEEkeywords}

\IEEEpeerreviewmaketitle
\section{Introduction}\label{sec:intro}

\IEEEPARstart{T}{he} relay channel (RC) first introduced by van der Meulen \cite{meulen1971three} consists of a source aiming to communicate with a destination with the help of a relay node. Relay channels can be extended to relay networks, in which each node wishes to send a message to some destinations while also acting as a relay for others. Several relay strategies including decode-forward, compress-forward and amplify-forward have been proposed for the relay channel. In this paper, we focus on compress-forward based strategies, analyze its variants and applications to the one-way, two-way relay channels and the general relay networks.
\subsection{Related work}
We first review related works on compress-forward (CF) strategies for relay channels and networks. We divide the discussion into three parts. The first part is on single-source, single-destination relay networks. The second part is on some variants of the CF scheme. The third part is on multi-source multi-destination relay networks. We attempt to make this review compressive but do not claim it to be exhaustive of existing works.

\subsubsection{Single-source single-destination relay networks}
In the following works, the source and relay encoding are similar. At each block, the source sends a different message; the relay first compresses its received signal then uses Wyner-Ziv binning to reduce the forwarding rate. The differences are mainly in the decoding at the destination by either performing successive or joint decoding.
\begin{itemize}
\item Compress-forward is originally proposed for the 3-node single-relay channel (also called the one-way relay channel) by Cover and El Gamal in \cite{cover1979capacity}. The source sends a new message at each block using independent codebooks. The relay compresses its noisy observation of the source signal and forwards the bin index of the compression to the destination using Wyner-Ziv coding \cite{Wyner1976the}. A 3-step sequential decoding is then performed at the destination. At the end of each block, the destination first decodes the bin index, then decodes the compression index within that bin, and at last uses this compression index to decode the message sent in the previous block. The following rate is achievable with the 3-step sequential decoding CF scheme:
    \begin{align}
    \label{3-step-decoding}
    R\leq I(X;Y,\hat{Y}_r|X_r)
    \end{align}
    subject to
    \begin{align*}
    I(X_r;Y)\geq I(\hat{Y}_r;Y_r|X_r,Y).
    \end{align*}
    for some $p(x)p(x_r)p(\hat{y}_r|y_r,x_r)p(y,y_r|x,x_r)$.
\item El Gamal, Mohseni, and Zahedi put forward a 2-step decoding CF scheme in \cite{gamal2006bounds}. The source and relay perform the same encoding as that in \cite{cover1979capacity}. The destination, however, decodes in 2 sequential steps.
    At the end of each block, it decodes the bin index first, then decodes the message for some compression indices within that bin instead of decoding the compression index precisely. With this 2-step decoding CF scheme, the following rate is achievable:
    \begin{align}
    R\leq \min \{I(X,X_r;Y)-I(\hat{Y}_r;Y_r|X,X_r,Y), I(X;Y,\hat{Y}_r|X_r)\}
    \end{align}
    for some $p(x)p(x_r)p(\hat{y}_r|y_r,x_r)p(y,y_r|x,x_r)$.
    It has been shown \cite{gamal2006bounds}\cite{gamal2010lecture} that this 2-step decoding CF achieves the same rate as the original 3-step decoding CF in \eqref{3-step-decoding} but has a simpler representation.
\item  Kramer, Gastpar, and Gupta extend the 3-step decoding CF scheme to the single-source, single-destination and multiple-relay network in \cite{Kramer2005cooperative}.  The relays can also cooperate with each other to transmit the compression bin indices by partially decoding these bin indices.
\item Chong, Motani and Garg propose two coding schemes for the one-way relay channel combining decode-forward and compress-forward in \cite{Chong2007generalized}. Similar to the original combined scheme in \cite{cover1979capacity}, the source splits its message into two parts and the relay decode-forwards one part and compress-forwards the other. The destination, however, performs backward decoding either successively or simultaneously.
    These two
    strategies achieve higher rates than the original combined strategy in \cite{cover1979capacity} for certain parameters of the Gaussian relay channel.
\end{itemize}

\subsubsection{Variants of compress-forward}
Several variants of the CF scheme have been proposed for the relay channel.
\begin{itemize}
\item Cover and Kim propose a hash-forward (HF) scheme for the deterministic relay channel in \cite{Cover2007capacity}, in which the relay hashes (randomly bins) its observation directly without compression and forwards the bin index to the destination. HF achieves the capacity of the deterministic relay channel.
    Kim then proposes an extended hash-forward (EHF) scheme in \cite{Kim2007coding} which allows the destination to perform list decoding of the source messages for the general non-deterministic case.
\item  Razaghi and Yu introduce in \cite{Razaghi2010universal} a generalized hash-forward (GHF) relay strategy which allows the relay to choose a description of a general form rather than direct hashing (binning) of its received signal, but with a description rate on the opposite regime of Wyner-Ziv binning. The destination then performs list decoding of the description indices. GHF achieves the same rate as the original CF for the one-way relay channel but have been shown to exhibit advantage for multi-destination networks by allowing different description rates to different destinations \cite{Zhou2011incremental}.

\item Recently a new notion of quantize-forward or CF without binning emerges \cite{sung2011noisy} \cite{Kramer2011short} in which the relay compresses its received signal but forwards the compression index directly without using Wyner-Ziv binning. We discuss this idea in more details in the next few paragraphs.
\end{itemize}

\subsubsection{Multi-source multi-destination relay networks}
Relatively fewer works have applied CF to the general multi-source multi-destination relay network.
\begin{itemize}
\item Rankov and Wittneben applied the 3-step decoding CF scheme to the two-way relay channel (TWRC) in \cite{rankov2006achievable}, in which two users wish to exchange messages with the help of a relay. The encoding and decoding are similar to those in \cite{cover1979capacity}.
\item Recently, Lim, Kim, El Gamal and Chung put forward a noisy network coding scheme \cite{sung2011noisy} for the general multi-source noisy network. This scheme involves three key new ideas. The first is message repetition, in which the same message is sent multiple times over consecutive blocks using independent codebooks. Second, each relay does not use Wyner-Ziv binning but only compresses its received signal and forwards the compression index directly. Third, each destination performs simultaneous decoding of the message based on signals received from all blocks without uniquely decoding the compression indices.
    Noisy network coding simplifies to the capacity-achieving network coding for the noiseless multicast network. Compared to the original CF, it achieves the same rate for the one-way relay channel and achieves a larger rate region when applied to multi-source networks such as the two-way relay channel. However, it also brings more delay in decoding because of message repetition.
\end{itemize}

\subsubsection{Analysis of compress-forward schemes}
With the above variants and developments on CF relaying, some works have analyzed the different ideas in compress-forward.
\begin{itemize}
\item Kim, Skoglund and Caire \cite{Kim2009quantifying} show that without Wyner-Ziv binning at the relay, using sequential decoding at the destination incurs rate loss in the one-way relay channel. The amount of rate loss is quantified specifically in terms of the diversity-multiplexing tradeoff for the fading channel.
\item Wu and Xie demonstrate in \cite{Wuonallerton2010} that for single-source, single-destination and multiple-relay networks, using the original CF encoding with Wyner-Ziv binning of \cite{cover1979capacity}, there is no improvement on the achievable rate by joint decoding of the message and compression indices. To maximize the CF achievable rate, the compression rate should always be
    chosen to support successive decoding.

\item Wu and Xie then propose in \cite{Wu2010on} for the single-source, single-destination and multiple-relay network a scheme that achieves the same rate as noisy network coding \cite{sung2011noisy} but with the simpler classical encoding of \cite{cover1979capacity} and backward decoding. The backward decoding involves first decoding the compression indices then successively decoding the messages backward. It requires, however, extending the relay forwarding times for a number of blocks without sending new messages, which causes an albeit small but non-vanishing rate loss.
\item Kramer and Hou discuss in \cite{Kramer2011short} a short-message quantize-forward scheme without message repetition or Wyner-Ziv binning but with joint decoding of the message and compression index at the destination. It also achieves the same rate as the original CF and noisy network coding for the one-way relay channel.
This result is similar to our result in Section \ref{sec:one-way}; however, they are independent results and we were only aware of the work in \cite{Kramer2011short} after publishing our conference paper \cite{zhong2011compress}.
\end{itemize}

\subsection{Summary of main results}
In this paper, we analyze the impact of each of the 3 new ideas in noisy network coding, namely no Wyner-Ziv binning, simultaneous decoding and message repetition. We apply these ideas separately to the one-way, two-way relay channels and extend the analysis to the multi-source multi-destination relay network.
\begin{itemize}
\item For the one-way relay channel, we derive the achievable rate using CF without binning (also called quantize-forward) but with joint decoding of both the message and compression index. It achieves the same rate as the original CF in \cite{cover1979capacity} \cite{gamal2006bounds}. Compared with the original CF, it simplifies relay operation since Wyner-Ziv binning is not needed, but increases decoding complexity at the destination since joint decoding instead of successive decoding is required. Compared with noisy network coding, it achieves the same rate while having much less encoding and decoding delay.
    This part is similar to Kramer and Hou's result in \cite{Kramer2011short}, but is derived independently.
\item We next extend CF without binning to the two-way relay channel and derive its achievable rate region. The scheme achieves a larger rate region than the original CF \cite{rankov2006achievable}. With binning and successive decoding, the compression rate is constrained by the weaker of the links from relay to two users. But without binning, this constraint is relaxed. However, CF without binning generally achieves smaller rate region than noisy network coding \cite{sung2011noisy}. In CF without binning, the decoding of the compression index imposes constraints on the compression rate. In noisy network coding, the destinations do not decode the compression index explicitly, thus removing these constraints.
\item Using the two-way relay channel as the underlining example, we then analyze the effect of each of the three new ideas in noisy network coding for the general multi-source multi-destination relay networks. The main findings are:
    \begin{itemize}
\item No Wyner-Ziv binning requires joint decoding of both the message and compression index at the destination for all networks.
\item Comparing with the original CF \cite{cover1979capacity},
     no binning but with joint decoding can enlarge the rate region for multi-source multi-destination relay networks. This finding is contrary to the case of single-source single-destination relay networks in \cite{Wu2010on}.
\item CF without binning generally achieves a smaller rate region than noisy network coding because of constraints on compression rates.
    Simultaneous decoding over all blocks without explicitly decoding compression indices does remove this constraint but is still not enough. This is due to the block boundary effect when sending a different message at each block.
\item Message repetition is necessary to achieve the noisy network coding rate region for multi-source \mbox{networks.} It helps mitigate the block boundary effect as repetition times increase. To completely eliminate the boundary effect as in noisy network coding, however, the repetition times need to approach infinity.
    \end{itemize}
\item We also provide detailed analysis of the achievable rate regions for the Gaussian and fading two-way relay channels. Comparing the achievable rate region and the sum rate of the original CF, CF without binning and noisy network coding, we derive explicit conditions for which CF without binning achieves the same rate region or sum rate as the original CF or noisy network coding. These analyses may have implication in practical implementation of CF without binning in these channels.

\end{itemize}

\section{One-Way Relay Channel}\label{sec:one-way}
\subsection{Discrete memoryless one-way relay channel model}
The discrete memoryless one-way relay channel (DM-RC) is
denoted by $(\mathcal{X} \times \mathcal{X}_r ,p(y,y_r|x,x_r),\mathcal{Y} \times
\mathcal{Y}_r)$, as in Figure \ref{fig:one-way}. Sender $X$ wishes to send a message $M$ to receiver $Y$ with the help of the relay $(X_r,Y_r)$. We consider a full-duplex
channel in which all nodes can transmit and receive at the same time.

\begin{figure}[!t]
\centering
  \includegraphics[scale=0.7]{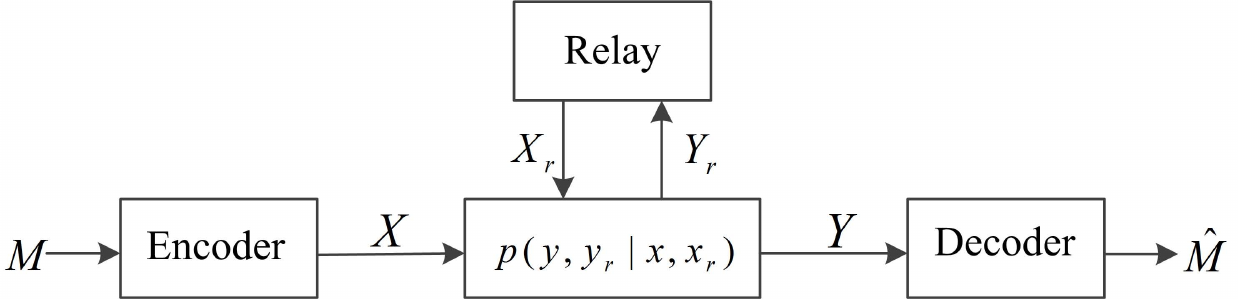}\\
  \caption{One-way relay channel model.}\label{fig:one-way}
\end{figure}

A $(2^{nR},n,P_e)$ code for a DM-RC consists of: a
message set $\mathcal{M}=[1:2^{nR}]$; an encoder that assigns a codeword $x^n(m)$ to each message $m\in [1:2^{nR}]$; a relay encoder that assigns at time $i\in [1:n]$ a symbol $x_{ri}(y_r^{i-1})$ to each past received sequence $y_r^{i-1}\in \mathcal{Y}_r^{i-1}$; a decoder that assigns a message $\hat{m}$ or an error message to each received sequence $y^{n}\in \mathcal{Y}^{n}$. The average error probability is
$P_e=\textrm{Pr}\{\hat{M}\neq M\}$. The rate $R$ is said to be achievable for the DM-RC if there exists a sequence of $(2^{nR},n,P_e)$ codes with $P_e\rightarrow 0$. The supremum of all achievable rates is the capacity of the DM-RC.\par

\begin{table}[!t]
\centering
\begin{spacing}{1.5}
\begin{tabular}{|c|c|c|c|c|}
\hline
Block&\ldots&$j$&$j+1$&\ldots  \\
\hline
$X$ & \ldots & $x^n(m_j)$  & $x^n(m_{j+1})$  &\ldots \\
\hline
$Y_r$ & \ldots & $\hat{y}_r(k_j|k_{j-1})$ & $\hat{y}_r(k_{j+1}|k_j)$ &\ldots \\
\hline
$X_r$ & \ldots& $x^n_r({k_{j-1}})$ & $x^n_r({k_j})$ &\ldots \\
\hline
$Y$ & \ldots & $\hat{k}_{j-1},\hat{m}_{j-1}$ &$\hat{k}_j,\hat{m}_j$ &\ldots \\
\hline
\end{tabular}
\end{spacing}
\caption{\textnormal{Encoding and decoding of CF without binning for the one-way relay channel.}}
\label{tab:CF_one-way}
\end{table}

\subsection{Achievable rate for compress-forward without binning}
In the original CF scheme \cite{cover1979capacity} \cite{gamal2006bounds}, the source sends a new message in each block. The relay forwards the bin index of the description of its received signal. The receiver uses successive decoding to decode the bin index first, then decode the message from the sender. Here we analyze a CF scheme in which the relay forwards the index of the description of its received signal directly without binning while the receiver jointly decodes the index and message at the same time. We show that CF without binning can achieve same rate as the original CF scheme with binning.\par
The encoding and decoding of CF without binning are as follows (also see Table \ref{tab:CF_one-way}).
We use a block coding scheme in which each user sends $b-1$ messages over $b$ blocks of $n$ symbols each.\par
\subsubsection{Codebook generation}
Fix $p(x)p(x_r)p(\hat{y}_r|y_r,x_r)$. We randomly and independently generate a codebook for each block $j\in [1:b]$ as follows.
\begin{itemize}
\item Independently generate $2^{nR}$ sequences $x^n(m_j)\sim\prod ^n_{i=1}p(x_i) $, where $m_j \in [1:2^{nR}]$.
\item Independently generate $2^{nR_r}$ sequences $x_r^n(k_{j-1})\sim\prod ^n_{i=1}p(x_{ri})$, where $k_{j-1} \in [1:2^{nR_r}]$.
\item For each $k_{j-1}\in [1:2^{nR_r}]$, independently generate $2^{nR_r}$ sequences $\hat{y}^n_r(k_j|k_{j-1})\sim\prod ^n_{i=1}p(\hat{y}_{ri}|x_{ri}(k_{j-1}))$ where $k_{j}\in [1:2^{nR_r}]$.
\end{itemize}

\subsubsection{Encoding}
The source transmits $x^n(m_j)$ in block $j$. The relay, upon receiving $y^n_r(j)$, finds an index $k_j$ such that
\begin{align*}
(\hat{y}_r^n(k_j|k_{j-1}),y^n_r(j),x^n_r(k_{j-1}))\in T^{(n)}_{\epsilon},
\end{align*}
where $T^{(n)}_{\epsilon}$ denote the strong $\epsilon$-typical set \cite{gamal2010lecture}.
 Assume that such $k_j$ is found, the relay sends $x^n_r(k_j)$ in block $j+1$.\par

\subsubsection{Decoding}
 Assume the receiver has decoded $k_{j-1}$ correctly in block $j$. Then in block $j+1$, the receiver finds a unique pair of $(\hat{m}_j,\hat{k}_j)$ such that
\begin{align}
(x_r^n(\hat{k}_j),y^n(j+1))&\in T^{(n)}_{\epsilon} \nonumber \\
\textrm{and}~~~~(x^n(\hat{m}_j),x_r^n(\hat{k}_{j-1}),\hat{y}_r^n(\hat{k}_j|\hat{k}_{j-1}),y^n(j))&\in T^{(n)}_{\epsilon}. \nonumber
\end{align}

\begin{thm}
\label{thm_rate}
Consider a compress-forward scheme in which the relay does not use Wyner-Ziv binning but sends the compression index directly and the destination performs joint decoding of both the message and compression index.
The following rate is achievable for the one-way relay channel:
\begin{align}
R\leq \min \{I(X,X_r;Y)-I(\hat{Y}_r;Y_r|X,X_r,Y), I(X;Y,\hat{Y}_r|X_r)\}
\end{align}
subject to
\begin{align}
\label{oneway_const_1}
I(X_r;Y)+I(\hat{Y}_r;X,Y|X_r)\geq I(\hat{Y}_r;Y_r|X_r)
\end{align}
for some $p(x)p(x_r)p(\hat{y}_r|y_r,x_r)p(y,y_r|x,x_r)$.
\end{thm}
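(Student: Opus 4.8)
The plan is to analyze the probability of error for the block-Markov scheme described in Table~\ref{tab:CF_one-way}, using a standard random-coding argument with joint typicality. I would assume without loss of generality (by symmetry of the codebook generation) that the transmitted messages and the compression indices chosen by the relay are all equal to $1$, i.e. $m_j = k_j = 1$ for all $j$, and let $\epsilon' > \epsilon$ so that typicality at the relay implies the weaker typicality used at the destination. The analysis splits into two parts: (i) the encoding step at the relay fails, and (ii) the joint decoding step at the destination fails given correct decoding in the previous block.

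For the encoding error, I would invoke the covering lemma: since the relay generates $2^{nR_r}$ conditionally independent sequences $\hat{y}_r^n(k_j|k_{j-1})$, a suitable $k_j$ with $(\hat{y}_r^n(k_j|k_{j-1}), y_r^n(j), x_r^n(k_{j-1})) \in T_\epsilon^{(n)}$ exists with probability approaching $1$ as long as
\[
R_r > I(\hat{Y}_r; Y_r \mid X_r) + \delta(\epsilon).
\]
For the decoding error at the destination in block $j+1$, I would condition on the event that $k_{j-1}$ was decoded correctly and that the relay's chosen indices are the all-ones sequence, then enumerate the error events according to whether the incorrect candidate has $\hat{k}_j \neq 1$ (with $\hat m_j$ arbitrary) or $\hat k_j = 1$ but $\hat m_j \neq 1$. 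Bounding the probabilities of these events via the joint typicality lemma and the packing-type argument, and combining with the fact that $\hat y_r^n$ is drawn conditioned on $x_r^n(\hat k_{j-1})$, yields the two constraints: from $\hat k_j = 1, \hat m_j \neq 1$ one gets $R \leq I(X; Y, \hat Y_r \mid X_r) - \delta(\epsilon)$, and from $\hat k_j \neq 1$ (which forces a fresh $\hat y_r^n$ independent of $y^n(j)$ and a fresh $x_r^n(\hat k_j)$ against $y^n(j+1)$) one obtains
\[
R + R_r \leq I(X_r; Y) + I(X, \hat Y_r; Y \mid X_r) + R - \big(\text{mutual information terms}\big),
\]
which after careful bookkeeping reduces to the combination $R \leq I(X, X_r; Y) - I(\hat Y_r; Y_r \mid X, X_r, Y)$ once the encoding constraint on $R_r$ is substituted. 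Finally, I would eliminate $R_r$ via Fourier--Motzkin: combining $R_r > I(\hat Y_r; Y_r \mid X_r) + \delta$ with the decoding constraint $R + R_r < I(X_r;Y) + I(X,\hat Y_r;Y\mid X_r) + R$ — i.e. $R_r < I(X_r;Y) + I(\hat Y_r; X, Y \mid X_r)$ — gives the feasibility condition \eqref{oneway_const_1}, and the rate bound collapses to the stated minimum.

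The main obstacle I anticipate is handling the coupling between blocks correctly in the decoding error analysis: the sequence $\hat y_r^n(\hat k_j \mid \hat k_{j-1})$ appears in the block-$j$ typicality check while $x_r^n(\hat k_j)$ appears in the block-$(j+1)$ check, so the error events for $\hat k_j \neq 1$ must be analyzed by treating $\hat y_r^n(\hat k_j \mid 1)$ as independent of everything in block $j$ given $x_r^n(1)$ and $x_r^n(\hat k_j)$ as independent of everything in block $j+1$. The subtle point is that we only require the pair $(\hat m_j, \hat k_j)$ to be unique, not that $\hat k_j$ be decoded on its own, so the error event where $\hat k_j \ne 1$ is unioned over all $\hat m_j$, and it is precisely this union that produces the extra $R$ term that the relay rate $R_r$ must accommodate — which is exactly what distinguishes this bound \eqref{oneway_const_1} from the standard successive-decoding constraint $I(X_r;Y) \geq I(\hat Y_r; Y_r \mid X_r, Y)$. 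I would also need to verify that the boundary blocks ($j=1$ and $j=b$) contribute vanishing rate loss as $b \to \infty$, which is routine.
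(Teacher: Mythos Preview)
Your approach is essentially the same as the paper's: random-coding error analysis for the block-Markov scheme, covering lemma for the relay's quantization step, joint-typicality decoding over two consecutive blocks, and Fourier--Motzkin elimination of $R_r$. The paper carries this out exactly as you outline.

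There is, however, one point in your decomposition that needs sharpening. You split the decoding errors into ``$\hat k_j=1,\ \hat m_j\neq 1$'' and ``$\hat k_j\neq 1$, $\hat m_j$ arbitrary'', and then try to extract the $R_r$-only constraint from the second bucket via the identity ``$R+R_r<\cdots+R$''. That manipulation is not legitimate: within the event $\{\hat k_j\neq 1\}$, the per-term probability genuinely depends on whether $\hat m_j=1$ or not, because when $\hat m_j=1$ the codeword $x^n(1)$ is the \emph{true} one and is therefore jointly typical with $y^n(j)$, whereas for $\hat m_j\neq 1$ it is independent. Hence the union bound over $\{\hat k_j\neq 1\}$ must be split into \emph{three} cases, not two:
\[
(\hat m_j=1,\ \hat k_j\neq 1),\qquad (\hat m_j\neq 1,\ \hat k_j=1),\qquad (\hat m_j\neq 1,\ \hat k_j\neq 1).
\]
The first yields $R_r\leq I(X_r;Y)+I(\hat Y_r;X,Y\mid X_r)$ directly (this is where \eqref{oneway_const_1} actually comes from), the second yields $R\leq I(X;Y,\hat Y_r\mid X_r)$, and the third yields $R+R_r\leq I(X,X_r;Y)+I(\hat Y_r;X,Y\mid X_r)$. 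This is precisely the paper's $\mathcal E_{3j},\mathcal E_{4j},\mathcal E_{5j}$ split. Once you make that separation explicit, the rest of your plan goes through verbatim.
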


\begin{proof}
See Appendix \ref{appendix:thm_rate}.
\end{proof}

\subsection{Comparison with the original compress-forward scheme}
\begin{thm}
\label{thm:oneway_ori}
Compress-forward without binning in Theorem \ref{thm_rate} achieves the same rate as the original compress-forward scheme for the one-way relay channel, which is:
\begin{align}
R\leq \min \{I(X,X_r;Y)-I(\hat{Y}_r;Y_r|X,X_r,Y), I(X;Y,\hat{Y}_r|X_r)\}
\end{align}
for some $p(x)p(x_r)p(\hat{y}_r|y_r,x_r)p(y,y_r|x,x_r)$.
\end{thm}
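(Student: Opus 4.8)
The plan is to show the two rate expressions coincide by proving that the extra constraint \eqref{oneway_const_1} in Theorem \ref{thm_rate} is never active, in the sense that it can always be satisfied at the cost of, at worst, nothing: whatever rate is achievable under the original CF expression can be matched by a choice of input distribution (in particular, by the \emph{same} $p(x)p(x_r)p(\hat y_r\mid y_r,x_r)$) that also satisfies \eqref{oneway_const_1}. Since Theorem \ref{thm_rate}'s region is a priori contained in the original CF region (it has the same objective but an additional constraint), it suffices to prove the reverse inclusion. So the first step is to take an arbitrary joint distribution achieving a rate $R$ under the original CF bound and argue that \eqref{oneway_const_1} holds for it automatically, OR that one can modify $\hat Y_r$ (e.g.\ by time-sharing with a trivial $\hat Y_r=\emptyset$ description) to restore \eqref{oneway_const_1} without decreasing $\min\{I(X,X_r;Y)-I(\hat Y_r;Y_r\mid X,X_r,Y),\ I(X;Y,\hat Y_r\mid X_r)\}$.

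The key manipulation is an information-inequality comparison. Rewrite the relevant quantities: the first term of the objective is $I(X,X_r;Y)-I(\hat Y_r;Y_r\mid X,X_r,Y)$, and the original CF region implicitly only ``uses'' a description $\hat Y_r$ for which this term is meaningful, i.e.\ nonnegative and not larger than the cut $I(X;Y,\hat Y_r\mid X_r)$. I would expand $I(\hat Y_r;Y_r\mid X_r) = I(\hat Y_r;X,Y,Y_r\mid X_r) - I(\hat Y_r;X,Y\mid X_r,Y_r)$ and use the Markov chain $\hat Y_r-(X_r,Y_r)-(X,Y)$ (which holds by the code construction, since $\hat Y_r$ depends only on $(Y_r,X_r)$) to get $I(\hat Y_r;X,Y\mid X_r,Y_r)=0$, hence $I(\hat Y_r;Y_r\mid X_r)=I(\hat Y_r;X,Y,Y_r\mid X_r)=I(\hat Y_r;X,Y\mid X_r)+I(\hat Y_r;Y_r\mid X,Y,X_r)$. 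Substituting this into \eqref{oneway_const_1}, the constraint becomes exactly
\begin{align*}
I(X_r;Y) \ \geq\ I(\hat Y_r;Y_r\mid X,Y,X_r),
\end{align*}
which is precisely the compression feasibility condition needed for the relay's description to be decodable given the side information $(X,Y)$ at the destination — and this is implied whenever the first term of the objective is nonnegative, since $I(X,X_r;Y)-I(\hat Y_r;Y_r\mid X,X_r,Y)\ge 0$ rearranges (using $I(X,X_r;Y)=I(X_r;Y)+I(X;Y\mid X_r)$) to something at least as strong. So the real content is this chain of identities reducing \eqref{oneway_const_1} to a condition automatically met on the boundary of the original CF region.

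The main obstacle I anticipate is the edge case where the first term of the objective is the binding one and is driven toward (or below) zero — i.e.\ distributions for which $I(X,X_r;Y)-I(\hat Y_r;Y_r\mid X,X_r,Y)$ is small or negative. There, the reduction above shows \eqref{oneway_const_1} may genuinely fail, but then the original CF rate contributed by that distribution is also small (at most that first term), so such distributions are irrelevant to the rate region: one can replace $\hat Y_r$ by a coarser description (ultimately the null description $\hat Y_r=\emptyset$, giving the trivial direct-transmission rate $I(X;Y)$) without loss. Formalizing ``without loss'' cleanly — ideally via a convexity/time-sharing argument or by directly checking that the supremum over the constrained region equals the supremum over the unconstrained region — is the step that needs care. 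I would close the proof by noting the trivial inclusion (constrained region $\subseteq$ original region) and combining it with the above to get equality.
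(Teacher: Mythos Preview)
Your reduction of constraint \eqref{oneway_const_1} via the Markov chain $\hat Y_r-(X_r,Y_r)-(X,Y)$ to the equivalent form
\[
I(X_r;Y)\ \ge\ I(\hat Y_r;Y_r\mid X,X_r,Y)
\]
is correct, and is essentially the same identity the paper exploits. But the next step in your argument is wrong: nonnegativity of the first objective term gives only $I(X_r;Y)+I(X;Y\mid X_r)\ge I(\hat Y_r;Y_r\mid X,X_r,Y)$, which is \emph{weaker} than the reduced constraint (there is an extra nonnegative term on the left), not ``at least as strong.'' So that part of the argument does not establish what you need, and you are left leaning entirely on the time-sharing patch, which you correctly flag as the delicate part but do not carry out.

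The paper sidesteps this difficulty completely by invoking the known \emph{equivalent} characterization of the original CF region \cite{cover1979capacity,gamal2006bounds,gamal2010lecture}:
\[
R\le I(X;Y,\hat Y_r\mid X_r)\quad\text{subject to}\quad I(X_r;Y)\ge I(\hat Y_r;Y_r\mid X_r,Y).
\]
Since $I(\hat Y_r;Y_r\mid X_r,Y)=I(\hat Y_r;X,Y_r\mid X_r,Y)\ge I(\hat Y_r;Y_r\mid X,X_r,Y)$, this Wyner--Ziv constraint is \emph{stronger} than your reduced form of \eqref{oneway_const_1}. Hence every distribution contributing to the original CF region already satisfies \eqref{oneway_const_1}, and no time-sharing or coarsening of $\hat Y_r$ is needed. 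The missing ingredient in your proposal is simply this equivalent single-constraint form of the CF rate; once you use it, the inclusion you want is a one-line mutual-information inequality rather than a convexity argument.
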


\begin{proof}
To show that the rate region in Theorem \ref{thm_rate} is the same as the rate region in Theorem \ref{thm:oneway_ori}, we need to show that constraint (\ref{oneway_const_1}) is redundant. Note that an equivalent characterization of the rate region in Theorem \ref{thm:oneway_ori} is as follows \cite{cover1979capacity} \cite{gamal2006bounds} \cite{gamal2010lecture}:
\begin{align}
R\leq I(X;Y,\hat{Y}_r|X_r)
\end{align}
subject to
\begin{align}
\label{oneway_const_2}
I(X_r;Y)\geq I(\hat{Y}_r;Y_r|X_r,Y)
\end{align}
for some $p(x)p(x_r)p(\hat{y}_r|y_r,x_r)$.
Therefore, comparing (\ref{oneway_const_1}) with (\ref{oneway_const_2}), we only need to show that
\begin{align*}
I(\hat{Y}_r;Y_r|X_r,Y)\geq I(\hat{Y}_r;Y_r|X_r)-I(\hat{Y}_r;X,Y|X_r).
\end{align*}
This is true since
\begin{align*}
I(\hat{Y}_r;Y_r|X_r,Y)&=I(\hat{Y}_r;Y_r,X|X_r,Y)  \\
&=I(X;\hat{Y}_r|X_r,Y)+I(Y_r;\hat{Y}_r|X,X_r,Y)  \\
&\geq I(Y_r;\hat{Y}_r|X,X_r,Y)  \\
&=I(\hat{Y}_r;X,Y,Y_r|X_r)-I(\hat{Y}_r;X,Y|X_r)  \\
&=I(\hat{Y}_r;Y_r|X_r)-I(\hat{Y}_r;X,Y|X_r).
\end{align*}
\end{proof}

\begin{rem}
If using successive decoding, the rate achieved by CF without binning is strictly less than that with binning. Thus joint decoding is crucial for CF without binning.
\end{rem}

\begin{rem}
Joint decoding does not help improve the rate of the original CF with binning.
\end{rem}

\begin{rem}
Binning technique plays a role of allowing successive decoding instead of joint decoding, thus reduces destination decoding complexity. However, it has no impact on the achievable rate for the one-way relay channel. This effect on decoding complexity is similar to that in decode-forward, in which binning allows successive decoding \cite{cover1979capacity} while no binning requires backward decoding \cite{willems1985the}.
\end{rem}

\begin{rem}
For the one-way relay channel, CF without binning achieves the same rate region as GHF \cite{Razaghi2010universal}. However, GHF differs from CF without binning in that the relay still performs binning. In the decoding of GHF, the destination first decode the compression indices into a list, then use this list to help decode the source message. However, in CF without binning, the receiver performs joint decoding of the compression index and message.
\end{rem}

\begin{rem}
The result in Theorem \ref{thm_rate} and Theorem \ref{thm:oneway_ori}
is similar to that in \cite{Kramer2011short} by Kramer and Hou, but is derived independently; we only realize the work in \cite{Kramer2011short} after publishing our conference paper \cite{zhong2011compress}.
\end{rem}

\begin{rem}
An obvious benefit of this short message CF without binning scheme compared to noisy network coding is short encoding and decoding delay. A few other benefits such as low modulation complexity and potential MIMO gain are also recognized in \cite{Kramer2011short}.
\end{rem}

\begin{rem}
For the one-way relay channel, all the following schemes achieve the same rate: the original CF, CF without binning (QF), GHF, and noisy network coding. The difference in achievable rate only appears when applying to a multi-source and multi-destination network.
\end{rem}

\section{Two-way Relay Channel}\label{sec:two-way}
\subsection{Discrete memoryless two-way relay channel model}
The discrete memoryless two-way relay channel (DM-TWRC) is
denoted by $(\mathcal{X}_1 \times \mathcal{X}_2 \times
\mathcal{X}_r,p(y_1,y_2,y_r|x_1,x_2,x_r),\mathcal{Y}_1 \times
\mathcal{Y}_2 \times \mathcal{Y}_r)$, as in Figure \ref{p2}. Here
$x_1$ and $y_1$ are the input and output signals of user 1; $x_2$ and
$y_2$ are the input and output signals of user 2; $x_r$ and $y_r$ are
the input and output signals of the relay. We consider a full-duplex
channel in which all nodes can transmit and receive at the same time.

\begin{figure}[!t]
\centering
  \includegraphics[scale=0.7]{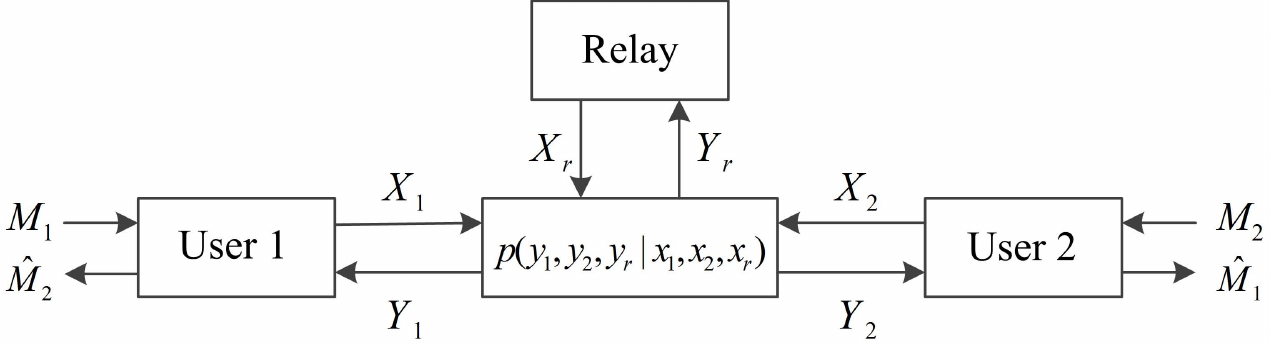}\\
  \caption{Two-way relay channel model.}\label{p2}
\end{figure}

A $(2^{nR_1},2^{nR_2},n,P_e)$ code for a DM-TWRC consists of two
message sets $\mathcal{M}_1=[1:2^{nR_1}]$ and
$\mathcal{M}_2=[1:2^{nR_2}]$, three encoding functions
$f_{1,i},f_{2,i},f_{r,i}$, $i=1, \ldots, n$ and two decoding function
$g_1,g_2$ as follows.
\begin{align}
x_{1,i}&=f_{1,i}(M_1,Y_{1,1},\ldots ,Y_{1,i-1}),~~~~i=1, \ldots, n\nonumber\\
x_{2,i}&=f_{2,i}(M_2,Y_{2,1},\ldots ,Y_{2,i-1}),~~~~i=1, \ldots, n\nonumber\\
x_{r,i}&=f_{r,i}(Y_{r,1},\ldots ,Y_{r,i-1}),~~~~i=1, \ldots, n\nonumber\\
g_1&:\mathcal{Y}^n_1 \times \mathcal{M}_1 \rightarrow \mathcal{M}_2\nonumber\\
g_2&:\mathcal{Y}^n_2 \times \mathcal{M}_2 \rightarrow \mathcal{M}_1\nonumber
\end{align}
The average error probability is
$P_e=\textrm{Pr}\{g_1(M_1,Y^n_1)\neq
M_2~\textrm{or}~g_2(M_2,Y^n_2)\neq M_1\}$. A rate pair is said to be
achievable if there exists a $(2^{nR_1},2^{nR_2},n,P_e)$ code such
that $P_e\rightarrow 0$ as $n\rightarrow \infty$. The closure of the
set of all achievable rates $(R_1,R_2)$ is the capacity region of the
two-way relay channel.\par

\subsection{Achievable rate region for compress-forward without binning}\label{sec:TWRC_CF_withoutbinning}
In this section, we extend CF without Wyner-Ziv binning but with joint decoding of both the message and compression index to the two-way relay channel. We then compare the achievable rate region with those by the original CF scheme and noisy network coding. Compared with the original CF \cite{rankov2006achievable}, CF without binning achieves a strictly larger rate region when the channel is asymmetric for the two users. Binning and successive decoding constrains the compression rate to the weaker of the channels from relay to two users. But without binning, this constraint is relaxed. Compared with noisy network coding, CF without binning achieves a smaller rate region. In CF without binning, the users need to decode the compression index precisely, which brings an extra constraint on the compression rate. However, this precise decoding is not necessary in noisy network coding.\par

Before presenting the achievable rate region of CF without binning, we outline its encoding and decoding techniques as follows. In each block, each user sends a new message using an independently generated codebook. At the end of each block, the relay finds a  description of its received signal from both users. Then it sends the codeword for the description index at the next block (instead of partitioning the description index into bins and sending the codeword for the bin index as in the original CF scheme). Each user jointly decodes the description index and message from the other user based on signals received in both the current and previous blocks. This decoding technique is different from that in the original decode-forward scheme, in which each user first decodes the bin index from the relay, and then decodes the message from the other user.\par

\begin{table}[!t]
\centering
\begin{spacing}{1.5}
\begin{tabular}{|c|c|c|c|c|}
\hline
Block&\ldots&$j$&$j+1$&\ldots  \\
\hline
$X_1$ & \ldots & $x^n_1(m_{1,j})$  & $x^n_1(m_{1,j+1})$  &\ldots \\
\hline
$X_2$ & \ldots& $x^n_2(m_{2,j})$  & $x^n_2(m_{2,j+1})$  &\ldots \\
\hline
$Y_r$ & \ldots & $\hat{y}_r(k_j|k_{j-1})$ & $\hat{y}_r(k_{j+1}|k_j)$ &\ldots \\
\hline
$X_r$ & \ldots& $x^n_r({k_{j-1}})$ & $x^n_r({k_{j}})$ &\ldots \\
\hline
$Y_1$ & \ldots & $\hat{k}_{j-1},\hat{m}_{2,j-1}$ &$\hat{k}_j,\hat{m}_{2,j}$ &\ldots \\
\hline
$Y_2$ & \ldots & $\hat{k}_{j-1},\hat{m}_{1,j-1}$ &$\hat{k}_j,\hat{m}_{1,j}$ &\ldots \\
\hline
\end{tabular}
\end{spacing}
\caption{\textnormal{Encoding and decoding of CF without binning for the two-way relay channel.}}
\label{tab:CF_two-way}
\end{table}

Specifically, we use a block coding scheme in which each user sends $b-1$ messages over $b$ blocks of $n$ symbols each (also see Table \ref{tab:CF_two-way}).\par
\subsubsection{Codebook generation}
 Fix joint distribution $p(x_1)p(x_2)p(x_r)p(\hat{y}_r|x_r,y_r)$. Randomly and independently generate a codebook for each block $j\in [1:b]$
\begin{itemize}
\item Independently generate $2^{nR_1}$ sequences $x_1^n(m_{1,j})\sim\prod ^n_{i=1}p(x_{1i}) $, where $m_{1,j} \in [1:2^{nR_1}]$.
\item Independently generate $2^{nR_2}$ sequences $x_2^n(m_{2,j})\sim\prod ^n_{i=1}p(x_{2i}) $, where $m_{2,j} \in [1:2^{nR_2}]$.
\item Independently generate $2^{nR_r}$ sequences $x_r^n(k_{j-1})\sim\prod ^n_{i=1}p(x_{ri})$, where $k_{j-1} \in [1:2^{nR_r}]$.
\item For each $k_{j-1}\in [1:2^{nR_r}]$, independently generate $2^{nR_r}$ sequences $\hat{y}^n_r(k_j|k_{j-1})\sim\prod ^n_{i=1}p(\hat{y}_{ri}|x_{ri}(k_{j-1}))$, where $k_{j}\in [1:2^{nR_r}]$.
\end{itemize}

\subsubsection{Encoding}
 User 1 and user 2 respectively transmit $x_1^n(m_{1,j})$ and $x_2^n(m_{2,j})$ in block $j$. The relay, upon receiving $y^n_r(j)$, finds an index $k_j$ such that
 \begin{align*}
(\hat{y}_r^n(k_j|k_{j-1}),y^n_r(j),x^n_r(k_{j-1}))\in T^{(n)}_{\epsilon}.
 \end{align*}
 Assume that such $k_j$ is found, the relay sends $x^n_r(k_j)$ in block $j+1$.\par

\subsubsection{Decoding}: We discuss the decoding at user 1. Assume user 1 has decoded $k_{j-1}$ correctly in block $j$. Then in block $j+1$, user 1 finds a unique pair of $(\hat{m}_{2,j},\hat{k}_j)$ such that
\begin{align}
\!\!\!\!(x_2^n(\hat{m}_{2,j}),x_r^n(\hat{k}_{j-1}),\hat{y}_r^n(\hat{k}_j|\hat{k}_{j-1}),y_1^n(j),x_1^n(m_{1,j}))&\in T^{(n)}_{\epsilon} \nonumber\\
\textrm{and}~~~~~~~~~~~~~~(x_r^n(\hat{k}_j),y_1^n(j+1),x_1^n(m_{1,j+1}))&\in T^{(n)}_{\epsilon}.
\label{decoding_rule_1}
\end{align}

\begin{thm}
\label{rate_two_way_nobin}
The following rate region is achievable for the two-way relay channel by using compress-forward without binning but with joint decoding:
\begin{align}
\label{rate_thm1}
R_1 \leq \min\{&I(X_1;Y_2,\hat{Y}_r|X_2,X_r),\\
&I(X_1,X_r;Y_2|X_2)-I(\hat{Y}_r;Y_r|X_1,X_2,X_r,Y_2)\} \nonumber \\
R_2 \leq \min\{&I(X_2;Y_1,\hat{Y}_r|X_1,X_r),\nonumber \\
&I(X_2,X_r;Y_1|X_1)-I(\hat{Y}_r;Y_r|X_1,X_2,X_r,Y_1)\} \nonumber
\end{align}
subject to
\begin{align}
\label{const_1}
I(\hat{Y}_r;Y_r|X_1,X_2,X_r,Y_1)&\leq I(X_r;Y_1|X_1)\nonumber\\
I(\hat{Y}_r;Y_r|X_1,X_2,X_r,Y_2)&\leq I(X_r;Y_2|X_2)
\end{align}
for some $p(x_1)p(x_2)p(x_r)p(y_1,y_2,y_r|x_1,x_2,x_r)p(\hat{y}_r|x_r,y_r)$.
\end{thm}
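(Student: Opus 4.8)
The plan is to establish the rate region of Theorem~\ref{rate_two_way_nobin} by a standard random-coding and joint-typicality argument using the block-Markov scheme already laid out (codebook generation, encoding, decoding) just before the statement, and then bounding the probability of each relevant error event. First I would handle the encoding (compression) error at the relay: by the covering lemma, the probability that no index $k_j$ satisfies $(\hat{y}_r^n(k_j|k_{j-1}),y_r^n(j),x_r^n(k_{j-1}))\in T_\epsilon^{(n)}$ vanishes provided $R_r > I(\hat{Y}_r;Y_r|X_r)+\delta(\epsilon)$. I would then condition on the relay having found a valid compression index in every block, so the analysis of the decoders runs on the ``good'' event.

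Next I would analyze the decoder at user~1 (user~2 is symmetric). Assume, by induction on the block index, that $\hat{k}_{j-1}=k_{j-1}$ was decoded correctly; the base case is handled by initialization (fixed $k_0$). In block $j+1$, user~1 knows its own codewords $x_1^n(m_{1,j}),x_1^n(m_{1,j+1})$ and seeks the unique $(\hat m_{2,j},\hat k_j)$ passing the two joint-typicality checks in \eqref{decoding_rule_1}. I would split the error event into: (i) the true pair $(m_{2,j},k_j)$ fails the tests — handled by the conditional typicality lemma and the law of large numbers, using that $\hat y_r^n(k_j|k_{j-1})$ was generated jointly typically with $y_r^n(j)$ hence with the channel variables; (ii) some wrong $\hat k_j \neq k_j$ with the correct $\hat m_{2,j}=m_{2,j}$; (iii) some wrong $\hat m_{2,j}\neq m_{2,j}$ with $\hat k_j = k_j$; and (iv) both wrong. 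For each of (ii)--(iv) I would apply the packing lemma to the two typicality constraints, which involve independent codewords, so the relevant ``spanning'' mutual information is the sum of the information terms from block $j$ and block $j+1$. Carrying this through, event (iv) gives the dominant constraint $R_2 + R_r < I(X_2,X_r;Y_1|X_1) + I(\hat Y_r;Y_r|X_1,X_2,X_r,Y_1) - \delta(\epsilon)$ roughly speaking, event (iii) gives $R_2 < I(X_2;Y_1,\hat Y_r|X_1,X_r) - \delta(\epsilon)$, and event (ii) gives $R_r < I(X_r;Y_1|X_1) + I(\hat Y_r;Y_r,X_2|X_1,X_r) = I(X_r;Y_1|X_1) + I(\hat Y_r;Y_r|X_1,X_2,X_r)$ type bounds.

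The final step is to eliminate $R_r$. Combining the covering bound $R_r > I(\hat Y_r;Y_r|X_r)+\delta(\epsilon)$ with the decoding bounds at both users: from event (ii) at each user I would need $I(\hat Y_r;Y_r|X_r) < I(X_r;Y_i|X_i) + I(\hat Y_r;Y_r|X_1,X_2,X_r)$ for $i=1,2$, and since $I(\hat Y_r;Y_r|X_r) - I(\hat Y_r;Y_r|X_1,X_2,X_r) = I(\hat Y_r;X_1,X_2|X_r)\le I(\hat Y_r;X_1,X_2,Y_i|X_r) - I(\hat Y_r;Y_i|X_1,X_2,X_r)$, this reduces (after identities of the type used in the proof of Theorem~\ref{thm:oneway_ori}) precisely to the constraints \eqref{const_1}: $I(\hat Y_r;Y_r|X_1,X_2,X_r,Y_i)\le I(X_r;Y_i|X_i)$. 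From event (iv) combined with the covering bound I would recover, again by Markovity $I(\hat Y_r;Y_r|X_1,X_2,X_r,Y_i) = I(\hat Y_r;Y_r|X_r) - I(\hat Y_r;X_1,X_2,Y_i|X_r)$, the second term inside each $\min$ in \eqref{rate_thm1}. Finally I would let $\epsilon\to 0$ and $b\to\infty$ (so the $(b-1)/b$ rate penalty vanishes) and invoke a standard random-coding argument to assert the existence of a deterministic code.

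The main obstacle I expect is the bookkeeping in error event (iv) — the case where both the compression index and the message are decoded wrongly — because the two joint-typicality constraints in \eqref{decoding_rule_1} share the variable $\hat k_j$ (appearing in $\hat y_r^n(\hat k_j|\hat k_{j-1})$ in the block-$j$ test and in $x_r^n(\hat k_j)$ in the block-$(j+1)$ test), so the relevant pair of codewords $(x_r^n(\hat k_j),\hat y_r^n(\hat k_j|\hat k_{j-1}))$ is generated jointly and one must carefully compute the correct exponent for the union bound over the $2^{n(R_2+R_r)}$ incorrect pairs; getting the conditional distributions right here, together with verifying that the $\hat y_r$ codeword for a \emph{wrong} $k_j$ is still conditionally independent of the block-$j$ channel outputs given $x_r^n(k_{j-1})$, is where the delicate part of the argument lies. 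A secondary subtlety is making the induction on blocks rigorous, i.e.\ propagating the ``$\hat k_{j-1}$ correct'' assumption while keeping the error probabilities summable over the $b$ blocks.
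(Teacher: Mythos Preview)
Your overall plan is correct and matches the paper's proof essentially line for line: covering lemma for the relay, then the four error events at each user, then Fourier--Motzkin elimination of $R_r$. However, the specific mutual-information exponents you wrote down for events (ii) and (iv) are wrong, and if you carried them through as written you would not land on \eqref{const_1} or on the second term of the $\min$ in \eqref{rate_thm1}.

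The slip is that $Y_r$ appears in your packing-lemma exponents, but $Y_r$ is never part of a typicality test at user~1; only $Y_1$ is. For event (ii) (wrong $k_j$, correct $m_{2,j}$), the wrong $\hat y_r^n(k_j|k_{j-1})$ is drawn $\sim p(\hat y_r|x_r)$ independently of $(x_1^n,x_2^n,y_1^n(j))$, so the block-$j$ exponent is $I(\hat Y_r;X_1,X_2,Y_1|X_r)$, giving
\[
R_r < I(X_r;Y_1|X_1)+I(\hat Y_r;X_1,X_2,Y_1|X_r),
\]
not your $I(\hat Y_r;Y_r,X_2|X_1,X_r)$ or $I(\hat Y_r;Y_r|X_1,X_2,X_r)$. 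Likewise event (iv) gives
\[
R_2+R_r < I(X_2,X_r;Y_1|X_1)+I(\hat Y_r;X_1,X_2,Y_1|X_r),
\]
not $I(\hat Y_r;Y_r|X_1,X_2,X_r,Y_1)$ on the right. With the correct exponents, the elimination of $R_r$ via $R_r>I(\hat Y_r;Y_r|X_r)$ and the single Markov identity you already quoted,
\[
I(\hat Y_r;Y_r|X_r)-I(\hat Y_r;X_1,X_2,Y_1|X_r)=I(\hat Y_r;Y_r|X_1,X_2,X_r,Y_1),
\]
yields \eqref{const_1} and the second term in the $\min$ immediately, without any further inequalities. Your worry about $x_r^n(\hat k_j)$ and $\hat y_r^n(\hat k_j|\hat k_{j-1})$ being ``generated jointly'' is unfounded: $\hat y_r^n(\cdot|k_{j-1})$ is superposed on $x_r^n(k_{j-1})$, not on $x_r^n(k_j)$, and the two tests live in different blocks with independent channel noise, so the product of the two exponents is exactly right.
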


\begin{proof}
See Appendix \ref{appendix:rate_two_way_nobin}.
\end{proof}

\subsection{Comparison with the original compress-forward scheme}
In this section, we first present the rate region achieved by the original CF scheme for the two way relay channel as in \cite{rankov2006achievable}. We then show that CF without
 binning but with joint decoding can achieve a larger rate region.\par

We outline the encoding and decoding techniques of the original CF scheme as follows. In each block, each user sends a new message using an independently generated codebook. At the end of each block, the relay finds a description of its received signal from both users. Then it partitions the description index into equal-size bins and sends the codeword for the bin index. Each user applies 3-step successive decoding, in which it first decodes the bin index from the relay, then decodes the compression index within that bin, and at last
decodes the message from the other user.

\begin{thm}
\label{thm_rate_ori}
\em{[Rankov and Wittneben].}
\emph{
The following rate region is achievable for two-way relay channel with the original compress-forward scheme:
\begin{align}
\label{rate_thm2}
R_1& \leq I(X_1;Y_2,\hat{Y}_r|X_2,X_r) \nonumber \\
R_2& \leq I(X_2;Y_1,\hat{Y}_r|X_1,X_r)
\end{align}
subject to
\begin{align}
\label{const_2}
&\max (I(\hat{Y}_r;Y_r|X_1,X_r,Y_1),I(\hat{Y}_r;Y_r|X_2,X_r,Y_2))\nonumber\\
\leq &\min(I(X_r;Y_1|X_1),I(X_r;Y_2|X_2))
\end{align}
for some $p(x_1)p(x_2)p(x_r)p(y_1,y_2,y_r|x_1,x_2,x_r)p(\hat{y}_r|x_r,y_r)$.
}
\end{thm}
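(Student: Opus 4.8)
The plan is a block-Markov compress-forward achievability argument with Wyner-Ziv binning at the relay and three-step successive decoding at each user, followed by elimination of two auxiliary rates --- the bin rate $R_0$ and the compression rate $\hat{R}_r$ --- via Fourier-Motzkin. Fix $p(x_1)p(x_2)p(x_r)p(\hat y_r|x_r,y_r)$, and for each block $j\in[1:b]$ generate independent codebooks: $2^{nR_1}$ codewords $x_1^n(m_1)$, $2^{nR_2}$ codewords $x_2^n(m_2)$, $2^{nR_0}$ relay codewords $x_r^n(l)$, and for each $l$ a family $\hat y_r^n(k|l)$, $k\in[1:2^{n\hat{R}_r}]$, drawn i.i.d.\ from $p(\hat y_r|x_r)$ with $x_r=x_r^n(l)$; then partition $[1:2^{n\hat{R}_r}]$ into $2^{nR_0}$ equal bins $\{\mathcal{B}(l')\}$. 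In block $j$, user $i$ sends $x_i^n(m_{i,j})$; the relay, having transmitted $x_r^n(l_{j-1})$ and received $y_r^n(j)$, invokes the covering lemma to find $k_j$ with $(\hat y_r^n(k_j|l_{j-1}),y_r^n(j),x_r^n(l_{j-1}))\in T^{(n)}_{\epsilon}$, which succeeds with high probability whenever $\hat{R}_r>I(\hat Y_r;Y_r|X_r)$, and in block $j+1$ transmits $x_r^n(l_j)$, where $l_j$ is the bin of $k_j$ (with $l_0=1$).

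Next I would analyze the decoder at user~1; user~2 is symmetric. Assume inductively that $l_{j-1}$ has been decoded correctly. Having received block $j+1$, user~1 (which knows all of its own codewords) proceeds in three steps. \textbf{Step 1:} find $\hat l_j$ with $(x_r^n(\hat l_j),x_1^n(m_{1,j+1}),y_1^n(j+1))\in T^{(n)}_{\epsilon}$; as $X_r$ and $X_1$ are independent, the packing lemma yields reliability when $R_0<I(X_r;Y_1|X_1)$. \textbf{Step 2:} find $\hat k_j\in\mathcal{B}(\hat l_j)$ with $(\hat y_r^n(\hat k_j|\hat l_{j-1}),x_r^n(\hat l_{j-1}),x_1^n(m_{1,j}),y_1^n(j))\in T^{(n)}_{\epsilon}$; since the other user's codeword $x_2^n(m_{2,j})$ is not available at this stage, a wrong index inside the bin passes the test with probability about $2^{-nI(\hat Y_r;X_1,Y_1|X_r)}$, and the bin contains about $2^{n(\hat{R}_r-R_0)}$ indices, so reliability requires $\hat{R}_r-R_0<I(\hat Y_r;X_1,Y_1|X_r)$. \textbf{Step 3:} find $\hat m_{2,j}$ with $(x_2^n(\hat m_{2,j}),\hat y_r^n(\hat k_j|\hat l_{j-1}),x_r^n(\hat l_{j-1}),x_1^n(m_{1,j}),y_1^n(j))\in T^{(n)}_{\epsilon}$, reliable when $R_2<I(X_2;Y_1,\hat Y_r|X_1,X_r)$. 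A union bound over the three steps and over the blocks, together with the vanishing rate loss factor $(b-1)/b$, shows these constraints are sufficient.

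It remains to eliminate $R_0$ and $\hat{R}_r$. Combining the relay's covering constraint $\hat{R}_r>I(\hat Y_r;Y_r|X_r)$ with Steps 1 and 2 for $i=1,2$, a feasible $\hat{R}_r$ exists iff $I(\hat Y_r;Y_r|X_r)-I(\hat Y_r;X_i,Y_i|X_r)\le R_0$ for $i=1,2$, and then a feasible $R_0$ exists iff $\max_i\bigl(I(\hat Y_r;Y_r|X_r)-I(\hat Y_r;X_i,Y_i|X_r)\bigr)\le\min_i I(X_r;Y_i|X_i)$. The last ingredient is the identity $I(\hat Y_r;Y_r|X_r)-I(\hat Y_r;X_i,Y_i|X_r)=I(\hat Y_r;Y_r|X_i,X_r,Y_i)$, which follows from the Markov chain $\hat Y_r-(X_r,Y_r)-(X_1,X_2,Y_1,Y_2)$: it gives $I(\hat Y_r;Y_r|X_r)=I(\hat Y_r;Y_r,X_i,Y_i|X_r)$, and the chain rule then splits this as $I(\hat Y_r;X_i,Y_i|X_r)+I(\hat Y_r;Y_r|X_i,X_r,Y_i)$. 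Substituting recovers precisely the compression-rate condition (\ref{const_2}), while Step 3 of the two users gives the rate bounds (\ref{rate_thm2}), completing the proof.

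The only genuinely delicate point --- everything else being routine random-coding bookkeeping and block-boundary induction --- is Step 2: one must recognize that, because the compression index is decoded strictly before the other user's message, the conditioning there is on $X_1$ alone, producing $I(\hat Y_r;X_1,Y_1|X_r)$ rather than a term also conditioned on $X_2$, and then carry this quantity through the Fourier-Motzkin step using the Markov identity above. This is exactly where successive decoding is weaker than the joint-decoding scheme of Theorem~\ref{rate_two_way_nobin}, whose compression-rate constraint (\ref{const_1}) instead involves $I(\hat Y_r;Y_r|X_1,X_2,X_r,Y_1)$.
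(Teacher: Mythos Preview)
Your proposal is correct and follows essentially the same approach as the paper's proof: block-Markov encoding with Wyner--Ziv binning, three-step successive decoding (bin index, within-bin compression index, message), and elimination of the auxiliary rates to obtain the final constraint. Your variables $(R_0,\hat R_r)$ correspond to the paper's $(R_r,R_r+R'_r)$, and you make explicit both the Fourier--Motzkin step and the Markov identity $I(\hat Y_r;Y_r|X_r)-I(\hat Y_r;X_i,Y_i|X_r)=I(\hat Y_r;Y_r|X_i,X_r,Y_i)$, which the paper leaves implicit in its ``combining'' sentence; you also correctly flag the Step~2 conditioning issue, which the paper highlights in a remark as an error in the original Rankov--Wittneben analysis.
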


We present a short proof of this theorem in Appendix \ref{appendix:thm_rate_ori} to show the difference from CF without binning. The proof follows the same lines as in \cite{rankov2006achievable}, but we also correct an error in the analysis of \cite{rankov2006achievable} as pointed out in Remark \ref{rem:error} in Appendix \ref{appendix:thm_rate_ori}.

\begin{proof}
See Appendix \ref{appendix:thm_rate_ori}.
\end{proof}

\begin{thm}
\label{cor:equ_ori}
In the two-way relay channel, the rate region achieved by compress-forward without binning in Theorem \ref{rate_two_way_nobin} is larger than the rate region achieved by the original compress-forward scheme in Theorem
\ref{thm_rate_ori} when the channel is asymmetric for the two users. The two regions may be equal only if the channel is symmetric, that is the following conditions holds:
\begin{align}
I(X_r;Y_1|X_1)&=I(X_r;Y_2|X_2)\nonumber \\
I(\hat{Y}_r;Y_r|X_1,X_r,Y_1)&=I(\hat{Y}_r;Y_r|X_2,X_r,Y_2).
\label{eq_con}
\end{align}
Furthermore, \eqref{eq_con} is only necessary but may not be sufficient.\footnote{In \cite{zhong2011compress}, we stated condition \eqref{eq_con} as "if and only if", but it should be corrected as "only if". The proof remains the same.}
\end{thm}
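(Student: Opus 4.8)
The plan is to establish the inclusion of the two regions first, then locate exactly where they can differ, and finally force \eqref{eq_con} through a compression‑refinement argument. For the inclusion of the region of Theorem~\ref{thm_rate_ori} into that of Theorem~\ref{rate_two_way_nobin}, it suffices to check that every admissible distribution $p(x_1)p(x_2)p(x_r)p(\hat{Y}_r|x_r,y_r)$ satisfying \eqref{const_2} also satisfies \eqref{const_1}, and that for such a distribution the rate rectangle \eqref{rate_thm1} collapses onto \eqref{rate_thm2}. The first claim uses only the Markov chain $\hat{Y}_r-(X_r,Y_r)-(X_1,X_2,Y_1,Y_2)$ built into the chosen factorization: it gives $H(\hat{Y}_r|X_1,X_2,X_r,Y_i,Y_r)=H(\hat{Y}_r|X_r,Y_r)$, so enlarging the conditioning of the remaining entropy term can only decrease it, hence $I(\hat{Y}_r;Y_r|X_1,X_2,X_r,Y_i)\le I(\hat{Y}_r;Y_r|X_i,X_r,Y_i)$ for $i=1,2$; bounding the right-hand side by $\min\{I(X_r;Y_1|X_1),I(X_r;Y_2|X_2)\}$ via \eqref{const_2} then yields \eqref{const_1}.

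For the collapse of the rectangle, the computation I would carry out is the chain-rule identity $[\,I(X_1,X_r;Y_2|X_2)-I(\hat{Y}_r;Y_r|X_1,X_2,X_r,Y_2)\,]-I(X_1;Y_2,\hat{Y}_r|X_2,X_r)=I(X_r;Y_2|X_2)-I(\hat{Y}_r;Y_r|X_2,X_r,Y_2)$, which reduces via $I(\hat{Y}_r;Y_r|X_1,X_2,X_r,Y_2)+I(X_1;\hat{Y}_r|X_2,X_r,Y_2)=I(\hat{Y}_r;Y_r|X_2,X_r,Y_2)$ (a consequence of $I(X_1;\hat{Y}_r|X_2,X_r,Y_2,Y_r)=0$), together with its symmetric counterpart for $R_2$. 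Under \eqref{const_2} the right-hand sides are nonnegative, so the second terms inside the two minima in \eqref{rate_thm1} are inactive and \eqref{rate_thm1} becomes \eqref{rate_thm2}. This proves the inclusion, and simultaneously shows that the two regions can differ only through distributions that satisfy \eqref{const_1} but violate \eqref{const_2}, i.e.\ those for which some $I(\hat{Y}_r;Y_r|X_i,X_r,Y_i)$ exceeds $\min\{I(X_r;Y_1|X_1),I(X_r;Y_2|X_2)\}$ while each $I(\hat{Y}_r;Y_r|X_1,X_2,X_r,Y_i)$ still fits under its own link rate.

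To obtain strictness under asymmetry---equivalently, that equality of the regions implies \eqref{eq_con}---I would take a boundary point of the region of Theorem~\ref{thm_rate_ori} attained by a distribution $p^{\star}$ and argue that we may assume \eqref{const_2} is tight there, since sharpening $\hat{Y}_r$ never decreases $I(X_1;Y_2,\hat{Y}_r|X_2,X_r)$ or $I(X_2;Y_1,\hat{Y}_r|X_1,X_r)$ (the degenerate case in which sharpening changes neither is treated separately, and there the two regions already agree near the point). If \eqref{eq_con} fails at $p^{\star}$, say $I(X_r;Y_1|X_1)<I(X_r;Y_2|X_2)$, then tightness makes the weaker side bind, and a slightly finer description $\hat{Y}_r'$ can be chosen so that $I(\hat{Y}_r';Y_r|X_2,X_r,Y_2)$ moves into the gap $(I(X_r;Y_1|X_1),\,I(X_r;Y_2|X_2)]$ while $I(\hat{Y}_r';Y_r|X_1,X_2,X_r,Y_1)\le I(X_r;Y_1|X_1)$ is preserved; this $\hat{Y}_r'$ is admissible under \eqref{const_1} but not under \eqref{const_2}, and the refinement strictly increases at least one rate bound, yielding a rate pair outside the region of Theorem~\ref{thm_rate_ori}. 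Hence equality of the two regions forces \eqref{eq_con}.

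I expect the refinement step of the previous paragraph to be the main obstacle: one must construct an explicit one-parameter family of descriptions interpolating between $\hat{Y}_r$ and $Y_r$, verify monotonicity along it of the mutual informations in \eqref{const_1}, \eqref{const_2} and in the rate bounds, keep enough slack in \eqref{const_1}, and dispose of the saturated case cleanly. Finally, to see that \eqref{eq_con} is necessary but not sufficient, note that even when \eqref{eq_con} holds the constraint \eqref{const_1} stays strictly looser than \eqref{const_2} as soon as $I(X_2;\hat{Y}_r|X_1,X_r,Y_1)>0$ or $I(X_1;\hat{Y}_r|X_2,X_r,Y_2)>0$, so CF without binning retains admissible compressions rejected by the original scheme; running the same refinement idea on a symmetric Gaussian two-way relay channel---where these conditional mutual informations are generically positive---produces a rate point outside the region of Theorem~\ref{thm_rate_ori}, confirming insufficiency and matching the correction noted in the footnote.
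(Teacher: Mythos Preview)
Your inclusion argument is correct and matches the paper's: both of you show that \eqref{const_2} implies \eqref{const_1} via the Markov chain $\hat{Y}_r-(X_r,Y_r)-(X_1,X_2,Y_1,Y_2)$, and both of you verify the chain-rule identity that, under \eqref{const_2}, collapses the second term in each minimum of \eqref{rate_thm1} onto the first, reducing \eqref{rate_thm1} to \eqref{rate_thm2}. Your identity
\[
\bigl[I(X_1,X_r;Y_2|X_2)-I(\hat{Y}_r;Y_r|X_1,X_2,X_r,Y_2)\bigr]-I(X_1;Y_2,\hat{Y}_r|X_2,X_r)=I(X_r;Y_2|X_2)-I(\hat{Y}_r;Y_r|X_2,X_r,Y_2)
\]
is exactly what the paper uses, just rearranged.

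Where you diverge is in deriving the necessity of \eqref{eq_con}. The paper does not construct a refinement family; it simply reads \eqref{eq_con} off the equality conditions in the per-distribution comparison. Concretely, the paper notes that the only inequality in the chain is the step replacing $I(X_r;Y_i|X_i)$ by the maximum of $I(\hat{Y}_r;Y_r|X_1,X_r,Y_1)$ and $I(\hat{Y}_r;Y_r|X_2,X_r,Y_2)$ (via \eqref{const_2}); this is an equality for $i=1$ only when $I(X_r;Y_1|X_1)$ equals both the $\min$ and the $\max$, and likewise for $i=2$, which forces both lines of \eqref{eq_con} at once. The paper then asserts, rather informally, that because \eqref{const_1} is strictly looser otherwise, the region of Theorem~\ref{rate_two_way_nobin} is strictly larger. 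Your refinement route is a genuine attempt to make this last step rigorous at the level of regions rather than per distribution, which is commendable; but the obstacle you flag---building the one-parameter family and controlling all the monotonicities simultaneously while keeping slack in \eqref{const_1}---is real, and the paper simply does not tackle it. So your approach is more ambitious than the paper's, not less, and the acknowledged gap is precisely where the paper also leaves things.

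On insufficiency, the paper offers no proof at all---it only records the correction in the footnote---so your Gaussian sketch goes beyond what the paper does. If you want to complete that part, the cleanest route is the explicit Gaussian computation in Section~\ref{sec:GTWRC}, where the relevant gap $I(X_j;\hat{Y}_r|X_i,X_r,Y_i)>0$ is visible from the closed-form $\sigma_{c}^2$ versus $\sigma_{e}^2$ comparison.
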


\begin{proof}
First, we show that the constraint on the compression rate of Theorem \ref{rate_two_way_nobin} is looser than that of Theorem \ref{thm_rate_ori}. This is true since from (\ref{const_2}), we have
\begin{align}
I(X_r;Y_1|X_1)& \geq I(\hat{Y}_r;Y_r|X_1,X_r,Y_1)\nonumber\\
&=I(\hat{Y}_r;X_2,Y_r|X_1,X_r,Y_1)\nonumber\\
&\geq I(\hat{Y}_r;Y_r|X_1,X_2,X_r,Y_1)
\label{rate_13}
\end{align}
where (\ref{rate_13}) is the right-hand-side of the first term in (\ref{const_1}). Similar for the other term.\par
Next we show that (\ref{rate_thm2}) and (\ref{const_2}) imply (\ref{rate_thm1}). From (\ref{rate_thm2}), we have
\begin{align}
R_2&\leq I(X_2;Y_1,\hat{Y}_r|X_1,X_r)\nonumber \\
&=I(X_2;Y_1|X_1,X_r)+I(\hat{Y}_r;X_2|Y_1,X_1,X_r)\nonumber \\
&=I(X_2,X_r;Y_1|X_1)-I(X_r;Y_1|X_1)+I(\hat{Y}_r;X_2|Y_1,X_1,X_r)\nonumber \\
&\overset{(a)}{\leq} I(X_2,X_r;Y_1|X_1)-I(\hat{Y}_r;Y_r|X_1,X_r,Y_1)
+I(\hat{Y}_r;X_2|Y_1,X_1,X_r)\nonumber \\
&=I(X_2,X_r;Y_1|X_1)-I(\hat{Y}_r;Y_r|X_1,X_2,X_r,Y_1)\nonumber
\end{align}
where $(a)$ follows from the constraint of (\ref{const_2}) in Theorem \ref{thm_rate_ori}. The equality holds when
\begin{align}
I(X_r;Y_1|X_1)&=\min\{I(X_r;Y_1|X_1),I(X_r;Y_2|X_2)\}\nonumber \\
&=I(\hat{Y}_r;Y_r|X_1,X_r,Y_1)\nonumber\\
&=\max (I(\hat{Y}_r;Y_r|X_1,X_r,Y_1),I(\hat{Y}_r;Y_r|X_2,X_r,Y_2)).\nonumber
\end{align}
Similar for $R_1$, the equality holds when
\begin{align}
I(X_r;Y_2|X_2)&=\min\{I(X_r;Y_1|X_1),I(X_r;Y_2|X_2)\}\nonumber \\
&=I(\hat{Y}_r;Y_r|X_2,X_r,Y_2)\nonumber\\
&=\max (I(\hat{Y}_r;Y_r|X_1,X_r,Y_1),I(\hat{Y}_r;Y_r|X_2,X_r,Y_2)).\nonumber
\end{align}
The above analysis shows that at the boundary of the compression rate constraint \eqref{const_2}, the rate region of CF without binning \eqref{rate_thm1} is equivalent to that of the original CF \eqref{rate_thm2}. However, since constraint \eqref{const_1} is loser than \eqref{const_2}, the rate region in Theorem \ref{rate_two_way_nobin} is larger than that in Theorem \ref{thm_rate_ori}.
Only if condition (\ref{eq_con}) holds,
the original CF scheme may achieve the same rate region as CF without binning; otherwise, its rate region is strictly smaller.
\end{proof}
\begin{rem}
For the two-way relay channel, binning and successive decoding constrains the compression rate to the weaker of the channels from relay to two users. But without binning, this constraint is relaxed.
Thus CF without binning achieves a larger rate region than the original CF scheme when then channel is asymmetric for the two users.
\end{rem}

\subsection{Comparison with Noisy Network Coding}
In this section, we compare the rate region achieved by CF without binning with that by noisy network coding \cite{sung2011noisy} for the two way relay channel.
The main differences between these two schemes are as follows. In CF without binning, different messages are sent over different blocks, but in noisy network coding, the same message is sent in multiple blocks using independent codebooks. Furthermore, in noisy network coding, each user performs simultaneous joint decoding of the message based on signals received from all blocks without uniquely decoding the compression indices (i.e. relaxed joint decoding). But in CF without binning, each user jointly decodes both the message and compression index precisely based on signals received from the current and previous blocks.
\begin{thm}
\label{thm_noisy}
\em{[Lim, Kim, El Gamal and Chung].}
\emph{
The following rate region is achievable for the two-way relay channel with noisy network coding:
\begin{align}
\label{rate_noisy}
R_1 \leq \min\{&I(X_1;Y_2,\hat{Y}_r|X_2,X_r), \\
&I(X_1,X_r;Y_2|X_2)-I(\hat{Y}_r;Y_r|X_1,X_2,X_r,Y_2)\} \nonumber \\
R_2 \leq \min\{&I(X_2;Y_1,\hat{Y}_r|X_1,X_r),\nonumber \\
&I(X_2,X_r;Y_1|X_1)-I(\hat{Y}_r;Y_r|X_1,X_2,X_r,Y_1)\}\nonumber
\end{align}
for some $p(x_1)p(x_2)p(x_r)p(y_1,y_2,y_r|x_1,x_2,x_r)p(\hat{y}_r|x_r,y_r)$.
}
\end{thm}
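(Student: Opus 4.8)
The plan is to obtain Theorem~\ref{thm_noisy} by instantiating the noisy network coding scheme of \cite{sung2011noisy} on the two-way relay channel, i.e.\ by combining \emph{message repetition}, \emph{quantize-and-forward without binning} at the relay, and \emph{relaxed simultaneous decoding} at the two users. Concretely, I would use $b$ transmission blocks of $n$ symbols: user~$1$ sends the single message $m_1$ in every block (similarly $m_2$), each time with an independently generated codebook; the relay only compresses $y_r^n$ in each block and forwards the compression index directly (no Wyner--Ziv binning); and user~$2$ waits until the end of block~$b$ and then jointly decodes $m_1$ from the signals of all $b$ blocks \emph{without} insisting on recovering the compression indices uniquely (symmetrically for user~$1$). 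The achievable region is then obtained by first letting $n\to\infty$ and then $b\to\infty$, so that the $1/b$ rate penalty of the repetition vanishes; this is precisely why no constraint of the form \eqref{const_1} survives.

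First I would set up the codebooks: fix $p(x_1)p(x_2)p(x_r)p(\hat y_r|x_r,y_r)$ and a compression rate $R_r$; for each block $j\in[1:b]$ generate independently $2^{nR_1}$ sequences $x_1^n(m_1,j)$, $2^{nR_2}$ sequences $x_2^n(m_2,j)$, $2^{nR_r}$ sequences $x_r^n(l_{j-1},j)$, and, for each $l_{j-1}$, $2^{nR_r}$ sequences $\hat y_r^n(l_j\,|\,l_{j-1},j)$ drawn i.i.d.\ from $p(\hat y_r|x_r)$. Encoding: users transmit $x_1^n(m_1,j),x_2^n(m_2,j)$ in block~$j$; after block~$j$ the relay, holding index $l_{j-1}$, picks $l_j$ with $(\hat y_r^n(l_j|l_{j-1},j),y_r^n(j),x_r^n(l_{j-1},j))\in T_\epsilon^{(n)}$ and transmits $x_r^n(l_j,j+1)$ in the next block. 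By the covering lemma such an $l_j$ exists with high probability as long as $R_r>I(\hat Y_r;Y_r|X_r)+\delta(\epsilon)$; this inequality is the only requirement the encoding imposes and is the quantity I will eliminate at the end.

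Next comes the decoding and error analysis, which is the substance of the proof. At user~$2$ (the argument for user~$1$ being symmetric), knowing $m_2$ and its own codewords, I look for the unique $\hat m_1$ such that there exists an index tuple $\hat l^{\,b}=(\hat l_1,\dots,\hat l_b)$ with $(x_1^n(\hat m_1,j),x_2^n(m_2,j),x_r^n(\hat l_{j-1},j),\hat y_r^n(\hat l_j|\hat l_{j-1},j),y_2^n(j))\in T_\epsilon^{(n)}$ for every $j\in[1:b]$. Taking $m_1=1$ without loss of generality, the event that the genuine $(1,l^{\,b})$ fails the test in some block is handled by the Markov lemma and the law of large numbers once the relay's covering step has succeeded. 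For the event that some $\hat m_1\neq1$ passes with some $\hat l^{\,b}$, I would union-bound over $\hat m_1$ and over $\hat l^{\,b}$, partitioning the tuples by the set $\mathcal S=\{\,j:\hat l_j\neq l_j\,\}\subseteq[1:b]$ of blocks carrying a wrong compression index; there are at most $2^{n|\mathcal S|R_r}$ such tuples for a given $\mathcal S$. Using independence of the codebooks across blocks, the probability of a fixed error pattern factorizes, with each block $j$ contributing an exponent that depends only on whether $j$ and $j-1$ belong to $\mathcal S$: a block untouched by $\mathcal S$ contributes $I(X_1;\hat Y_r,Y_2|X_2,X_r)$, while a block touched by $\mathcal S$ contributes a strictly larger ``cut'' exponent that additionally charges resampling of $\hat Y_r^n$ and, when the predecessor index is also wrong, of $X_r^n$. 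Summing, substituting $R_r=I(\hat Y_r;Y_r|X_r)+\delta(\epsilon)$, normalizing by $nb$, and sending $\epsilon\to0$ and $b\to\infty$, the family of constraints indexed by $\mathcal S$ collapses: $\mathcal S=\emptyset$ gives $R_1\le I(X_1;\hat Y_r,Y_2|X_2,X_r)$, $\mathcal S=[1:b]$ gives $R_1\le I(X_1,X_r;Y_2|X_2)-I(\hat Y_r;Y_r|X_1,X_2,X_r,Y_2)$ (after using the Markov chain $\hat Y_r-(X_r,Y_r)-(X_1,X_2,Y_2)$ to rewrite the compression cost), and any intermediate $\mathcal S$ yields a convex combination of these two and hence adds nothing. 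The symmetric analysis at user~$1$ produces the bounds on $R_2$, giving exactly \eqref{rate_noisy}.

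The hard part will be the bookkeeping in this last step: because $\hat y_r^n(\hat l_j|\hat l_{j-1},j)$ and $x_r^n(\hat l_{j-1},j)$ tie block~$j$ to block~$j-1$, the per-block exponents depend jointly on the membership of $j$ and $j-1$ in $\mathcal S$, and one must verify carefully (as in \cite{sung2011noisy}) that, after eliminating $R_r$ and letting $b\to\infty$, no mixed pattern produces a constraint tighter than the two displayed cut bounds -- in particular, that nothing playing the role of the compression-rate constraint \eqref{const_1} remains. A secondary technical point is the Markov-lemma step ensuring that the relay's genuinely chosen $\hat Y_r^n$ is jointly typical with $(X_1^n,X_2^n,X_r^n,Y_2^n)$ under the intended joint pmf, despite $\hat Y_r^n$ being selected as a function of $Y_r^n$.
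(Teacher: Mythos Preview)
The paper does not prove Theorem~\ref{thm_noisy}: the result is attributed to Lim, Kim, El~Gamal and Chung \cite{sung2011noisy} and is stated without proof, so there is no in-paper argument to compare against. Your sketch is a correct specialization of the noisy network coding proof of \cite{sung2011noisy} to the two-way relay channel, and the architecture you describe---message repetition over $b$ blocks with independent codebooks, compress-forward without binning at the relay, relaxed simultaneous decoding organized by the set $\mathcal S\subseteq[1:b]$ of blocks carrying a wrong compression index, followed by $b\to\infty$---is exactly the route of the original reference.

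One setup detail should be fixed: for the effective rate to equal $R_1$ over the $nb$ channel uses, the message set must have size $2^{nbR_1}$, so each block's codebook for user~$1$ contains $2^{nbR_1}$ codewords indexed by $m_1$, not the $2^{nR_1}$ you wrote; with only $2^{nR_1}$ codewords the rate is $R_1/b$, which vanishes as $b\to\infty$. Relatedly, the phrase ``$1/b$ rate penalty of the repetition vanishes'' is misleading---once the message set is sized correctly there is no rate loss from repetition. The role of $b\to\infty$ is rather to eliminate block-boundary terms in the error exponents (precisely the point the paper develops in Section~\ref{sec:relay_networks} when analyzing finite repetition and showing how constraints such as \eqref{repeat_two_more_1}--\eqref{repeat_two_more_2} relax as the number of repetitions grows), not to recover a lost rate factor. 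With this correction, your union bound over $(\hat m_1,\mathcal S)$ and the claim that every intermediate $\mathcal S$ produces a convex combination of the $\mathcal S=\emptyset$ and $\mathcal S=[1:b]$ bounds goes through as in \cite{sung2011noisy}.
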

Comparing Theorem \ref{rate_two_way_nobin} with Theorem \ref{thm_noisy}, we find that the rate constraints (\ref{rate_thm1}) for $R_1$ and $R_2$ in CF without binning are the same as those in noisy network coding (\ref{rate_noisy}). However, CF without binning has an extra constraint on the compression rate (\ref{const_1}). Therefore, in general, CF without binning achieves a smaller rate region than noisy network coding. Sections \ref{sec:Gaussian_TWRC} and \ref{sec:Fading_TWRC} show that for the Gaussian and fading two-way relay channels, these two schemes achieve the same rate region or sum rate for certain channel configurations.\par
\begin{rem}
In noisy network coding, the combination of message repetition and relaxed joint decoding is necessary in addition to compress-forward without binning to achieve a better rate region than that in Theorem \ref{rate_two_way_nobin} for the TWRC.
\end{rem}

\begin{figure}[!t]
\centering
  \includegraphics[scale=0.6]{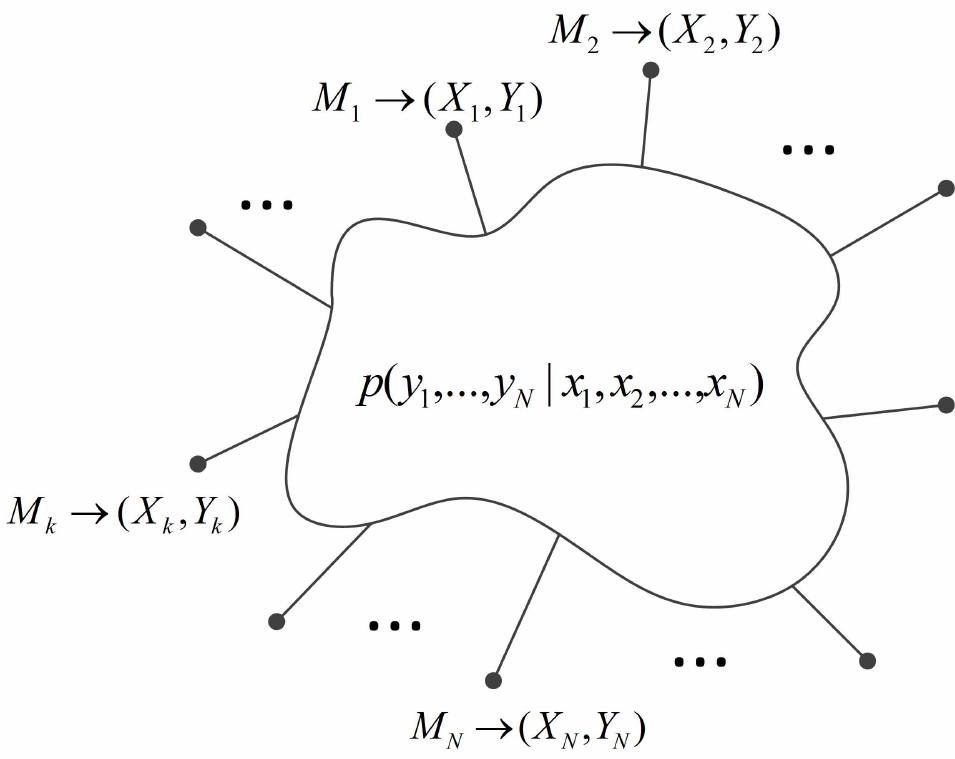}\\
  \caption{Relay network model.}\label{relay_network_model}
\end{figure}
\section{Implication for Relay Networks}\label{sec:relay_networks}
From the discussion for the one-way and two-way relay channels, we can acquire some implications for the general relay networks. A $N$-node discrete memoryless relay network $(\prod ^N_{k=1}\mathcal{X}_k, p(y_N|x_N),\times^N_{k=1}\mathcal{Y}_k)$ is depicted in Figure \ref{relay_network_model}. It consists of $N$ sender-receiver alphabet pairs $(\mathcal{X}_k,\mathcal{Y}_k)$, $k\in [1:N]$ and a collection of conditional pmfs $p(y_1,\cdots y_N|x_1,\cdots x_N)$. Each node $k\in[1:N]$ wishes to send a message $M_k$ to a set of destination nodes,
while also acting as a relay for messages from other nodes. \par
Noisy network coding is proposed for the general relay network in \cite{sung2011noisy}. Three new ideas are used in noisy network coding. One is no Wyner-Ziv binning in relay operation. Another is simultaneous joint decoding of the message over all blocks without uniquely decoding the compression indices. Last is message repetition, in which the same message is sent in multiple blocks using independent codebooks. Next, we discuss the effect of each of these ideas separately.

\subsection{Implication of no Wyner-Ziv binning}
Generalizing from the two-way relay channel, we can conclude that CF without binning achieves a larger rate region than the original CF scheme for networks with multiple destinations. With binning and successive decoding, the compression rate is constrained by the weakest link from a relay to a destination, as in (\ref{rate_twoway_ori_3}). But with joint decoding of the message and compression indices, this constraint is more relaxed since each destination can also use the signals received from other links, including the direct links, to decode the compression indices and provides relays more freedom to choose the compression rates. This explains why for the two-way relay channel, the constraint on compression rate in CF without binning (\ref{const_1}) is looser than that in the original CF scheme (\ref{const_2}). Therefore, with joint decoding, Wyner-Ziv binning is not necessary. No binning also simplifies relay operation.

\begin{rem}
Joint decoding of both the message and compression index is crucial for CF without binning. Without joint decoding, it achieves strictly smaller rate than the original CF with binning for any network.
\end{rem}

\subsection{Implication of joint decoding without explicitly decoding compression indices}
A difference between CF without binning and noisy network coding comes from the decoding of the compression indices. For both schemes, the compression rate at a relay is lower bounded by the covering lemma. But in noisy network coding, each destination does not decode the compression indices explicitly, hence there are no additional constraints on the compression rate. However, in CF without binning, each destination decodes the compression indices precisely; this decoding places extra upper bounds on the compression rate, leading to the constraints on compression rate as in (\ref{const_1}).\par
The above analysis prompts the question: what if in CF without binning, we also do not decode the compression index precisely, can we achieve the same rate region as noisy network coding? The following analysis shows that the answer is negative. Take the two-way relay channel as an example. In the next few sections, we apply several joint decoding rules to enlarge the rate region.

\subsubsection{Relaxed joint decoding of a single message without decoding the compression indices uniquely}
Using the same codebook generation and encoding as in CF without binning in Section \ref{sec:TWRC_CF_withoutbinning}, but we change the decoding rule in \eqref{decoding_rule_1} to as follows. In block $j+1$, user 1 finds a unique $\hat{m}_{2,j}$ such that
\begin{align}
(x_2^n(\hat{m}_{2,j}),x_r^n(\hat{k}_{j-1}),\hat{y}_r^n(\hat{k}_j|\hat{k}_{j-1}),y_1^n(j),x_1^n(m_{1,j}))&\in T^{(n)}_{\epsilon} \nonumber\\
\textrm{and}~~~~~~~~~~~~~~(x_r^n(\hat{k}_j),y_1^n(j+1),x_1^n(m_{1,j+1}))&\in T^{(n)}_{\epsilon}
\label{decoding_rule_no_comp}
\end{align}
for some pair of indices $(\hat{k}_{j-1},\hat{k}_{j})$.
With this decoding rule, the error event $\mathcal{E}_{3j}$ which corresponds to wrong compression index only (see Appendix \ref{appendix:rate_two_way_nobin}) no longer applies, but other new error events appear in which $\hat{k}_{j-1} \neq 1$ (see the detailed error analysis in Appendix \ref{appendix:cor:no_index_decoding}). With decoding rule \eqref{decoding_rule_no_comp}, we have following Corollary:
\begin{cor}
\label{cor:no_index_decoding}
For the two-way relay channel, the following rate region is achievable by CF without binning using joint decoding but without decoding the compression index precisely:
\begin{subequations}
\label{rate:no_repeat}
\begin{align}
R_1 &\leq I(X_1;Y_2,\hat{Y}_r|X_2,X_r) \\
R_1 &\leq I(X_1,X_r;Y_2|X_2)-I(\hat{Y}_r;Y_r|X_1,X_2,X_r,Y_2)  \\
\label{two_more_1}
R_1 &\leq I(X_1,X_r;Y_2|X_2)-I(\hat{Y}_r;Y_r|X_1,X_2,X_r,Y_2)+I(X_r;Y_2|X_2)-I(\hat{Y};Y_2|X_r) \\
R_2 &\leq I(X_2;Y_1,\hat{Y}_r|X_1,X_r) \\
R_2 &\leq I(X_2,X_r;Y_1|X_1)-I(\hat{Y}_r;Y_r|X_1,X_2,X_r,Y_1) \\
\label{two_more_2}
R_2 &\leq I(X_2,X_r;Y_1|X_1)-I(\hat{Y}_r;Y_r|X_1,X_2,X_r,Y_1)+I(X_r;Y_1|X_1)-I(\hat{Y};Y_1|X_r)
\end{align}
\end{subequations}
for some $p(x_1)p(x_2)p(x_r)p(y_1,y_2,y_r|x_1,x_2,x_r)p(\hat{y}_r|x_r,y_r)$.\par
\end{cor}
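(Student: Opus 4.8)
The plan is to run a joint-typicality error analysis paralleling the proof of Theorem~\ref{rate_two_way_nobin} in Appendix~\ref{appendix:rate_two_way_nobin}, but with the enlarged candidate set of the relaxed rule \eqref{decoding_rule_no_comp}, followed by a Fourier--Motzkin elimination of the auxiliary relay rate $R_r$. I keep the codebook generation and encoding of Section~\ref{sec:TWRC_CF_withoutbinning} verbatim; in particular, by the covering lemma the relay finds a compression index in every block whenever $R_r > I(\hat{Y}_r;Y_r|X_r)+\delta(\epsilon)$, so a relay-encoding error has vanishing probability. By the symmetry of the two users it suffices to bound the probability that user~2 mis-decodes $\hat{m}_{1,j}$; the bound on $R_2$ at user~1 follows by swapping $1\leftrightarrow 2$ and $Y_1\leftrightarrow Y_2$. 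I would fix the sent indices $m_{1,j}=k_{j-1}=k_j=1$ and note that the codewords $x_2^n(m_{2,j})$, $x_2^n(m_{2,j+1})$, $x_1^n(m_{1,j+1})$ in \eqref{decoding_rule_no_comp} are the genuine ones, since user~2 knows its own messages.

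First I would discard the event that the true $\hat{m}_{1,j}=1$ with $(\hat{k}_{j-1},\hat{k}_j)=(1,1)$ fails either typicality test of \eqref{decoding_rule_no_comp}; this vanishes by the law of large numbers, the Markov lemma applied to $\hat{Y}_r-(X_r,Y_r)-(X_1,X_2,Y_1,Y_2)$, and the covering-lemma success above. The remaining events are $\{\exists\, m'\neq 1,\ (\hat{k}_{j-1},\hat{k}_j)=(l,k):\ \text{both tests of \eqref{decoding_rule_no_comp} hold}\}$, and I would partition them by whether $l=1$ or $l\ne1$ and whether $k=1$ or $k\ne1$, because the generating law of the candidate relay codeword $x_r^n(l)$, of the compression word $\hat{y}_r^n(k|l)\sim\prod_i p(\hat{y}_{ri}|x_{ri}(l))$, and of $x_r^n(k)$ in block $j{+}1$ changes across the four cases. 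The case $(l,k)=(1,1)$ makes the second test automatic and, from the first test and a union bound over $2^{nR_1}$ candidates, gives $R_1<I(X_1;Y_2,\hat{Y}_r|X_2,X_r)$, the first constraint of \eqref{rate:no_repeat}. The two cases with exactly one of $l,k$ unequal to $1$ contribute, after a union bound over $\approx 2^{n(R_1+R_r)}$ candidates and intersection with $R_r>I(\hat{Y}_r;Y_r|X_r)$, exactly $R_1<I(X_1,X_r;Y_2|X_2)-I(\hat{Y}_r;Y_r|X_1,X_2,X_r,Y_2)$, the noisy-network-coding constraint of \eqref{rate:no_repeat}. The genuinely new case $l\ne1,\ k\ne1$ --- which cannot occur in Theorem~\ref{rate_two_way_nobin}, where $\hat{k}_{j-1}$ was already decoded --- produces the extra constraint \eqref{two_more_1}, as detailed next; the mirror analysis at user~1 gives \eqref{two_more_2}. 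Standard block-Markov bookkeeping ($b-1$ messages over $b$ blocks, rate loss $(b-1)/b\to1$) finishes the argument, and I would remark that the compression-rate inequalities \eqref{const_1} of Theorem~\ref{rate_two_way_nobin} do not reappear, precisely because ``wrong compression index only'' is no longer an error event and $R_r$ is absorbed by the elimination.

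The new case deserves care. Here $x_r^n(l)$ is a ``fresh'' relay codeword, $\hat{y}_r^n(k|l)$ is a fresh compression word for it, and in block $j{+}1$ the candidate $x_r^n(k)$ is again fresh; the anchors are the true pairs $(x_2^n(m_{2,j}),y_2^n(j))$ for the first test and $(x_2^n(m_{2,j+1}),y_2^n(j+1))$ for the second. I would apply the joint-typicality lemma to the first test treating $(X_r,\hat{Y}_r)$ as one block generated from the correct joint marginal $p(x_r)p(\hat{y}_r|x_r)$ and $X_1$ as a separate independent block, which gives exponent $I(X_1,X_r;Y_2|X_2)+I(\hat{Y}_r;X_1,X_2,Y_2|X_r)$; and to the second test, which is statistically independent of the first since it uses the block-$(j{+}1)$ codebooks and noise, the exponent $I(X_r;Y_2|X_2)$. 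Because the two test probabilities multiply, the union bound over $\approx 2^{n(R_1+2R_r)}$ candidates forces $R_1+2R_r$ below the sum of these exponents; rewriting $I(\hat{Y}_r;X_1,X_2,Y_2|X_r)$ via the Markov chain $\hat{Y}_r-(X_r,Y_r)-(X_1,X_2,Y_2)$ in terms of $I(\hat{Y}_r;Y_r|X_r)$ and $I(\hat{Y}_r;Y_r|X_1,X_2,X_r,Y_2)$, and then eliminating $R_r$ against $R_r>I(\hat{Y}_r;Y_r|X_r)$, yields \eqref{two_more_1}.

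I expect the exponent simplification together with the $R_r$-elimination to be the main obstacle: this is where the coefficient $2R_r$ (rather than $R_r$), and hence the additional slack $I(X_r;Y_2|X_2)-I(\hat{Y}_r;Y_2|X_r)$ in \eqref{two_more_1}, enters, and it is the step most prone to an algebraic slip --- in particular the reconciliation of the fresh-$\hat{Y}_r$ exponent with the exact form in \eqref{two_more_1} through the $\hat{Y}_r-(X_r,Y_r)-Y_2$ chain. I would cross-check it against the detailed error list in Appendix~\ref{appendix:cor:no_index_decoding} and against the sanity limit that \eqref{two_more_1} must become redundant --- recovering the noisy-network-coding region of Theorem~\ref{thm_noisy} --- when $I(X_r;Y_2|X_2)\ge I(\hat{Y}_r;Y_2|X_r)$, and symmetrically for \eqref{two_more_2}.
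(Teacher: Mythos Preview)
Your proposal is correct and follows essentially the same route as the paper's proof in Appendix~\ref{appendix:cor:no_index_decoding}: the same four-way split on $(\hat{k}_{j-1},\hat{k}_j)\in\{(1,1),(1,\ne 1),(\ne 1,1),(\ne 1,\ne 1)\}$, the same identification of the doubly-wrong event as the source of the new constraint with coefficient $2R_r$, and the same elimination against $R_r>I(\hat{Y}_r;Y_r|X_r)$. The only cosmetic difference is that you analyze user~2 decoding $m_{1,j}$ whereas the paper analyzes user~1 decoding $m_{2,j}$, and you phrase the final step as Fourier--Motzkin while the paper substitutes directly; both are equivalent by the symmetry you invoke. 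One small caution: the last term in \eqref{two_more_1}--\eqref{two_more_2} as printed, $I(\hat{Y};Y_2|X_r)$, is a typo for $I(\hat{Y}_r;Y_r|X_r)$ (as your own derivation via the Markov chain will confirm), so do not look for an extra $\hat{Y}_r-(X_r,Y_r)-Y_2$ identity that is not there.
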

\begin{proof}
See Appendix \ref{appendix:cor:no_index_decoding}.
\end{proof}
In above rate region, although the new decoding rule \eqref{decoding_rule_no_comp} removes the constraint on compression
 rate, it brings two new rate constraints \eqref{two_more_1} \eqref{two_more_2}.
 Hence the rate region in \eqref{rate:no_repeat} is still smaller than that of noisy network coding in (\ref{rate_noisy}).
  These two extra rate constraints come from the block boundary condition when performing joint decoding over two blocks.
  Specifically, it corresponds to the error event $\mathcal{E}_{7j}$ in Appendix \ref{appendix:cor:no_index_decoding},
  in which all the compression indices and message are wrong. The boundary condition results from the second decoding rule in \eqref{decoding_rule_no_comp} in which not all input and output signals are involved.\par

\subsubsection{Simultaneous joint decoding of all messages without decoding compression indices uniquely}
  To reduce the boundary effect, we try simultaneous decoding of all messages over all blocks, again without decoding the compression indices uniquely. We use the same codebook generation and encoding as in CF without binning in Section \ref{sec:TWRC_CF_withoutbinning}, but use a different decoding rule at the destination.
  The destination now jointly decodes all messages based on the signals received in all blocks, without decoding the compression indices explicitly.
  Specifically, user 1 now finds a unique tuple $(m_{2,1},\dots m_{2,j},\dots m_{2,b-1})$ at the end of block $b$ such that
  \begin{subequations}
  \begin{align}
  \label{simul_decoding_10}
&(x_2^n({m}_{2,1}),x_r^n(1),\hat{y}_r^n({k}_1|1),y_1^n(1),x_1^n(m_{1,1}))\in T^{(n)}_{\epsilon}\\
&(x_2^n({m}_{2,2}),x_r^n(k_1),\hat{y}_r^n({k}_2|k_1),y_1^n(2),x_1^n(m_{1,2}))\in T^{(n)}_{\epsilon}\nonumber \\
\vdots\nonumber\\
&(x_2^n({m}_{2,j}),x_r^n(k_{j-1}),\hat{y}_r^n({k}_j|k_{j-1}),y_1^n(j),x_1^n(m_{1,j}))\in T^{(n)}_{\epsilon}\nonumber \\
\vdots\nonumber\\
\label{simul_decoding_40}
&(x_2^n({m}_{2,b-1}),x_r^n(k_{b-2}),\hat{y}_r^n({k}_{b-1}|k_{b-2}),y_1^n(b-1),x_1^n(m_{1,b-1}))\in T^{(n)}_{\epsilon}\\
\label{simul_decoding_50}
&(x_2^n(1),x_r^n(k_{b-1}),\hat{y}_r^n({k}_{b}|k_{b-1}),y_1^n(b),x_1^n(1))\in T^{(n)}_{\epsilon}
\end{align}
\end{subequations}
  for some indices $(k_1,\dots k_j,\dots k_{b-1},k_{b})$. \par
  To compare its achievable rate region with noisy network coding, we consider the error event in which $m_{2,1}$ and $k_{b}$ are wrong while all other messages and indices are right. This error event involves the decoded joint distributions from \eqref{simul_decoding_10} and \eqref{simul_decoding_50} as follows.
\begin{align*}
p(x_1)p(x_2)p(x_r)p(y_1,\hat{y}_r|x_1,x_r)\\
p(x_1)p(x_2)p(x_r)p(\hat{y}_r|x_r)p(y_1|x_1,x_2,x_r)
\end{align*}
By the packing lemma, the probability of the above error event goes to zero as $n \rightarrow\infty$ if
\begin{align}
R_2+R_r \leq I(X_2;Y_1,\hat{Y}_r|X_1,X_r)+I(\hat{Y}_r;X_1,X_2,Y_1|X_r)
\end{align}
 Since $R_r >I(\hat{Y}_r;Y_r|X_r)$ by the covering lemma \cite{gamal2010lecture}, we obtain
\begin{align}
R_2 &\leq I(X_2;Y_1,\hat{Y}_r|X_1,X_r)+I(\hat{Y}_r;X_1,X_2,Y_1|X_r)
-I(\hat{Y}_r;Y_r|X_r)\nonumber\\
&= I(X_2;Y_1,\hat{Y}_r|X_1,X_r)-I(\hat{Y}_r;Y_r|X_1,X_2,X_r,Y_1)
\label{tight_const_0}
\end{align}
  which is tighter than the constraint \eqref{rate_noisy} in noisy network coding. Again we see the boundary effect at the last block when $k_b$ is wrong.

\subsubsection{Simultaneous joint decoding of all messages but ignoring the last compression index}
  Since in the last block, each source sends a known message, the last compression index $k_b$ brings no new information. Hence we may choose to omit it in the decoding rule to see if the rate region can be improved.
  Specifically, user 1 now finds a unique tuple $(m_{2,1},\dots m_{2,j},\dots m_{2,b-1})$ at the end of block $b$ such that
\begin{subequations}
\begin{align}
\label{simul_decoding_1}
&(x_2^n({m}_{2,1}),x_r^n(1),\hat{y}_r^n({k}_1|1),y_1^n(1),x_1^n(m_{1,1}))\in T^{(n)}_{\epsilon}\\
&(x_2^n({m}_{2,2}),x_r^n(k_1),\hat{y}_r^n({k}_2|k_1),y_1^n(2),x_1^n(m_{1,2}))\in T^{(n)}_{\epsilon}\nonumber \\
\vdots\nonumber\\
&(x_2^n({m}_{2,j}),x_r^n(k_{j-1}),\hat{y}_r^n({k}_j|k_{j-1}),y_1^n(j),x_1^n(m_{1,j}))\in T^{(n)}_{\epsilon}\nonumber \\
\vdots\nonumber\\
\label{simul_decoding_4}
&(x_2^n({m}_{2,b-1}),x_r^n(k_{b-2}),\hat{y}_r^n({k}_{b-1}|k_{b-2}),y_1^n(b-1),x_1^n(m_{1,b-1}))\in T^{(n)}_{\epsilon}\\
\label{simul_decoding_5}
&(x_2^n(1),x_r^n(k_{b-1}),y_1^n(b),x_1^n(1))\in T^{(n)}_{\epsilon}
\end{align}
\end{subequations}
for some indices $(k_1,\dots k_j,\dots k_{b-1})$.
 Note that \eqref{simul_decoding_5} is the only step that is different from \eqref{simul_decoding_50}.\par
 To compare its achievable rate region with noisy network coding, we consider the error event in which $m_{2,1}$ and $k_{b-1}$ are wrong while all other messages and indices are right. This error event involves the decoded joint distributions from \eqref{simul_decoding_1}, \eqref{simul_decoding_4} and \eqref{simul_decoding_5} as follows.
\begin{align*}
p(x_1)p(x_2)p(x_r)p(y_1,\hat{y}_r|x_1,x_r)\\
p(x_1)p(x_2)p(x_r)p(\hat{y}_r|x_r)p(y_1|x_1,x_2,x_r)\\
p(x_1)p(x_2)p(x_r)p(y_1|x_1,x_2)
\end{align*}
By the packing lemma, the probability of the above error event goes to zero as $n \rightarrow\infty$ if
\begin{align}
R_2+R_r \leq I(X_2;Y_1,\hat{Y}_r|X_1,X_r)
 +I(X_r;Y_1|X_1,X_2)+I(\hat{Y}_r;X_1,X_2,Y_1|X_r).\nonumber
\end{align}
Since $R_r >I(\hat{Y}_r;Y_r|X_r)$, we obtain
\begin{align}
R_2 &\leq I(X_2;Y_1,\hat{Y}_r|X_1,X_r)+I(X_r;Y_1|X_1,X_2)+I(\hat{Y}_r;X_1,X_2,Y_1|X_r)-I(\hat{Y}_r;Y_r|X_r)\nonumber\\
\label{tight_const}
&= I(X_2;Y_1,\hat{Y}_r|X_1,X_r)+I(X_r;Y_1|X_1,X_2)-I(\hat{Y}_r;Y_r|X_1,X_2,X_r,Y_1)
\end{align}
Although rate constraint \eqref{tight_const} is loser than \eqref{tight_const_0}, it is still
tighter than the rate constraint of noisy network coding in \eqref{rate_noisy} (we can easily check this for the Gaussian TWRC). Therefore, simultaneous decoding of all messages over all blocks still achieves a smaller rate region than noisy network coding.

\begin{rem}
Joint decoding without explicitly decoding the compression indices removes the constraints on compression rates but still achieves a smaller rate region than noisy network coding for the general multi-source network.
\end{rem}

\begin{rem}
Simultaneous joint decoding of all messages over all blocks still does not overcome the block boundary limitation, caused by the last block in which no new messages but only the compression indices are sent.
\end{rem}

\begin{rem}
In \cite{Wu2010on}, a technique of appending $M$ blocks for transmission of the compression indices only is proposed to reduce the block boundary effect, but it also reduces the achievable rate by a non-vanishing amount.
\end{rem}

The above analysis leads us to discuss the effect of message repetition, in which the same message is transmitted over multiple blocks.

\begin{table*}[!t]
\centering
\begin{spacing}{1.5}
\begin{tabular}{|c|c|c|c|c|c|}
\hline
Block&\ldots&$2j-1$&$2j$&$2j+1$&\ldots  \\
\hline
$X_1$ & \ldots & $x^n_{1,2j-1}(m_{1,j})$  &$x^n_{1,2j}(m_{1,j})$  & $x^n_{1,2j+1}(m_{1,j+1})$  &\ldots \\
\hline
$X_2$ & \ldots & $x^n_{2,2j-1}(m_{2,j})$  &$x^n_{2,2j}(m_{2,j})$  & $x^n_{2,2j+1}(m_{2,j+1})$  &\ldots \\
\hline
$Y_r$ & \ldots & $\hat{y}_r(k_{2j-1}|k_{2j-2})$ &$\hat{y}_r(k_{2j}|k_{2j-1})$ & $\hat{y}_r(k_{2j+1}|k_{2j})$ &\ldots \\
\hline
$X_r$ & \ldots& $x^n_r({k_{2j-2}})$ & $x^n_r({k_{2j-1}})$ &$x^n_r({k_{2j}})$ &\ldots \\
\hline
$Y_1$ & \ldots & $\hat{m}_{2,j-1}$ for some $(k_{2j-4},k_{2j-3},k_{2j-2})$&--&$\hat{m}_{2,j}$ for some $(k_{2j-2},k_{2j-1},k_{2j})$&\ldots \\
\hline
$Y_2$ & \ldots & $\hat{m}_{1,j-1}$ for some $(k_{2j-4},k_{2j-3},k_{2j-2})$&--&$\hat{m}_{1,j}$ for some $(k_{2j-2},k_{2j-1},k_{2j})$ &\ldots \\
\hline
\end{tabular}
\end{spacing}
\caption{\textnormal{Encoding and decoding of CF without binning but with twice message repetition for the two-way relay channel.}}
\label{tab:CF_two-way_rep}
\end{table*}

\subsection{Implication of message repetition}
Message repetition is performed in noisy network coding by sending the same message in multiple, consecutive blocks using independent codebooks. To understand the effect of message repetition, we use the same decoding rule as in the previous section (decoding the message without explicitly decoding the compression indices) and repeat the message twice. Now each user transmits the same message in every two consecutive blocks. We compare the achievable rate region of this scheme with those of no message repetition and of noisy network coding, in which each message is repeated $b$ times and $b$ approaches to infinity.\par

Again take the two-way relay channel as an illustrative example.
Let each user transmit the same message in every two blocks using independent codebooks. Decoding is performed at the end of every two blocks based on signals received from the current and previous two blocks. Specifically, the codebook generation, encoding and decoding are as follows (also see Table \ref{tab:CF_two-way_rep}). We use a block coding scheme in which each user sends $b$ messages over $2b+1$ blocks of $n$ symbols each.\par
\subsubsection{Codebook generation}
At block $l\in\{2j-1,2j\}$:
\begin{itemize}
\item Independently generate $2^{n2R_1}$ sequences $x_{1,l}^n(m_{1,j})\sim\prod ^n_{i=1}p(x_{1i})$, where $m_{1,j} \in [1:2^{n2R_1}]$.
\item Independently generate $2^{n2R_2}$ sequences $x_{2,l}^n(m_{2,j})\sim\prod ^n_{i=1}p(x_{2i})$, where $m_{2,j} \in [1:2^{n2R_2}]$.
\item Independently generate $2^{nR_r}$ sequences $x_r^n(k_{l})\sim\prod ^n_{i=1}p(x_{ri})$, where $k_{l} \in [1:2^{nR_r}]$.
\item For each $k_{l}\in [1:2^{nR_r}]$, independently generate $2^{nR_r}$ sequences $\hat{y}^n_r(k_{l+1}|k_{l})\sim\prod ^n_{i=1}p(\hat{y}_{ri}|x_{ri}(k_{l}))$, where $k_{l+1}\in [1:2^{nR_r}]$.
\end{itemize}

\subsubsection{Encoding}
In blocks $2j-1$ and $2j$, user 1 transmits $x_{1,2j-1}^n(m_{1,j})$ and $x_{1,2j}^n(m_{1,j})$ respectively. User 2 transmits $x_{2,2j-1}^n(m_{2,j})$ and $x_{2,2j}^n(m_{2,j})$.\par
In block $j$, the relay, upon receiving $y^n_r(j)$, finds an index $k_j$ such that
\begin{align*}
(\hat{y}_r^n(k_j|k_{j-1}),y^n_r(j),x^n_r(k_{j-1}))\in T^{(n)}_{\epsilon}.
 \end{align*}
 Assume that such $k_j$ is found, the relay sends $x^n_r(k_j)$ in block $j+1$.\par

\subsubsection{Decoding} We discuss the decoding at user 1. User 1 decodes $m_{2,j}$ at the end of block $2j+1$. Specifically, it finds a unique $\hat{m}_{2,j}$ such that
\begin{align}
\label{decode_rule_repeat}
(x_{2,2j-1}^n(\hat{m}_{2,j}),x_r^n(\hat{k}_{2j-2}),\hat{y}_r^n(\hat{k}_{2j-1}|\hat{k}_{2j-2}),
y_1^n(2j-1),x_{1,2j-1}^n(m_{1,j}))&\in T^{(n)}_{\epsilon} \nonumber\\
(x_{2,2j}^n(\hat{m}_{2,j}),x_r^n(\hat{k}_{2j-1}),\hat{y}_r^n(\hat{k}_{2j}|\hat{k}_{2j-1}),
y_1^n(2j),x_{1,2j}^n(m_{1,j}))&\in T^{(n)}_{\epsilon} \nonumber\\
\!\!\!\!\textrm{and}~~~~~~~(x_r^n(\hat{k}_{2j}),y_1^n(2j+1),x_{1,2j+1}^n(m_{1,j+1}))&\in T^{(n)}_{\epsilon}
\end{align}
for some $(\hat{k}_{2j-2},\hat{k}_{2j-1},\hat{k}_{2j})$.

\begin{cor}
\label{cor:repeat_once_decoding}
For the two-way relay channel, the following rate region is achievable by CF without binning and without explicitly decoding the compression indices when repeating each message twice:
\begin{subequations}
\label{rate:repeat_once}
\begin{align}
R_1 &\leq I(X_1;Y_2,\hat{Y}_r|X_2,X_r) \\
R_1 &\leq I(X_1,X_r;Y_2|X_2)-I(\hat{Y}_r;Y_r|X_1,X_2,X_r,Y_2)  \\
\label{repeat_two_more_1}
R_1 &\leq I(X_1,X_r;Y_2|X_2)-I(\hat{Y}_r;Y_r|X_1,X_2,X_r,Y_2)
+\frac{1}{2}[I(X_r;Y_2|X_2)-I(\hat{Y};Y_2|X_r)] \\
R_2 &\leq I(X_2;Y_1,\hat{Y}_r|X_1,X_r) \\
R_2 &\leq I(X_2,X_r;Y_1|X_1)-I(\hat{Y}_r;Y_r|X_1,X_2,X_r,Y_1) \\
\label{repeat_two_more_2}
R_2 &\leq I(X_2,X_r;Y_1|X_1)-I(\hat{Y}_r;Y_r|X_1,X_2,X_r,Y_1)
+\frac{1}{2}[I(X_r;Y_1|X_1)-I(\hat{Y};Y_1|X_r)]
\end{align}
\end{subequations}
for some $p(x_1)p(x_2)p(x_r)p(y_1,y_2,y_r|x_1,x_2,x_r)p(\hat{y}_r|x_r,y_r)$.
\end{cor}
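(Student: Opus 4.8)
The plan is to carry out the joint-typicality (packing-lemma) error analysis in direct analogy with the proof of Corollary~\ref{cor:no_index_decoding} in Appendix~\ref{appendix:cor:no_index_decoding}, the new feature being the \emph{three}-block decoding window $\{2j-1,2j,2j+1\}$ produced by repeating each message twice. By the symmetry of the i.i.d.\ random code and the usual relabeling argument I may assume the transmitted quantities are $m_{2,j}=1$ and $k_{2j-2}=k_{2j-1}=k_{2j}=1$. At the relay, the covering lemma gives that a valid compression index $k_l$ exists with high probability in every block provided $R_r>I(\hat Y_r;Y_r|X_r)$; this is the only constraint contributed by the relay. I would then fix the distribution, let $R_r\downarrow I(\hat Y_r;Y_r|X_r)$, and study the decoding error at user~1 (user~2 being symmetric, yielding \eqref{repeat_two_more_1} and the $R_1$-bounds).

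A decoding error for $m_{2,j}$ at user~1 occurs iff some $\hat m_{2,j}\neq 1$ together with some triple $(\hat k_{2j-2},\hat k_{2j-1},\hat k_{2j})$ satisfies all three typicality tests in \eqref{decode_rule_repeat}; I would union-bound over index tuples, splitting according to which of $\hat k_{2j-2},\hat k_{2j-1},\hat k_{2j}$ equal the true value~$1$ ($2^3$ cases). The core bookkeeping is identifying, case by case, which substituted codewords are statistically independent of the true received sequences: $x_{2,2j-1}^n(\hat m_{2,j})$ and $x_{2,2j}^n(\hat m_{2,j})$ are fresh whenever $\hat m_{2,j}\neq 1$; $x_r^n(\hat k_l)$ is fresh whenever $\hat k_l\neq 1$; and $\hat y_r^n(\hat k_l|\hat k_{l-1})$ behaves as a fresh sequence drawn $\sim\prod p(\hat y_{ri}|x_{ri})$ given $x_r^n(\hat k_{l-1})$ whenever $\hat k_l\neq 1$, irrespective of whether $\hat k_{l-1}$ is correct. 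For each case the packing lemma gives a $2^{-n\Delta}$ bound with $\Delta$ a sum of conditional mutual informations; multiplying by the tuple count, forcing the product to vanish, substituting $R_r=I(\hat Y_r;Y_r|X_r)$, and simplifying with the Markov relation $\hat Y_r-(X_r,Y_r)-(X_1,X_2,Y_1,Y_2)$ — in particular $I(\hat Y_r;X_1,X_2,Y_1|X_r)-I(\hat Y_r;Y_r|X_r)=-I(\hat Y_r;Y_r|X_1,X_2,X_r,Y_1)$ — turns each case into one of the inequalities in \eqref{rate:repeat_once}. The case with all three indices correct gives $2R_2<2I(X_2;Y_1,\hat Y_r|X_1,X_r)$, hence the first bound; cases with $\hat k_{2j-2}=1$ but some later index wrong give the second; and the case in which $\hat m_{2,j}$ and all three compression indices are wrong uses all of tests (i)--(iii), including the ``boundary'' test $(x_r^n(\hat k_{2j}),y_1^n(2j+1),x_{1,2j+1}^n(m_{1,j+1}))\in T^{(n)}_\epsilon$ (which contains neither $X_2$ nor $\hat Y_r$), and produces a bound of the form $2R_2+3R_r< 2\bigl[I(X_2,X_r;Y_1|X_1)-I(\hat Y_r;Y_r|X_1,X_2,X_r,Y_1)\bigr]+I(X_r;Y_1|X_1)+2I(\hat Y_r;Y_r|X_r)$; substituting $R_r=I(\hat Y_r;Y_r|X_r)$ and dividing by $2$ gives exactly \eqref{repeat_two_more_2}. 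The factor $\tfrac12$ is structural: the $2nR_2$ message bits are carried over the two repetition blocks while the offending index tuple contributes $3R_r$ of ambiguity (plus the two overlapping $\hat Y_r$-couplings), so the penalty term is halved relative to the no-repetition bound \eqref{two_more_2}; with $b$-fold repetition the analogous factor is $1/b$, and the boundary penalty vanishes as $b\to\infty$, recovering the noisy-network-coding region.

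The hard part will be the ``mixed'' cases — where exactly one or two of $\hat k_{2j-2},\hat k_{2j-1},\hat k_{2j}$ differ from~$1$ — because the statistical coupling of $\hat y_r^n(\hat k_l|\hat k_{l-1})$ to $x_r^n(\hat k_{l-1})$ must be tracked consistently through the two overlapping tests (i) and (ii), and because a given mixed case, after the $R_r$-elimination, need not be looser than every individual bound in \eqref{rate:repeat_once}; one must instead show (via mutual-information manipulations, with the comparison bound chosen according to the compression regime, e.g.\ on the sign of $I(\hat Y_r;Y_r|X_r)-I(X_r;Y_1|X_1)$) that it is never tighter than the \emph{minimum} of the three bounds on $R_2$. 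That dominance check is where most of the work lies; the remaining events — relay compression failure, and $\hat m_{2,j}\neq 1$ with all indices correct — are routine and parallel Appendix~\ref{appendix:cor:no_index_decoding}.
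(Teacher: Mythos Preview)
Your proposal is correct and matches the paper's proof (Appendix~\ref{appendix:cor:repeat_once_decoding}) essentially step for step: the paper likewise splits into the $2^3$ error events $\mathcal{E}^r_{3j},\dots,\mathcal{E}^r_{10j}$ according to which of $(k_{2j-2},k_{2j-1},k_{2j})$ are wrong, applies the packing lemma to each using exactly the ``fresh/non-fresh'' bookkeeping you describe, and obtains \eqref{repeat_two_more_2} from the all-wrong case $\mathcal{E}^r_{10j}$ via the identical computation $2R_2+3R_r\leq I(X_r;Y_1|X_1)+2[I(X_2,X_r;Y_1|X_1)+I(\hat Y_r;X_1,X_2,Y_1|X_r)]$. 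One small miscalibration: the ``hard part'' you anticipate for the mixed cases turns out to be routine --- after the chain-rule identity $I(X_r;Y_1|X_1)+I(X_2;Y_1|X_1,X_r)=I(X_2,X_r;Y_1|X_1)$ each mixed bound collapses to (or is looser than) a convex combination of the three displayed bounds on $R_2$, so no regime-dependent comparison on the sign of $I(\hat Y_r;Y_r|X_r)-I(X_r;Y_1|X_1)$ is needed; the paper simply writes ``combining the above inequalities'' without further comment.
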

\begin{proof}
See Appendix \ref{appendix:cor:repeat_once_decoding}.
\end{proof}

Comparing the above rate region with that of CF without message repetition in \eqref{rate:no_repeat} and noisy network coding in (\ref{rate_noisy}), we find that the extra rate constraints \eqref{two_more_1} and \eqref{two_more_2} are relaxed by repeating the message twice. The additional term is divided by 2, which comes from the error event $\mathcal{E}^r_{10j}$ in Appendix \ref{appendix:cor:repeat_once_decoding}, in which the message and all compression indices are wrong. This event also corresponds to a boundary event, but since decoding rule \eqref{decode_rule_repeat} spans more blocks, the boundary effect is lessen.\par
Thus if we repeat the messages $b$ times, this additional term will be divided by $b$. Taking $b$ to infinity as in noisy network coding completely eliminates the additional terms in \eqref{repeat_two_more_1} and \eqref{repeat_two_more_2}. Hence the rate region is increasing with the number of times for message repetition.
To achieve the largest rate region, the message repetition times need to be infinity. Therefore, noisy network coding has $b$ blocks decoding delay and $b$ is required to approach infinity.

\begin{rem}
Message repetition brings more correlation between different blocks. It helps lessen the boundary effect when increasing the repetition times.
\end{rem}

\begin{rem}
Message repetition is necessary for achieving the noisy network coding rate region in multi-source networks. Furthermore, the repetition times need to approach infinity.
\end{rem}

\begin{rem}
This result is different from the single-source single-destination result in \cite{Wu2010on} which shows backward decoding without message repetition achieves the same rate as noisy network coding, albeit at an expense of extending the relay forwarding times infinitely without actually sending a new message. Thus message repetition appears to be essential in a multi-source network.
\end{rem}


\section{Gaussian Two-way Relay Channel}\label{sec:GTWRC}
\label{sec:Gaussian_TWRC}

\begin{figure}[!t]
    \begin{center}
    \includegraphics[width=0.3\textwidth]{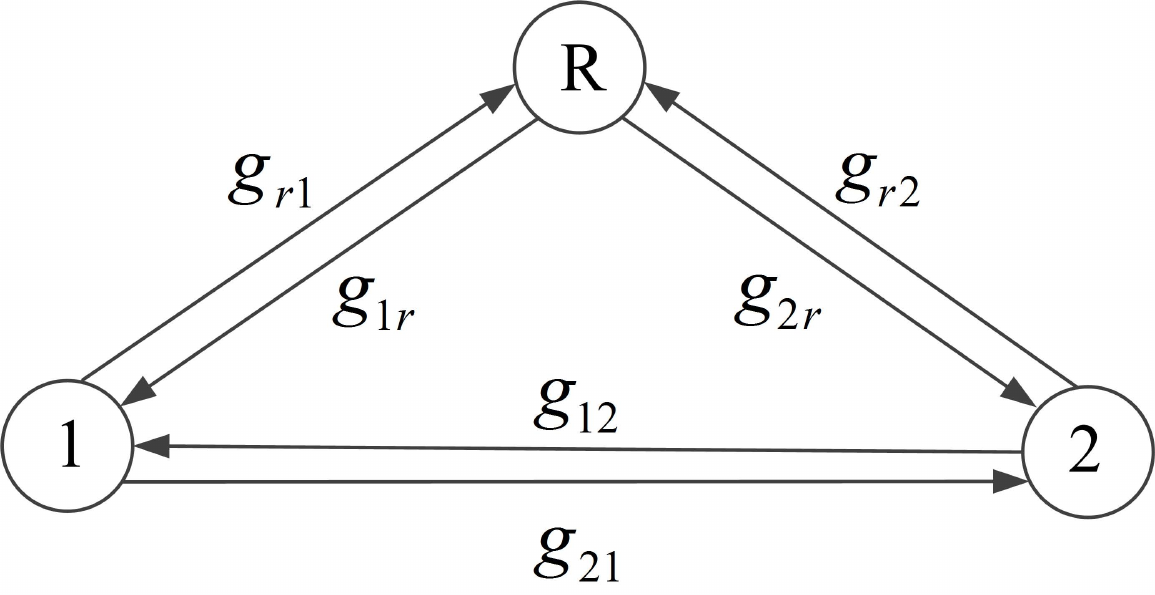}
    \caption{Gaussian two-way relay channel model.} \label{fig:GTWRC}
    \end{center}
\end{figure}

In this section, we focus on the Gaussian two-way relay channel. We compare the rate regions of CF without binning, the original CF and noisy network coding. We also derive the analytical condition for when CF without binning achieves the same rate region or sum rate as noisy network coding.\par
As in Figure \ref{fig:GTWRC}, the Gaussian two-way relay channel can be modeled as:
\begin{align}
  Y_1&=g_{12}X_2+g_{1r}X_r+Z_1 \nonumber \\
  Y_2&=g_{21}X_1+g_{2r}X_r+Z_2 \nonumber \\
  Y_r&=g_{r1}X_1+g_{r2}X_2+Z_r
  \label{GTWRC}
\end{align}
where $Z_1, Z_2, Z_r\sim\mathcal{N}(0,1)$ are independent Gaussian noises. The average
input power constraints for user 1, user 2 and the relay are all
$P$. $g_{12}, g_{1r}, g_{21}, g_{2r}, g_{r1}, g_{r2}$ are corresponding channel gains.

\subsection{Achievable rate region by CF without binning}
In the Gaussian two-way relay channel model, assume
\begin{align}
\label{Gaussian_signaling}
&X_1 \sim \mathcal{N}(0,P), X_2 \sim \mathcal{N}(0,P),\nonumber\\
&X_r \sim \mathcal{N}(0,P), \hat{Z} \sim \mathcal{N}(0,\sigma^2),
\end{align}
where $X_1, X_2, X_r$ and $\hat{Z}$ are independent, and $\hat{Y}_r=Y_r+\hat{Z}$. Denote
\begin{align}
\label{four_rate}
R_{11}(\sigma^2)&=C\left(g_{21}^2P+\frac{g_{r1}^2P}{1+\sigma^2}\right)\nonumber\\
R_{12}(\sigma^2)&=C(g_{21}^2P+g_{2r}^2P)-C(1/\sigma^2)\nonumber\\
R_{21}(\sigma^2)&=C\left(g_{12}^2P+\frac{g_{r2}^2P}{1+\sigma^2}\right)\nonumber\\
R_{22}(\sigma^2)&=C(g_{12}^2P+g_{1r}^2P)-C(1/\sigma^2),
\end{align}
where $C(x)=\frac{1}{2}\log(1+x).$
Then we have following rate region using CF without binning.
\begin{thm}
\label{cor:G_nobin}
The following rate region is achievable for the Gaussian two-way relay channel using compress-forward without binning:
\begin{align}
\label{Gaussian_compress}
R_1&\leq\min\{R_{11}(\sigma^2),R_{12}(\sigma^2)\}\nonumber\\
R_2&\leq\min\{R_{21}(\sigma^2),R_{22}(\sigma^2)\}
\end{align}
for some $\sigma^2 \geq \max \{\sigma_{c1}^2,\sigma_{c2}^2\}$, where
\begin{align}
\label{c1c2}
\sigma_{c1}^2&=(1+g_{21}^2P)/(g_{2r}^2P)\nonumber\\
\sigma_{c2}^2&=(1+g_{12}^2P)/(g_{1r}^2P),
\end{align}
and $R_{11}(\sigma^2), R_{12}(\sigma^2), R_{21}(\sigma^2), R_{22}(\sigma^2)$ are as defined in (\ref{four_rate}).
\end{thm}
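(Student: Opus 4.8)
\textbf{Proof proposal for Theorem \ref{cor:G_nobin}.} The plan is to specialize the general discrete-memoryless rate region of Theorem \ref{rate_two_way_nobin} to the Gaussian channel \eqref{GTWRC} using the jointly Gaussian signaling in \eqref{Gaussian_signaling}, namely $X_1,X_2,X_r$ independent zero-mean Gaussians of power $P$ and $\hat{Y}_r = Y_r + \hat{Z}$ with $\hat{Z}\sim\mathcal{N}(0,\sigma^2)$ independent of everything else. First I would evaluate each of the four mutual-information quantities appearing in \eqref{rate_thm1}. For the cut-set-like term, $I(X_1;Y_2,\hat{Y}_r|X_2,X_r)$: given $X_2,X_r$, user 2 sees $Y_2 = g_{21}X_1 + Z_2$ and, through the relay's description, $\hat{Y}_r = g_{r1}X_1 + g_{r2}X_2 + Z_r + \hat{Z}$, so conditioned on $(X_2,X_r)$ this is a single-input two-output Gaussian channel with independent noises of variances $1$ and $1+\sigma^2$; the resulting mutual information is $C\!\left(g_{21}^2 P + \frac{g_{r1}^2 P}{1+\sigma^2}\right) = R_{21}(\sigma^2)$. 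For the other term, $I(X_1,X_r;Y_2|X_2) = C(g_{21}^2 P + g_{1r}^2 P)$ since given $X_2$ the pair $(X_1,X_r)$ drives $Y_2$ with combined signal power $g_{21}^2P + g_{1r}^2P$ (cross terms vanish by independence), and $I(\hat{Y}_r;Y_r|X_1,X_2,X_r,Y_2) = h(\hat{Y}_r|X_1,X_2,X_r) - h(\hat{Z}) = \frac12\log\big(2\pi e(1+\sigma^2)\big) - \frac12\log(2\pi e\,\sigma^2) = C(1/\sigma^2)$, using that conditioned on all inputs $\hat{Y}_r = Z_r + \hat{Z}$ is independent of $Y_2$. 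Subtracting gives $C(g_{21}^2P + g_{1r}^2P) - C(1/\sigma^2) = R_{22}(\sigma^2)$. Hence the $R_1$ bound becomes $\min\{R_{21}(\sigma^2),R_{22}(\sigma^2)\}$, and by the symmetric computation (swapping roles $1\leftrightarrow 2$) the $R_2$ bound becomes $\min\{R_{11}(\sigma^2),R_{12}(\sigma^2)\}$, matching \eqref{Gaussian_compress} up to the labeling convention in \eqref{four_rate}.

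Next I would translate the compression-rate constraints \eqref{const_1} into the stated condition on $\sigma^2$. The quantity $I(\hat{Y}_r;Y_r|X_1,X_2,X_r,Y_1)$ is computed analogously to the term above but now conditioning additionally reveals $Y_1$; writing $\hat{Y}_r = Y_r + \hat{Z}$ and noting that given $(X_1,X_2,X_r)$ we have $Y_r = Z_r$ and $Y_1 = Z_1$ (the signal parts are all conditioned out), $Z_r$ and $Z_1$ independent, this equals $C(1/\sigma^2)$ — wait, I should be careful: conditioning on $Y_1$ does not change $Y_r$'s conditional law here because $Z_1 \perp Z_r$, so indeed $I(\hat{Y}_r;Y_r|X_1,X_2,X_r,Y_1) = C(1/\sigma^2)$. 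On the other side, $I(X_r;Y_1|X_1) = C(g_{1r}^2 P)$ since given $X_1$, $Y_1 = g_{12}X_2 + g_{1r}X_r + Z_1$ and we are measuring only the $X_r$-contribution against noise-plus-$X_2$ of variance $1 + g_{12}^2 P$, giving $\tfrac12\log\!\big(1 + \tfrac{g_{1r}^2P}{1+g_{12}^2P}\big) = C\!\big(g_{1r}^2P/(1+g_{12}^2P)\big)$. So the second inequality in \eqref{const_1}, i.e. $I(\hat{Y}_r;Y_r|X_1,X_2,X_r,Y_2)\le I(X_r;Y_2|X_2)$, reads $C(1/\sigma^2) \le C\!\big(g_{2r}^2P/(1+g_{21}^2P)\big)$, which by monotonicity of $C$ is equivalent to $1/\sigma^2 \le g_{2r}^2P/(1+g_{21}^2P)$, i.e. $\sigma^2 \ge (1+g_{21}^2P)/(g_{2r}^2P) = \sigma_{c1}^2$; the first inequality symmetrically gives $\sigma^2 \ge \sigma_{c2}^2$. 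Together these are exactly $\sigma^2 \ge \max\{\sigma_{c1}^2,\sigma_{c2}^2\}$.

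Finally I would note that this Gaussian choice is admissible in Theorem \ref{rate_two_way_nobin}: the region there is stated for arbitrary $p(x_1)p(x_2)p(x_r)p(\hat{y}_r|x_r,y_r)$, and although that theorem is proven for discrete memoryless channels, the standard discretization/quantization argument (as in \cite{gamal2010lecture}) extends achievability to the Gaussian case with power constraints under the Gaussian test channel; I would invoke this without reproving it. Assembling the three pieces — the four rate evaluations, the two constraint translations, and the admissibility remark — yields the theorem. The only mildly delicate point, which I would treat as the main obstacle, is verifying the conditional independence structure carefully in each mutual-information computation (in particular that $\hat{Y}_r$ conditioned on all inputs is independent of $Y_1$ and of $Y_2$, and that the relevant "signal" and "noise" components stay uncorrelated after conditioning), since a sign or cross-term error there would change the formulas; everything else is routine algebra with $C(\cdot)$.
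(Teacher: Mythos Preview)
Your approach is exactly what the paper does --- specialize Theorem~\ref{rate_two_way_nobin} to the Gaussian channel with the signaling \eqref{Gaussian_signaling} --- and your handling of the compression-rate constraints \eqref{const_1} is correct and yields precisely $\sigma^2\ge\max\{\sigma_{c1}^2,\sigma_{c2}^2\}$.

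However, your rate computations contain subscript slips that lead you to the (false) conclusion that there is a labeling swap. First, in $I(X_1,X_r;Y_2|X_2)$ the coefficient on $X_r$ in $Y_2$ is $g_{2r}$, not $g_{1r}$, so the correct value is $C(g_{21}^2P+g_{2r}^2P)$. Second, your correctly computed value $C\!\big(g_{21}^2P+\tfrac{g_{r1}^2P}{1+\sigma^2}\big)$ is, by definition \eqref{four_rate}, equal to $R_{11}(\sigma^2)$, not $R_{21}(\sigma^2)$; similarly the second term is $R_{12}(\sigma^2)$, not $R_{22}(\sigma^2)$. With these fixes you obtain $R_1\le\min\{R_{11},R_{12}\}$ and, by the symmetric computation, $R_2\le\min\{R_{21},R_{22}\}$ --- matching \eqref{Gaussian_compress} exactly, with no relabeling needed. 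These are bookkeeping errors, not a gap in the argument.
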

\begin{proof}
Applying the rate region in Theorem \ref{rate_two_way_nobin} with the signaling in \eqref{Gaussian_signaling}, we obtain \eqref{Gaussian_compress}.
\end{proof}

\subsection{Rate region comparison with the original CF scheme}
We now compare the rate regions of CF without binning in Theorem \ref{cor:G_nobin} and the original CF in \cite{rankov2006achievable} for the Gaussian two-way relay channel. We first present the rate region achieved by the original CF scheme.
\begin{cor}
\label{oricf_GTWRC}
\em{[Rankov and Wittneben].}
\emph{
The following rate region is achievable for the Gaussian two-way relay channel using the original compress-forward scheme:
\begin{align}
\label{Gaussian_original}
R_1&\leq R_{11}(\sigma^2) \nonumber\\
R_2&\leq R_{21}(\sigma^2)
\end{align}
for some $\sigma^2\geq\sigma_{r}^2$, where
\begin{align}
\sigma_{r}^2=\max \left\{ \frac{1+g_{21}^2P+g_{r1}^2P}{\min\{g_{2r}^2,g_{1r}^2\}P},\frac{1+g_{12}^2P+g_{r2}^2P}{\min\{g_{2r}^2,g_{1r}^2\}P}\right\}
\end{align}
and $R_{11}(\sigma^2), R_{21}(\sigma^2)$ are as defined in (\ref{four_rate}).
}
\end{cor}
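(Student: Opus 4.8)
The plan is to obtain Corollary~\ref{oricf_GTWRC} by specializing the discrete memoryless region of Theorem~\ref{thm_rate_ori} to the Gaussian channel~\eqref{GTWRC} with the Gaussian signaling of~\eqref{Gaussian_signaling}: $X_1\sim\mathcal{N}(0,P)$, $X_2\sim\mathcal{N}(0,P)$, $X_r\sim\mathcal{N}(0,P)$ mutually independent, and $\hat{Y}_r=Y_r+\hat{Z}$ with $\hat{Z}\sim\mathcal{N}(0,\sigma^2)$ independent of the channel variables. With this choice every mutual information in Theorem~\ref{thm_rate_ori} is between jointly Gaussian random variables and hence has a closed form through $C(\cdot)$; the goal is to show that the two rate bounds reduce to $R_{11}(\sigma^2)$ and $R_{21}(\sigma^2)$ and that the compression constraint~\eqref{const_2} reduces to $\sigma^2\ge\sigma_r^2$. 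Because Theorem~\ref{thm_rate_ori} is stated for finite alphabets, admitting a Gaussian test channel $\hat{Y}_r=Y_r+\hat{Z}$ calls for the usual discretization-and-limiting argument, which I would invoke without elaboration.

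For the rate terms, conditioned on $(X_2,X_r)$ the pair $(Y_2,\hat{Y}_r)$ yields two parallel Gaussian observations of $X_1$, namely $Y_2-g_{2r}X_r=g_{21}X_1+Z_2$ with noise variance $1$ and $\hat{Y}_r-g_{r2}X_2=g_{r1}X_1+Z_r+\hat{Z}$ with noise variance $1+\sigma^2$, the two noises being independent; combining them gives $I(X_1;Y_2,\hat{Y}_r\mid X_2,X_r)=C\!\big(g_{21}^2P+g_{r1}^2P/(1+\sigma^2)\big)=R_{11}(\sigma^2)$, and by symmetry $I(X_2;Y_1,\hat{Y}_r\mid X_1,X_r)=R_{21}(\sigma^2)$. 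This part is routine.

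For the constraint, I would first note that, $X_r$ being independent of $X_1$ and $Y_1$ carrying no $X_1$ component, $I(X_r;Y_1\mid X_1)=C\!\big(g_{1r}^2P/(1+g_{12}^2P)\big)$, and likewise $I(X_r;Y_2\mid X_2)=C\!\big(g_{2r}^2P/(1+g_{21}^2P)\big)$. On the other side, conditioning on $(X_1,X_r,Y_1)$ reduces $(Y_r,\hat{Y}_r)$ to $Y_r'=g_{r2}X_2+Z_r$ and $\hat{Y}_r'=Y_r'+\hat{Z}$, and observing $Y_1-g_{1r}X_r=g_{12}X_2+Z_1$ leaves $X_2$ with MMSE $P/(1+g_{12}^2P)$; hence $I(\hat{Y}_r;Y_r\mid X_1,X_r,Y_1)=C\!\big((1+g_{r2}^2P/(1+g_{12}^2P))/\sigma^2\big)$, and symmetrically $I(\hat{Y}_r;Y_r\mid X_2,X_r,Y_2)=C\!\big((1+g_{r1}^2P/(1+g_{21}^2P))/\sigma^2\big)$. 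Substituting into~\eqref{const_2} and using monotonicity of $C$, the constraint becomes a set of scalar inequalities; solving each for $\sigma^2$ and clearing the $(1+g_{12}^2P)$- and $(1+g_{21}^2P)$-type denominators produces the numerators $1+g_{12}^2P+g_{r2}^2P$ and $1+g_{21}^2P+g_{r1}^2P$, while the minimization over the two relay-to-user links produces the $\min\{g_{2r}^2,g_{1r}^2\}P$ denominator, i.e. $\sigma^2\ge\sigma_r^2$.

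The step I expect to be the main obstacle is this last simplification, since \eqref{const_2} is a $\max\le\min$ between four $C(\cdot)$ terms and one must check that the four resulting thresholds on $\sigma^2$ genuinely coalesce into the single expression $\sigma_r^2$. The two \emph{matched} inequalities (the user-$i$ left term against the user-$i$ right term) give exactly the two numerators above over $g_{1r}^2P$ and $g_{2r}^2P$, respectively; the two \emph{cross} inequalities carry additional ratios $(1+g_{21}^2P)/(1+g_{12}^2P)$ and its reciprocal, which become trivial when the direct user-to-user gains are balanced, $g_{12}^2=g_{21}^2$ (in particular in the usual TWRC with no direct user link), so that all four thresholds collapse to $\sigma_r^2$. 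I would present the computation in this case and flag the role of that symmetry, after which the stated achievable region follows directly from Theorem~\ref{thm_rate_ori}.
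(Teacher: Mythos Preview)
Your approach---specializing Theorem~\ref{thm_rate_ori} to the Gaussian channel with the signaling in~\eqref{Gaussian_signaling}---is exactly the natural one, and it parallels how the paper itself obtains Theorem~\ref{cor:G_nobin} for CF without binning. The paper does not give an independent proof of Corollary~\ref{oricf_GTWRC}; it simply quotes the expression from Rankov and Wittneben. Your evaluations of $I(X_1;Y_2,\hat{Y}_r\mid X_2,X_r)=R_{11}(\sigma^2)$, $I(X_2;Y_1,\hat{Y}_r\mid X_1,X_r)=R_{21}(\sigma^2)$, of the two terms $I(X_r;Y_i\mid X_i)$, and of the two terms $I(\hat{Y}_r;Y_r\mid X_i,X_r,Y_i)$ are all correct.

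Your caution about the last step is also well placed. Writing out the four scalar inequalities coming from the $\max\le\min$ in~\eqref{const_2}, the two ``matched'' ones give exactly
\[
\sigma^2\ge\frac{1+g_{12}^2P+g_{r2}^2P}{g_{1r}^2P}
\quad\text{and}\quad
\sigma^2\ge\frac{1+g_{21}^2P+g_{r1}^2P}{g_{2r}^2P},
\]
while the two ``cross'' ones carry the extra factors $(1+g_{21}^2P)/(1+g_{12}^2P)$ and its reciprocal. These coalesce with the matched thresholds, and hence with the stated $\sigma_r^2$, precisely when $g_{12}^2=g_{21}^2$; in general the true threshold and $\sigma_r^2$ differ (one can check this with, e.g., $g_{12}=0$, $g_{21}$ large). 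So the discrepancy you flag is real: it is a feature of the quoted formula, not of your derivation. Since every Gaussian example in the paper uses $g_{12}=g_{21}$, and since the subsequent comparison (Theorem~\ref{thm:ori_equ_nobin}) only uses the inequalities $\sigma_r^2\ge\sigma_{e1}^2$ and $\sigma_r^2\ge\sigma_{e2}^2$ (which hold for $\sigma_r^2$ as stated), this does not affect the paper's conclusions. Presenting the computation under $g_{12}=g_{21}$ and explicitly noting the role of that symmetry, as you propose, is the right way to write it up.
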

The following result shows the condition for which the original CF achieves the same rate region as CF without binning.
\begin{thm}
\label{thm:ori_equ_nobin}
The original compress-forward scheme achieves the same rate region as compress-forward without binning for the Gaussian TWRC if and only if
\begin{align}
\label{ori_equ_nobin}
g_{1r}&=g_{2r}\nonumber\\
g_{21}^2+g_{r1}^2&=g_{12}^2+g_{r2}^2.
\end{align}
Otherwise the rate region by the original compress-forward scheme is smaller.
\end{thm}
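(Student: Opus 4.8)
The plan is to compare the two achievable regions from Corollary~\ref{cor:G_nobin} and Corollary~\ref{oricf_GTWRC} directly. CF without binning achieves the rectangle $R_1\le\min\{R_{11}(\sigma^2),R_{12}(\sigma^2)\}$, $R_2\le\min\{R_{21}(\sigma^2),R_{22}(\sigma^2)\}$ for any $\sigma^2\ge\max\{\sigma_{c1}^2,\sigma_{c2}^2\}$, while the original CF achieves $R_1\le R_{11}(\sigma^2)$, $R_2\le R_{21}(\sigma^2)$ for any $\sigma^2\ge\sigma_r^2$. The key structural observation, which I would establish first, is that both $R_{11}(\sigma^2)$ and $R_{21}(\sigma^2)$ are strictly decreasing in $\sigma^2$, so in each scheme the best corner point is obtained at the smallest admissible $\sigma^2$; the union over $\sigma^2$ of these rectangles is thus itself a rectangle (a ``staircase'' that is actually a single corner) determined by the infimal $\sigma^2$. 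Hence the original CF region equals the rectangle with corner $(R_{11}(\sigma_r^2),R_{21}(\sigma_r^2))$, and I would show the CF-without-binning region is the rectangle with corner $\big(\min\{R_{11}(\sigma_*^2),R_{12}(\sigma_*^2)\},\min\{R_{21}(\sigma_*^2),R_{22}(\sigma_*^2)\}\big)$ where $\sigma_*^2=\max\{\sigma_{c1}^2,\sigma_{c2}^2\}$ --- here I need to check that at $\sigma_*^2$ the $R_{12}$ and $R_{22}$ terms do not bind, i.e. that $\sigma_{ci}^2$ is exactly the threshold where $R_{1i}$ crosses $R_{2i}$; indeed $\sigma_{c1}^2=(1+g_{21}^2P)/(g_{2r}^2P)$ is defined precisely so that for $\sigma^2\ge\sigma_{c1}^2$ one has $R_{12}(\sigma^2)\ge R_{11}(\sigma^2)$ (and symmetrically), so at the optimum the binding terms are $R_{11}$ and $R_{21}$ evaluated at $\sigma_*^2$.

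With that reduction in hand, the theorem becomes the statement: $R_{11}(\sigma_r^2)=R_{11}(\sigma_*^2)$ and $R_{21}(\sigma_r^2)=R_{21}(\sigma_*^2)$ if and only if \eqref{ori_equ_nobin} holds. Since $R_{11}$ and $R_{21}$ are strictly monotone, and since always $\sigma_r^2\ge\sigma_*^2$ (the constraint \eqref{const_2} being tighter than \eqref{const_1}, as shown in the proof of Theorem~\ref{cor:equ_ori}), equality of \emph{both} corner coordinates forces $\sigma_r^2=\sigma_*^2$. So the core task is the algebraic equivalence
\[
\sigma_r^2=\max\{\sigma_{c1}^2,\sigma_{c2}^2\}\iff g_{1r}=g_{2r}\ \text{and}\ g_{21}^2+g_{r1}^2=g_{12}^2+g_{r2}^2 .
\]
For the ``if'' direction I would substitute $g_{1r}=g_{2r}$ and $g_{21}^2+g_{r1}^2=g_{12}^2+g_{r2}^2$ into the two expressions: with $\min\{g_{2r}^2,g_{1r}^2\}=g_{2r}^2=g_{1r}^2$, the two fractions defining $\sigma_r^2$ become $(1+g_{21}^2P+g_{r1}^2P)/(g_{2r}^2P)$ and $(1+g_{12}^2P+g_{r2}^2P)/(g_{1r}^2P)$, which are equal by the second hypothesis; meanwhile $\sigma_{c1}^2=(1+g_{21}^2P)/(g_{2r}^2P)$, $\sigma_{c2}^2=(1+g_{12}^2P)/(g_{1r}^2P)$, and I would check their max coincides with the common value of $\sigma_r^2$ --- this needs the sign of $g_{r1}^2-g_{r2}^2$ tracked carefully, and using $g_{21}^2-g_{12}^2=g_{r2}^2-g_{r1}^2$ it falls out that $\max\{\sigma_{c1}^2,\sigma_{c2}^2\}$ equals that value, so $\sigma_r^2=\sigma_*^2$ and the corners match.

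For the ``only if'' direction I would argue the contrapositive by the structure of $\sigma_r^2$: write $a=1+g_{21}^2P+g_{r1}^2P$, $c=1+g_{12}^2P+g_{r2}^2P$, $d=\min\{g_{2r}^2,g_{1r}^2\}P$, so $\sigma_r^2=\max\{a,c\}/d$, and note $\sigma_{c1}^2=(a-g_{r1}^2P)/(g_{2r}^2P)\le a/d$ and $\sigma_{c2}^2=(c-g_{r2}^2P)/(g_{1r}^2P)\le c/d$, hence $\sigma_*^2\le\sigma_r^2$ always, with equality forcing simultaneously (i) the $\min$ in $d$ to be attained by whichever of $g_{2r}^2,g_{1r}^2$ multiplies the active numerator, (ii) the subtracted terms $g_{r1}^2P$ or $g_{r2}^2P$ to vanish from the comparison --- the latter can only happen if the active numerator of $\sigma_r^2$ and of $\sigma_*^2$ agree, which chasing through the two cases ($a\ge c$ vs. $c\ge a$, and $g_{2r}\ge g_{1r}$ vs. $g_{1r}\ge g_{2r}$) pins down to exactly $g_{1r}=g_{2r}$ together with $a=c$, i.e. \eqref{ori_equ_nobin}. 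The main obstacle I anticipate is the bookkeeping across these subcases: showing that \emph{every} way of making $\sigma_r^2=\sigma_*^2$ collapses to the single symmetric condition, rather than to some larger ``boundary'' family, which is where one must use that equality of corners requires \emph{both} coordinates to match, not just one. Everything else is routine manipulation of $C(\cdot)$ and its strict monotonicity.
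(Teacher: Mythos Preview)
Your reduction contains a concrete error that makes the rest of the argument collapse. You claim that $\sigma_{c1}^2=(1+g_{21}^2P)/(g_{2r}^2P)$ is the crossover point where $R_{11}$ and $R_{12}$ meet, so that for $\sigma^2\ge\sigma_{c1}^2$ the term $R_{12}$ never binds. This is false: at $\sigma^2=\sigma_{c1}^2$ a direct computation gives $R_{12}(\sigma_{c1}^2)=C(g_{21}^2P)$ while $R_{11}(\sigma_{c1}^2)=C\!\left(g_{21}^2P+\frac{g_{r1}^2P}{1+\sigma_{c1}^2}\right)>C(g_{21}^2P)$, so $R_{12}$ \emph{is} the binding term there. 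The actual crossover occurs at the strictly larger value $\sigma_{e1}^2=(1+g_{21}^2P+g_{r1}^2P)/(g_{2r}^2P)$ (and symmetrically $\sigma_{e2}^2$), which the paper introduces precisely for this purpose.

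Because of this, the CF-without-binning region is \emph{not} the single rectangle at $\sigma_*^2=\max\{\sigma_{c1}^2,\sigma_{c2}^2\}$. On the interval between $\min\{\sigma_{e1}^2,\sigma_{e2}^2\}$ and $\max\{\sigma_{e1}^2,\sigma_{e2}^2\}$ one of $\min\{R_{11},R_{12}\},\min\{R_{21},R_{22}\}$ is increasing and the other decreasing, so the union of rectangles has a genuine Pareto boundary, not a single corner. Consequently your central algebraic equivalence, $\sigma_r^2=\max\{\sigma_{c1}^2,\sigma_{c2}^2\}\iff\eqref{ori_equ_nobin}$, is the wrong target; indeed even when \eqref{ori_equ_nobin} holds one has $\sigma_r^2=(1+g_{21}^2P+g_{r1}^2P)/(g_{2r}^2P)>\sigma_{c1}^2$, so your ``if'' direction would fail immediately. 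The paper's argument instead compares $\sigma_r^2$ to $\sigma_{e1}^2$ and $\sigma_{e2}^2$: since $\sigma_r^2\ge\sigma_{e1}^2$ and $\sigma_r^2\ge\sigma_{e2}^2$ always, and $\min\{R_{1i},R_{i2}\}$ is maximized at $\sigma_{ei}^2$, the two regions coincide if and only if $\sigma_r^2=\sigma_{e1}^2=\sigma_{e2}^2$, and solving this triple equality is what yields $g_{1r}=g_{2r}$ together with $g_{21}^2+g_{r1}^2=g_{12}^2+g_{r2}^2$.
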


\begin{rem}
Condition \eqref{ori_equ_nobin} for the Gaussian TWRC is both sufficient and necessary, and hence is stricter than the result for the DMC case in Theorem \ref{cor:equ_ori}, which is only necessary.
\end{rem}

\begin{proof}
Note that both $R_{11}(\sigma^2),R_{21}(\sigma^2)$ in \eqref{four_rate} are non-increasing and $R_{12}(\sigma^2),R_{22}(\sigma^2)$ are non-decreasing. Let $\sigma_{e1}^2$ be the intersection between $R_{11}(\sigma^2)$ and $R_{21}(\sigma^2)$ (similar for $\sigma_{e2}^2$) as:
\begin{align}
R_{11}(\sigma_{e1}^2)&=R_{12}(\sigma_{e1}^2)\nonumber\\
R_{21}(\sigma_{e2}^2)&=R_{22}(\sigma_{e2}^2).\nonumber
\end{align}
Then we can easily show that
\begin{align}
\label{e1e2}
\sigma_{e1}^2&=(1+g_{21}^2P+g_{r1}^2P)/(g_{2r}^2P)\nonumber\\
\sigma_{e2}^2&=(1+g_{12}^2P+g_{r2}^2P)/(g_{1r}^2P).
\end{align}
Therefore, $\min\{R_{11}(\sigma^2),R_{12}(\sigma^2)\}$ is maximized when $\sigma^2=\sigma_{e1}^2$, while $\min\{R_{21}(\sigma^2),R_{22}(\sigma^2)\}$ is maximized when $\sigma^2=\sigma_{e2}^2$. Noting that $\sigma_{r}^2\geq \sigma_{e1}^2, \sigma_{r}^2\geq \sigma_{e2}^2$ and $\sigma_{e1}^2\geq\sigma_{c1}^2, \sigma_{e2}^2\geq\sigma_{c2}^2$, we conclude that the rate region in Theorem \ref{cor:G_nobin} is the same as that in Corollary \ref{oricf_GTWRC} if and only if $\sigma_{r}^2=\sigma_{e1}^2=\sigma_{e2}^2$. From this equality, we obtain the result in Theorem \ref{thm:ori_equ_nobin}.
\end{proof}

\subsection{Rate region comparison with Noisy Network Coding}
We next compare the rate regions of CF without binning and noisy network coding \cite{sung2011noisy} for the Gaussian two-way relay channel. We first present the rate region achieved by noisy network coding.
\begin{cor}
\em{[Lim, Kim, El Gamal and Chung].}
\emph{
The following rate region is achievable for the Gaussian two-way relay channel with noisy network coding:
\begin{align}
\label{Gaussian_noisy}
R_1&\leq\min\{R_{11}(\sigma^2),R_{12}(\sigma^2)\}\nonumber\\
R_2&\leq\min\{R_{21}(\sigma^2),R_{22}(\sigma^2)\}
\end{align}
for some $\sigma^2>0$, where $R_{11}(\sigma^2), R_{12}(\sigma^2), R_{21}(\sigma^2), R_{22}(\sigma^2)$ are defined in (\ref{four_rate}).
}
\end{cor}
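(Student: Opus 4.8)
The plan is to obtain this Corollary as a direct specialization of the discrete-memoryless noisy network coding region of Theorem \ref{thm_noisy} to the Gaussian channel \eqref{GTWRC}, using exactly the Gaussian signaling \eqref{Gaussian_signaling}. First I would recall that the achievability proof underlying Theorem \ref{thm_noisy} carries over to channels with continuous alphabets and average power constraints by the standard discretization-and-limiting argument (as in \cite{sung2011noisy}\cite{gamal2010lecture}), so it suffices to choose a valid input distribution of the required product form $p(x_1)p(x_2)p(x_r)p(\hat y_r\mid x_r,y_r)$ and evaluate the four mutual-information bounds. I take $X_1,X_2,X_r$ independent, each $\mathcal N(0,P)$, and the test channel $\hat Y_r=Y_r+\hat Z$ with $\hat Z\sim\mathcal N(0,\sigma^2)$ independent of everything; this has the correct Markov structure and meets the power constraints.

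Next I would evaluate each term. For $I(X_1;Y_2,\hat Y_r\mid X_2,X_r)$: conditioned on $(X_2,X_r)$, the pair $(Y_2,\hat Y_r)$ reduces to two parallel observations of $X_1$, namely $g_{21}X_1+Z_2$ with noise variance $1$ and $g_{r1}X_1+Z_r+\hat Z$ with noise variance $1+\sigma^2$, which combine into an effective SNR of $g_{21}^2P+g_{r1}^2P/(1+\sigma^2)$; hence this term equals $R_{11}(\sigma^2)$. For the second bound, $I(X_1,X_r;Y_2\mid X_2)$ treats both $X_1$ and $X_r$ as signal seen through $Y_2=g_{21}X_1+g_{2r}X_r+Z_2$, giving $C(g_{21}^2P+g_{2r}^2P)$; and $I(\hat Y_r;Y_r\mid X_1,X_2,X_r,Y_2)$ collapses, since given all three inputs $Y_r$ is a constant plus $Z_r$ while $Y_2$ depends only on $Z_2$, which is independent of $(Z_r,\hat Z)$, to $I(Z_r+\hat Z;Z_r)=C(1/\sigma^2)$. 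Subtracting yields $R_{12}(\sigma^2)$. The two bounds on $R_2$ follow verbatim by interchanging the roles of users $1$ and $2$ and the corresponding channel gains ($g_{21}\!\leftrightarrow\!g_{12}$, $g_{2r}\!\leftrightarrow\!g_{1r}$, $g_{r1}\!\leftrightarrow\!g_{r2}$), producing $R_{21}(\sigma^2)$ and $R_{22}(\sigma^2)$. Since Theorem \ref{thm_noisy} imposes no constraint on the auxiliary distribution beyond this product form, $\sigma^2$ is free to range over all positive reals, which is exactly the claimed region.

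There is essentially no hard step; the only point requiring care is the bookkeeping of noise independence when conditioning, in particular verifying that conditioning additionally on $Y_2$ in $I(\hat Y_r;Y_r\mid X_1,X_2,X_r,Y_2)$ changes nothing. In the write-up I would also flag the contrast with Theorem \ref{cor:G_nobin}: the same four quantities $R_{11},R_{12},R_{21},R_{22}$ appear, but the compression-rate constraint \eqref{const_1} of CF without binning there forces $\sigma^2\ge\max\{\sigma_{c1}^2,\sigma_{c2}^2\}$, whereas here $\sigma^2>0$ is unrestricted — precisely the gap that Section \ref{sec:Gaussian_TWRC} exploits when comparing the two regions.
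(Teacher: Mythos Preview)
Your proposal is correct and matches the paper's approach: the paper does not give an explicit proof of this Corollary (it is attributed to \cite{sung2011noisy}), but the implicit derivation is exactly the specialization of Theorem~\ref{thm_noisy} to the Gaussian model \eqref{GTWRC} with the signaling \eqref{Gaussian_signaling}, precisely as done for the companion result Theorem~\ref{cor:G_nobin}. Your mutual-information evaluations are all correct, including the care taken to note that conditioning on $Y_2$ in $I(\hat Y_r;Y_r\mid X_1,X_2,X_r,Y_2)$ is vacuous because $Z_2$ is independent of $(Z_r,\hat Z)$.
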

The following result shows the condition for which CF without binning achieves the same rate region as noisy network coding.
\begin{thm}
\label{same_rate_region}
Compress-forward without binning achieves the same rate region as noisy network coding for the Gaussian TWRC if and only if
\begin{align}
\label{condition_2}
\sigma_{c1}^2&\leq\sigma_{e2}^2\nonumber\\
\sigma_{c2}^2&\leq\sigma_{e1}^2,
\end{align}
where $\sigma_{e1}^2,\sigma_{e2}^2$ are defined in (\ref{e1e2}) and $\sigma_{c1}^2,\sigma_{c2}^2$ are defined in (\ref{c1c2}). Otherwise, the rate region by compress-forward without binning is smaller.
\end{thm}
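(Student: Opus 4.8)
The goal is to compare two rate regions that share the same rate-function formulas \eqref{Gaussian_compress} but differ only in the admissible range of the compression-noise parameter $\sigma^2$: CF without binning requires $\sigma^2\ge\max\{\sigma_{c1}^2,\sigma_{c2}^2\}$, while noisy network coding allows any $\sigma^2>0$. So the whole comparison reduces to understanding, as a function of $\sigma^2$, the shape of the achievable $(R_1,R_2)$ pairs and whether the best pairs are already attained for $\sigma^2$ in the restricted range. I would reuse the monotonicity facts established in the proof of Theorem~\ref{thm:ori_equ_nobin}: $R_{11},R_{21}$ are non-increasing in $\sigma^2$, $R_{12},R_{22}$ are non-decreasing, and the crossing points are $\sigma_{e1}^2$ (where $R_{11}=R_{12}$) and $\sigma_{e2}^2$ (where $R_{21}=R_{22}$), with the explicit formulas \eqref{e1e2}. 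Consequently $R_1^{\max}(\sigma^2):=\min\{R_{11},R_{12}\}$ is a single-peaked function of $\sigma^2$ maximized at $\sigma^2=\sigma_{e1}^2$, and similarly $R_2^{\max}$ is maximized at $\sigma^2=\sigma_{e2}^2$.

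First I would make precise that the noisy-network-coding region is the union over all $\sigma^2>0$ of the rectangles $[0,R_1^{\max}(\sigma^2)]\times[0,R_2^{\max}(\sigma^2)]$ (it is down-closed because each set is a rectangle containing the origin), and likewise the CF-without-binning region is the same union restricted to $\sigma^2\ge\sigma_c^2:=\max\{\sigma_{c1}^2,\sigma_{c2}^2\}$. Hence the CF region equals the NNC region if and only if every rectangle with $0<\sigma^2<\sigma_c^2$ is already dominated (coordinatewise) by some rectangle with $\sigma^2\ge\sigma_c^2$. The natural candidate dominating points are the two "corner-optimal" choices $\sigma^2=\sigma_{e1}^2$ and $\sigma^2=\sigma_{e2}^2$; I would first record the ordering $\sigma_{ci}^2\le\sigma_{ei}^2$ for $i=1,2$ (immediate from \eqref{c1c2} and \eqref{e1e2}, since $\sigma_{ei}^2=\sigma_{ci}^2+g_{ri}^2P/(g_{(3-i)r}^2P)\ge\sigma_{ci}^2$ — careful bookkeeping of which gain sits in which denominator is needed here).

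Next comes the heart of the argument: showing $\sigma_{c1}^2\le\sigma_{e2}^2$ and $\sigma_{c2}^2\le\sigma_{e1}^2$ is exactly the condition under which no rate is lost by the restriction. For sufficiency, suppose the two inequalities in \eqref{condition_2} hold. Take any $\sigma^2<\sigma_c^2$; WLOG $\sigma_c^2=\sigma_{c1}^2$, so $\sigma^2<\sigma_{c1}^2\le\sigma_{e1}^2$ and also $\sigma^2<\sigma_{c1}^2\le\sigma_{e2}^2$. Since $R_1^{\max}$ is non-decreasing for $\sigma^2\le\sigma_{e1}^2$ and $R_2^{\max}$ is non-decreasing for $\sigma^2\le\sigma_{e2}^2$, moving $\sigma^2$ up to $\sigma_{c1}^2$ (which is $\ge\sigma_c^2$, hence admissible) only increases both coordinates, so that rectangle is dominated. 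For necessity I would argue contrapositively: if, say, $\sigma_{c2}^2>\sigma_{e1}^2$, then the value $\sigma^2=\sigma_{e1}^2$ — which lies strictly below the allowed range because $\sigma_c^2\ge\sigma_{c2}^2>\sigma_{e1}^2$ — achieves $R_1=R_1^{\max}(\sigma_{e1}^2)$, the global maximum of $R_1$ over all $\sigma^2$; any admissible $\sigma^2\ge\sigma_c^2>\sigma_{e1}^2$ gives strictly smaller $R_{11}$, hence strictly smaller $R_1^{\max}$ (one must check $R_{11}$ is the active constraint there, i.e. that at $\sigma^2$ slightly above $\sigma_{e1}^2$ we have $R_{11}<R_{12}$, which follows from the crossing structure), so the NNC rate point with $R_1$ at its peak is not achievable by CF without binning. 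The symmetric statement handles $\sigma_{c1}^2>\sigma_{e2}^2$.

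**Anticipated obstacle.** The routine monotonicity and the sufficiency direction are straightforward given the earlier proof's machinery; the delicate point is the necessity direction, specifically verifying that when the inequality \eqref{condition_2} fails, the *lost* point on the NNC boundary is genuinely unachievable rather than being recovered from a different $\sigma^2$ on the other branch. This requires being careful that the region is a union of rectangles (so domination must be by a single $\sigma^2$, coordinatewise) and that at the relevant $\sigma^2$ the binding constraint is the one that degrades — i.e. I must pin down exactly which of $R_{11},R_{12}$ (resp. $R_{21},R_{22}$) is the minimum on each side of the crossing point $\sigma_{ei}^2$. I would also double-check the edge behaviour as $\sigma^2\to 0$: there $R_{12},R_{22}\to-\infty$ (because of the $-C(1/\sigma^2)$ term), so very small $\sigma^2$ is never useful for NNC either — this actually simplifies things, since it means the "lost" interval $(0,\sigma_c^2)$ contributes new boundary points only near $\sigma_{ei}^2$, which is precisely where the comparison in \eqref{condition_2} is made.
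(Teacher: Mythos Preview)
Your proposal is correct and follows essentially the same approach as the paper: both arguments rest on the monotonicity of $R_{11},R_{21}$ (non-increasing) and $R_{12},R_{22}$ (non-decreasing), the crossing points $\sigma_{e1}^2,\sigma_{e2}^2$, and the always-true inequalities $\sigma_{ci}^2\le\sigma_{ei}^2$, reducing the question to whether $\max\{\sigma_{c1}^2,\sigma_{c2}^2\}\le\min\{\sigma_{e1}^2,\sigma_{e2}^2\}$. The paper's proof is much terser---it simply asserts that the constraint in Theorem~\ref{cor:G_nobin} is redundant if and only if this inequality holds, then simplifies---whereas you spell out the rectangle-union framing and the domination argument that justify the ``if and only if,'' including the necessity direction via the unreachability of the peak $R_1^{\max}(\sigma_{e1}^2)$; your version is the same proof with the gaps filled in.
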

\begin{figure}[t]
\centering
  \includegraphics[scale=0.6]{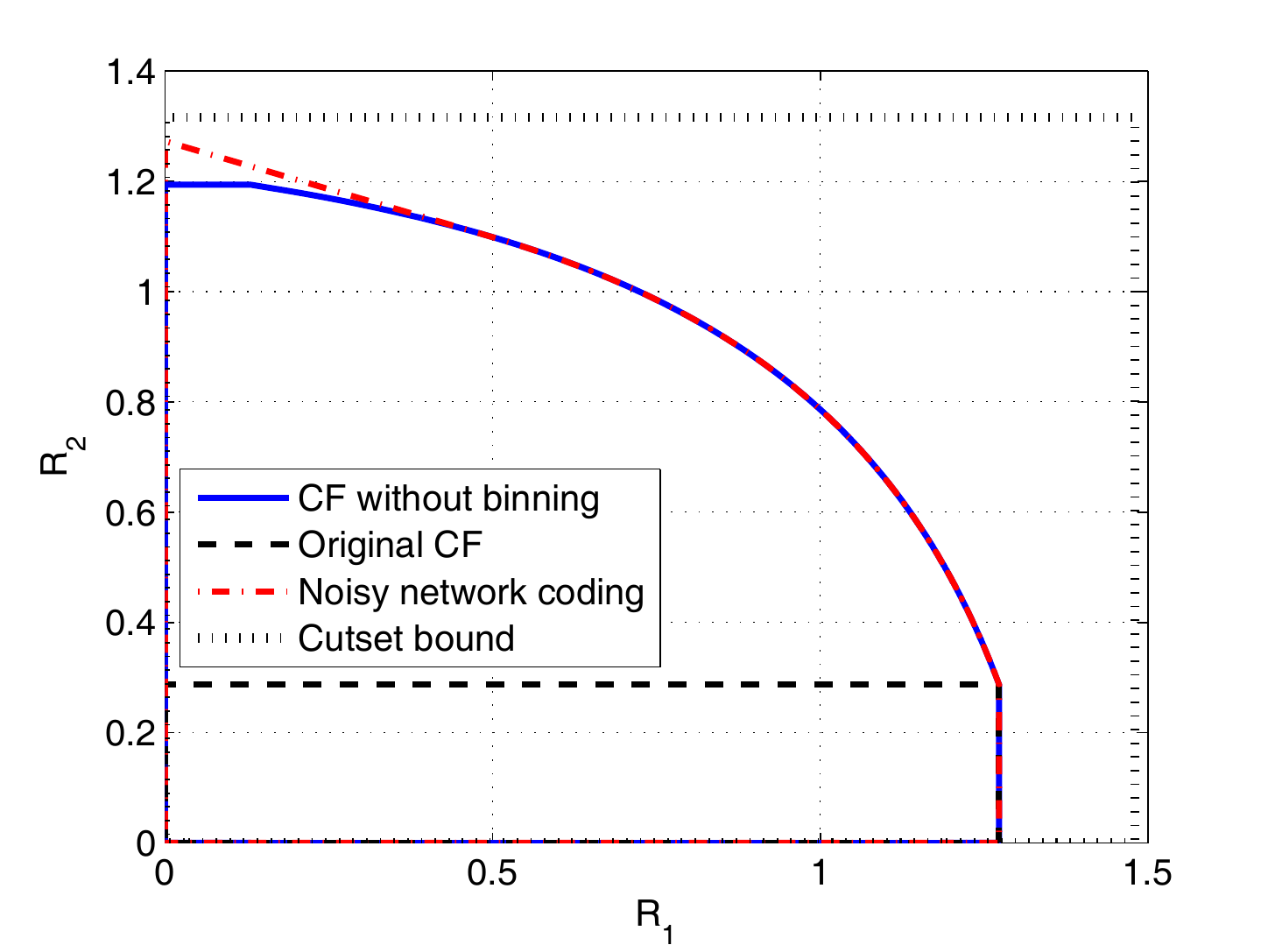}\\
  \caption{Rate regions for
    $P=20,g_{r1}=g_{1r}=2,g_{r2}=g_{2r}=0.5,g_{12}=g_{21}=0.1$.}\label{s1}
\end{figure}
\begin{figure}[t]
\centering
  \includegraphics[scale=0.6]{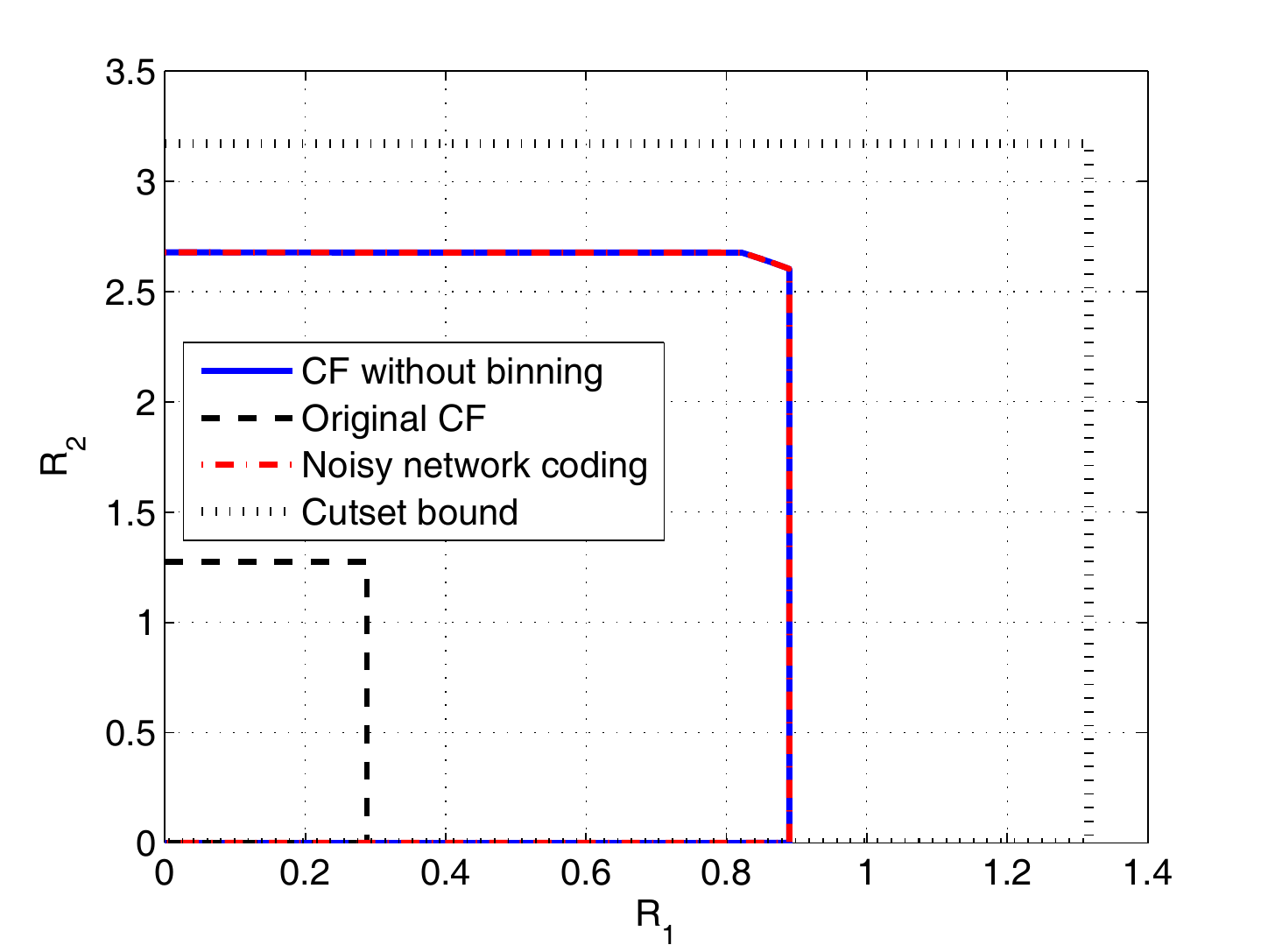}\\
  \caption{Rate regions for
    $P=20,g_{r1}=0.5,g_{1r}=2,g_{r2}=2,g_{2r}=0.5,g_{12}=g_{21}=0.1$.}\label{s5}
\end{figure}
\begin{proof}
Similar to the proof of Theorem \ref{thm:ori_equ_nobin}, note that both $R_{11}(\sigma^2),R_{21}(\sigma^2)$ are non-increasing and $R_{12}(\sigma^2),R_{22}(\sigma^2)$ are non-decreasing. Also,
\begin{align}
R_{11}(\sigma_{e1}^2)&=R_{12}(\sigma_{e1}^2)\nonumber\\
R_{21}(\sigma_{e2}^2)&=R_{22}(\sigma_{e2}^2).\nonumber
\end{align}
Therefore, the constraint in Theorem \ref{cor:G_nobin} is redundant if and only if
\begin{align}
\label{condition_1}
\max \{\sigma_{c1}^2,\sigma_{c2}^2\}\leq \min \{\sigma_{e1}^2,\sigma_{e2}^2\}.
\end{align}
Since $\sigma_{c1}^2\leq\sigma_{e1}^2$ and $\sigma_{c2}^2\leq\sigma_{e2}^2$ always hold, the above condition (\ref{condition_1}) is equivalent to (\ref{condition_2}).
\end{proof}\par

\begin{rem}
If condition \eqref{ori_equ_nobin} holds, then condition \eqref{condition_2} also holds. Thus when the original CF achieves the same rate region as CF without binning, it also achieves the same rate region as noisy network coding for the Gaussian TWRC.
\end{rem}
Figure \ref{s1} shows an asymmetric channel configuration in which CF without binning achieves a strictly larger rate region than the original CF scheme, but strictly smaller than noisy network coding. Figure \ref{s5} shows a case that CF without binning achieves the same rate region as noisy network coding and larger than CF with binning.\par

\subsection{Sum rate comparison with Noisy Network Coding}
We notice that in some cases, even though CF without binning achieves smaller rate region than noisy network coding, it still has the same sum rate. Thus we are interested in the channel conditions under which CF without binning and noisy network coding achieve the same sum rate.\par
Without loss of generality, assume $\sigma_{e1}^2 \geq \sigma_{e2}^2$ as in \eqref{e1e2}. First we analytically derive the optimal $\sigma^2$ that maximizes the sum rate of noisy network coding.
\begin{cor}
\label{cor:opti_sum_noisy}
Let $\sigma_{N}^2$ denotes the optimal $\sigma^2$ that maximizes the sum rate of the Gaussian TWRC using noisy network coding. Define $\sigma_{N1}^2$ and $\sigma_{N2}^2$ as follows:
\begin{align}
&\text{if}~\sigma_{e1}^2 \geq \sigma_{g}^2 \geq \sigma_{e2}^2,~\sigma_{N1}^2=\sigma_{g}^2;\nonumber\\
&\text{if}~\sigma_{g}^2 > \sigma_{e1}^2~\text{or}~\sigma_{g}^2 \leq 0,~\sigma_{N1}^2=\sigma_{e1}^2;\nonumber\\
\label{sigma_n1}
&\text{if}~\sigma_{e2}^2 > \sigma_{g}^2 > 0,~\sigma_{N1}^2=\sigma_{e2}^2;
\end{align}
and
\begin{align}
&\text{if}~\sigma_{e1}^2 \geq \sigma_{g}^2 \geq \sigma_{z1}^2,~\sigma_{N2}^2=\sigma_{g}^2;\nonumber \\
&\text{if}~\sigma_{g}^2 > \sigma_{e1}^2~\text{or}~\sigma_{g}^2 \leq 0,~\sigma_{N2}^2=\sigma_{e1}^2;\nonumber \\
&\text{if}~\sigma_{z1}^2 > \sigma_{g}^2 > 0,~\sigma_{N2}^2=\sigma_{z1}^2,
\label{sigma_n2}
\end{align}
where
\begin{align*}
\sigma_{g}^2&=\frac{g_{r2}^2P+g_{12}^2P+1}{g_{r2}^2P-g_{12}^2P-1},\\
\sigma_{z1}^2&=\frac{1}{g_{21}^2P+g_{2r}^2P}.
\end{align*}
Then,
\begin{align}
\text{if}~&\sigma_{z1}^2 \leq \sigma_{e2}^2,~\text{then~} \sigma_{N}^2=\sigma_{N1}^2;\nonumber \\
\text{if}~&\sigma_{z1}^2 > \sigma_{e2}^2,~\text{then}\nonumber\\
&\text{if}~R_{21}(\sigma_{e2}^2) \leq R_{12}(\sigma_{N2}^2)+R_{21}(\sigma_{N2}^2),~\sigma_{N}^2=\sigma_{N2}^2;\nonumber\\
&\text{if}~R_{21}(\sigma_{e2}^2) > R_{12}(\sigma_{N2}^2)+R_{21}(\sigma_{N2}^2),~\sigma_{N}^2=\sigma_{e2}^2.
\label{sigma_on2}
\end{align}
\end{cor}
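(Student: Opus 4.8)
The plan is to turn the stated two-dimensional optimization into a one-dimensional, \emph{unimodal} maximization over a single interval of $\sigma^2$, and then read off the clamping formulas for $\sigma_{N1}^2$ and $\sigma_{N2}^2$. First I would record the elementary facts from \eqref{four_rate}: $R_{11}(\sigma^2)$ and $R_{21}(\sigma^2)$ are strictly decreasing in $\sigma^2$, while $R_{12}(\sigma^2)$ and $R_{22}(\sigma^2)$ are strictly increasing; $R_{12}$ vanishes at $\sigma_{z1}^2=1/(g_{21}^2P+g_{2r}^2P)$ and $R_{22}$ at $\sigma_{z2}^2=1/(g_{12}^2P+g_{1r}^2P)$; $R_{11}=R_{12}$ holds exactly at $\sigma_{e1}^2$ and $R_{21}=R_{22}$ exactly at $\sigma_{e2}^2$ as in \eqref{e1e2}; and one checks directly that $\sigma_{z1}^2\le\sigma_{e1}^2$ and $\sigma_{z2}^2\le\sigma_{e2}^2$. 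Assuming $\sigma_{e1}^2\ge\sigma_{e2}^2$ without loss of generality, the two breakpoints $\sigma_{e2}^2\le\sigma_{e1}^2$ split $(0,\infty)$ into three regimes on which the sum-rate objective $\min\{R_{11},R_{12}\}+\min\{R_{21},R_{22}\}$ equals, respectively, $R_{12}+R_{22}$ (a sum of nondecreasing functions), $g_2:=R_{12}+R_{21}$, and $R_{11}+R_{21}$ (a sum of nonincreasing functions); the objective is continuous across the breakpoints because $R_{22}(\sigma_{e2}^2)=R_{21}(\sigma_{e2}^2)$ and $R_{11}(\sigma_{e1}^2)=R_{12}(\sigma_{e1}^2)$. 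Since the first regime's objective is nondecreasing and the third's is nonincreasing, each attains its maximum at the endpoint it shares with the middle regime, and that maximum value equals $g_2$ evaluated there; so the entire problem reduces to maximizing $g_2$ over a subinterval of $[\sigma_{e2}^2,\sigma_{e1}^2]$, together with one ``silent-user'' candidate to be dealt with in the asymmetric case.

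Next I would study $g_2=R_{12}+R_{21}$ directly. Differentiating in $\sigma^2$ and clearing denominators, the stationarity condition $g_2'(\sigma^2)=0$ becomes \emph{linear} in $\sigma^2$, with the single solution $\sigma_g^2=\frac{g_{r2}^2P+g_{12}^2P+1}{g_{r2}^2P-g_{12}^2P-1}$. Since $g_2'(\sigma^2)\to+\infty$ as $\sigma^2\to0^+$ and $g_2'$ can change sign at most once, $g_2$ is increasing on $(0,\infty)$ when $\sigma_g^2\le0$ (equivalently $g_{r2}^2P\le 1+g_{12}^2P$), and otherwise increases on $(0,\sigma_g^2]$ and decreases on $[\sigma_g^2,\infty)$; in particular $g_2$ is quasiconcave, so its maximum over any interval $[a,b]$ is attained at the point of $[a,b]$ closest to $\sigma_g^2$, with the convention that $\sigma_g^2\le0$ forces the right endpoint $b$. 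Applying this with $[a,b]=[\sigma_{e2}^2,\sigma_{e1}^2]$ reproduces $\sigma_{N1}^2$ of \eqref{sigma_n1}, and with $[a,b]=[\sigma_{z1}^2,\sigma_{e1}^2]$ reproduces $\sigma_{N2}^2$ of \eqref{sigma_n2}.

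It then remains to decide which interval governs, and this is where the clipping of negative rate bounds enters. If $\sigma_{z1}^2\le\sigma_{e2}^2$, then $R_{12}\ge0$ throughout $[\sigma_{e2}^2,\sigma_{e1}^2]$ and no clipping occurs in the middle regime; the first regime's best value is $R_{12}(\sigma_{e2}^2)+R_{22}(\sigma_{e2}^2)=g_2(\sigma_{e2}^2)$ and the third's is $g_2(\sigma_{e1}^2)$, both $\le\max_{[\sigma_{e2}^2,\sigma_{e1}^2]}g_2$, so the optimum is $g_2(\sigma_{N1}^2)$, i.e.\ $\sigma_N^2=\sigma_{N1}^2$. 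If instead $\sigma_{z1}^2>\sigma_{e2}^2$, then $R_{12}<0$ on $(0,\sigma_{z1}^2)$, so on $(\sigma_{e2}^2,\sigma_{z1}^2)$ the achievable sum degenerates to $R_{21}(\sigma^2)$ and on $(0,\sigma_{e2}^2]$ to $R_{22}(\sigma^2)^+$; by monotonicity both are dominated by the common value $R_{21}(\sigma_{e2}^2)=R_{22}(\sigma_{e2}^2)$ at $\sigma^2=\sigma_{e2}^2$, which is genuinely achievable since setting $R_1=0$ makes the channel a one-way relay channel for user~2 whose CF rate is still $\min\{R_{21}(\sigma^2),R_{22}(\sigma^2)\}$ for the same $\sigma^2$ (the expressions for $R_{21},R_{22}$ in \eqref{four_rate} do not involve user~1's power). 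On $[\sigma_{z1}^2,\sigma_{e1}^2]$ the objective is $g_2$, maximized at $\sigma_{N2}^2$, and the third regime contributes only $g_2(\sigma_{e1}^2)\le g_2(\sigma_{N2}^2)$. Comparing the two surviving candidates $R_{21}(\sigma_{e2}^2)$ and $g_2(\sigma_{N2}^2)=R_{12}(\sigma_{N2}^2)+R_{21}(\sigma_{N2}^2)$ gives exactly \eqref{sigma_on2}.

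The routine parts are the monotonicity checks, the linear-equation computation of $\sigma_g^2$, and the quasiconcavity argument. The part I expect to be the main obstacle is the case bookkeeping around clipping: carefully tracking, across the three regimes, where $R_{12}$ and $R_{22}$ turn negative, and justifying that the ``$R_1=0$'' operating point at $\sigma^2=\sigma_{e2}^2$ is a legitimate competitor for the sum rate in the asymmetric regime $\sigma_{z1}^2>\sigma_{e2}^2$. Once those feasibility issues are settled, quasiconcavity of $g_2$ immediately yields the clamping formulas \eqref{sigma_n1}--\eqref{sigma_on2}.
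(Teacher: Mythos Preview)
Your approach is essentially the same as the paper's: both reduce to maximizing $R_{12}+R_{21}$ over $[\sigma_{e2}^2,\sigma_{e1}^2]$ (respectively $[\sigma_{z1}^2,\sigma_{e1}^2]$) after using monotonicity to eliminate the outer regimes, then split on whether $\sigma_{z1}^2\le\sigma_{e2}^2$ to handle the clipping of $R_{12}$. The only presentational difference is that the paper writes down the Lagrangian and KKT conditions for the interval-constrained problem, whereas you argue quasiconcavity of $g_2$ directly from the fact that $g_2'(\sigma^2)=0$ is linear in $\sigma^2$ and then read off the clamped optimum; for a one-variable problem on an interval these are equivalent, and your quasiconcavity phrasing is arguably cleaner since it makes the ``stationary point or nearest endpoint'' rule immediate.
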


\begin{proof}
See Appendix \ref{appendix:cor:opti_sum_noisy}.
\end{proof}

Based on Theorem \ref{cor:G_nobin} and Corollary \ref{cor:opti_sum_noisy}, we now obtain the following result on the conditions for CF without binning to achieve the same sum rate as noisy network coding for the Gaussian TWRC.
\begin{thm}
\label{same_sum_rate}
Compress-forward without binning achieves the same sum rate as noisy network coding for the Gaussian TWRC if and only if $\sigma_{c1}^2 \leq \sigma_{N}^2$, where $\sigma_{c1}^2$ is defined in (\ref{c1c2}) and $\sigma_{N}^2$ is defined in Corollary \ref{cor:opti_sum_noisy}. Otherwise, the sum rate achieved by compress-forward without binning is smaller.
\end{thm}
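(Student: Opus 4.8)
The plan is to observe that the sum rates of both schemes are the maximum of the \emph{same} objective
$S(\sigma^2) := \min\{R_{11}(\sigma^2),R_{12}(\sigma^2)\}+\min\{R_{21}(\sigma^2),R_{22}(\sigma^2)\}$, with $R_{11},\dots,R_{22}$ as in \eqref{four_rate} and each summand effectively truncated at $0$ (since $R_1,R_2\ge 0$), the only difference being the feasible set of $\sigma^2$: noisy network coding optimizes over all $\sigma^2>0$ and attains its maximum at $\sigma_N^2$ by Corollary \ref{cor:opti_sum_noisy}, whereas compress-forward without binning optimizes over $\sigma^2\ge\max\{\sigma_{c1}^2,\sigma_{c2}^2\}$ by Theorem \ref{cor:G_nobin}. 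The ``if'' direction is then immediate: if $\sigma_N^2\ge\max\{\sigma_{c1}^2,\sigma_{c2}^2\}$ then $\sigma_N^2$ is feasible for CF without binning, so its sum rate is at least $S(\sigma_N^2)$, hence equal to the noisy-network-coding sum rate.

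Next I would reduce the feasibility condition $\sigma_N^2\ge\max\{\sigma_{c1}^2,\sigma_{c2}^2\}$ to the single inequality $\sigma_{c1}^2\le\sigma_N^2$ in the statement, by showing $\sigma_{c2}^2\le\sigma_N^2$ holds unconditionally. I would use the elementary bound $\sigma_{c2}^2\le\sigma_{e2}^2$ (immediate from \eqref{c1c2}, \eqref{e1e2} since $g_{r2}^2P\ge 0$), together with the fact that, under the normalization $\sigma_{e1}^2\ge\sigma_{e2}^2$ of Corollary \ref{cor:opti_sum_noisy}, every branch in the definitions \eqref{sigma_n1}--\eqref{sigma_on2} of $\sigma_N^2$ yields $\sigma_N^2\ge\sigma_{e2}^2$: $\sigma_N^2$ is always one of $\sigma_g^2,\sigma_{e1}^2,\sigma_{z1}^2,\sigma_{e2}^2$, and a quick case inspection shows that in the branch in which it is selected each of these is $\ge\sigma_{e2}^2$. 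Hence $\sigma_N^2\ge\sigma_{c2}^2$ always, and $\sigma_N^2\ge\max\{\sigma_{c1}^2,\sigma_{c2}^2\}\iff\sigma_N^2\ge\sigma_{c1}^2$.

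For the ``only if''/strictness direction, suppose $\sigma_N^2<\sigma_{c1}^2\le\max\{\sigma_{c1}^2,\sigma_{c2}^2\}=:\sigma_\star^2$, so the CF-without-binning feasible set is $[\sigma_\star^2,\infty)$ with $\sigma_\star^2>\sigma_N^2$. I would establish that $S$ is \emph{strictly decreasing} on $(\sigma_N^2,\infty)$, which gives $\max_{\sigma^2\ge\sigma_\star^2}S(\sigma^2)=S(\sigma_\star^2)<S(\sigma_N^2)$, i.e. a strictly smaller sum rate. The ingredients are: $R_{11},R_{21}$ are strictly decreasing and $R_{12},R_{22}$ strictly increasing in $\sigma^2$, so $\min\{R_{11},R_{12}\}$ is strictly unimodal with peak at $\sigma_{e1}^2$ and $\min\{R_{21},R_{22}\}$ with peak at $\sigma_{e2}^2$; on $[\sigma_{e1}^2,\infty)$ one has $S=R_{11}+R_{21}$, strictly decreasing; and on $[\sigma_{e2}^2,\sigma_{e1}^2]$ one has $S=R_{12}+R_{21}$, whose derivative (computed explicitly) vanishes at most once, at $\sigma_g^2$, changing sign from $+$ to $-$, so $S$ is strictly unimodal or strictly monotone there. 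Feeding in the branch-by-branch value of $\sigma_N^2$ from Corollary \ref{cor:opti_sum_noisy} (in each branch $\sigma_N^2$ is the global argmax and lies at or past the relevant peak) then yields strict decrease of $S$ for $\sigma^2>\sigma_N^2$.

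The main obstacle is this last step: matching the derivative-sign analysis of $S$ on the middle interval $[\sigma_{e2}^2,\sigma_{e1}^2]$ with the several branches of \eqref{sigma_n1}--\eqref{sigma_on2}, especially the branch $\sigma_{z1}^2>\sigma_{e2}^2$, where $R_{12}$ can be negative near $\sigma_{e2}^2$ and the $(\cdot)^+$ truncation of the first summand genuinely changes the shape of $S$ below $\sigma_{z1}^2$; there one must check that the comparison $R_{21}(\sigma_{e2}^2)$ versus $R_{12}(\sigma_{N2}^2)+R_{21}(\sigma_{N2}^2)$ used in \eqref{sigma_on2} still pins $\sigma_N^2$ so that $S$ is strictly decreasing past it. Everything else reduces to the monotonicities of $R_{11},\dots,R_{22}$ and the elementary threshold inequalities $\sigma_{c1}^2\le\sigma_{e1}^2$, $\sigma_{c2}^2\le\sigma_{e2}^2$ already recorded in the proofs of Theorems \ref{thm:ori_equ_nobin} and \ref{same_rate_region}.
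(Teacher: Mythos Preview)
Your approach is essentially the paper's: both identify that CF without binning and noisy network coding maximize the same sum-rate objective over nested feasible sets, so equality holds exactly when the unconstrained maximizer $\sigma_N^2$ lies in the CF feasible set $[\max\{\sigma_{c1}^2,\sigma_{c2}^2\},\infty)$.

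There is one genuine difference in how the reduction from $\max\{\sigma_{c1}^2,\sigma_{c2}^2\}\le\sigma_N^2$ to $\sigma_{c1}^2\le\sigma_N^2$ is handled. The paper simply \emph{assumes} $\sigma_{c1}^2\ge\sigma_{c2}^2$ (stated parenthetically as ``since we assume $\sigma_{c1}^2\ge\sigma_{c2}^2$''), treating it as a second standing normalization alongside $\sigma_{e1}^2\ge\sigma_{e2}^2$. You instead keep only the normalization $\sigma_{e1}^2\ge\sigma_{e2}^2$ from Corollary~\ref{cor:opti_sum_noisy} and prove $\sigma_{c2}^2\le\sigma_N^2$ unconditionally via $\sigma_{c2}^2\le\sigma_{e2}^2\le\sigma_N^2$, checking the last inequality branch by branch. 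Your route is cleaner, since it is not obvious that the two normalizations $\sigma_{c1}^2\ge\sigma_{c2}^2$ and $\sigma_{e1}^2\ge\sigma_{e2}^2$ can be imposed simultaneously without loss of generality.

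You also supply substantially more for the ``only if'' direction than the paper does: the paper asserts in one line that if $\sigma_N^2$ is infeasible then the sum rate is strictly smaller, without justifying that no other $\sigma^2$ in the CF feasible set matches $S(\sigma_N^2)$. Your unimodality/strict-decrease argument fills this gap. One small caution: in the subcase $\sigma_{z1}^2>\sigma_{e2}^2$ with $\sigma_N^2=\sigma_{e2}^2$, the function $S$ need not be literally \emph{monotone} decreasing on all of $(\sigma_N^2,\infty)$ (it can dip and then rise to a smaller local maximum at $\sigma_{N2}^2$); what you actually need, and what the case split in \eqref{sigma_on2} gives, is the weaker statement $S(\sigma^2)<S(\sigma_N^2)$ for every $\sigma^2>\sigma_N^2$. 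Phrasing the goal that way avoids the obstacle you flag.
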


\begin{proof}
According to Corollary \ref{cor:opti_sum_noisy}, the sum rate of noisy network coding is maximized when $\sigma^2=\sigma_{N}^2$. According to Theorem \ref{cor:G_nobin}, for CF without binning, the constraint on $\sigma^2$ is $\sigma^2 \geq \sigma_{c1}^2$ (since we assume $\sigma_{c1}^2\geq \sigma_{c2}^2$). Therefore, compress-forward without binning achieves the same sum rate as noisy network coding if and only if the constraint region $\sigma^2 \geq \sigma_{c1}^2$ contains $\sigma_{N}^2$, which is equivalent to $\sigma_{c1}^2 \leq \sigma_{N}^2$.
\end{proof}

We can apply the above result to the special case of $g_{r1}=g_{1r}, g_{r2}=g_{2r}, g_{21}=g_{12}$ as follows.
\begin{figure}[t]
\centering
  \includegraphics[scale=0.6]{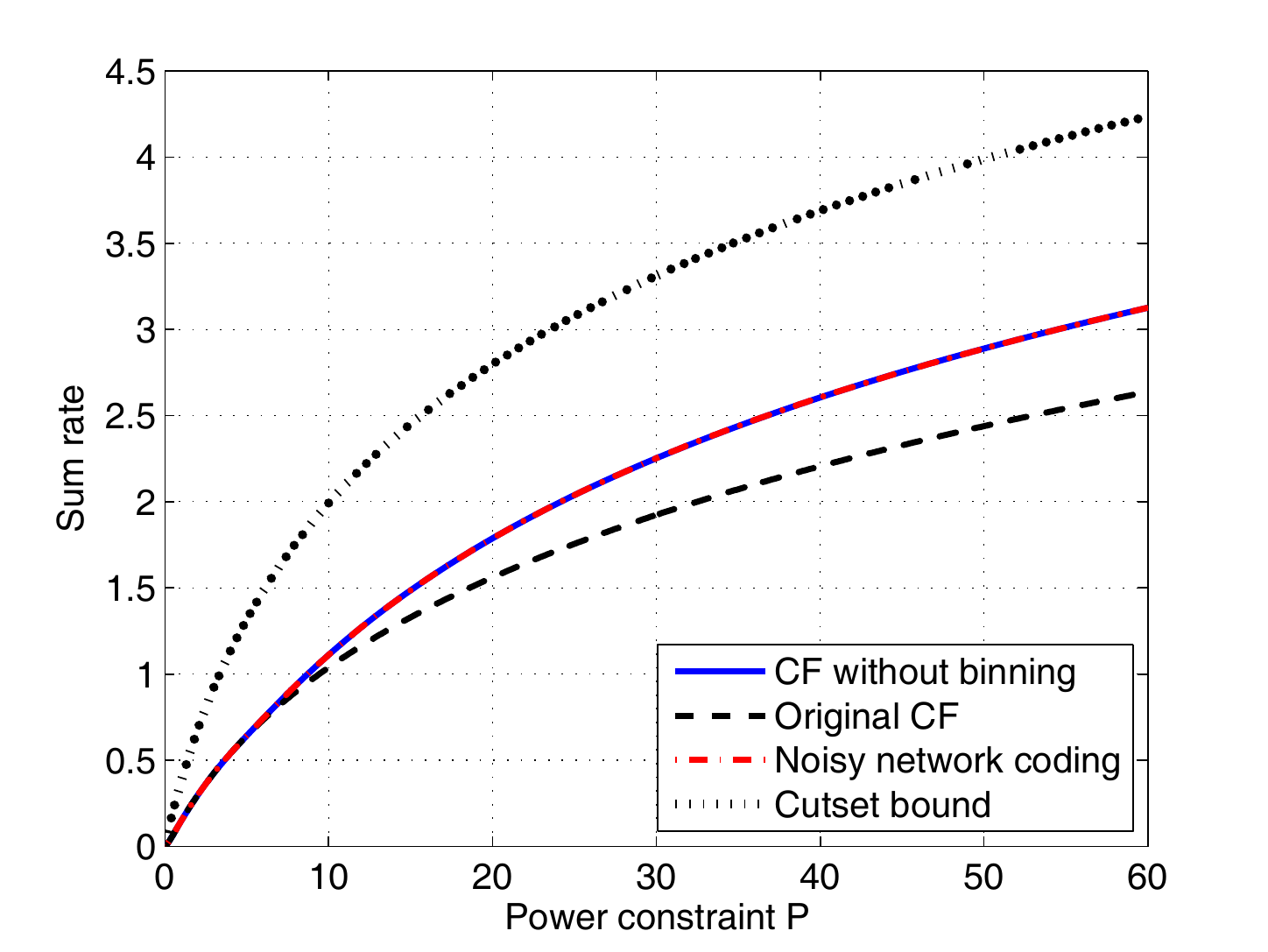}\\
  \caption{Sum rate for
    $g_{r1}=g_{1r}=2,g_{r2}=g_{2r}=0.5,g_{12}=g_{21}=0.1$.}\label{s2}
\end{figure}
\begin{cor}
\label{sumrate_allequ}
If $g_{r1}=g_{1r}, g_{r2}=g_{2r}, g_{21}=g_{12}$, then compress-forward without binning always achieves the same sum rate as noisy network coding.
\end{cor}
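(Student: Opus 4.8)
The plan is to reduce the corollary to Theorem~\ref{same_sum_rate} and then verify its condition $\sigma_{c1}^2\le\sigma_{N}^2$ by a monotonicity argument on the noisy network coding sum rate. First I would fix the labelling: by symmetry of the two users we may assume $g_{2r}\le g_{1r}$, and under the hypothesis $g_{r1}=g_{1r}$, $g_{r2}=g_{2r}$, $g_{21}=g_{12}$ (write $a=g_{21}=g_{12}$, $b=g_{1r}=g_{r1}$, $c=g_{2r}=g_{r2}$ with $c\le b$) the definitions \eqref{c1c2} and \eqref{e1e2} give $\sigma_{c1}^2=\tfrac{1+a^2P}{c^2P}\ge\sigma_{c2}^2=\tfrac{1+a^2P}{b^2P}$, and the elementary comparison already used in the proof of Theorem~\ref{thm:ori_equ_nobin} gives $\sigma_{e1}^2\ge\sigma_{e2}^2$. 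Thus the ``$\ge$'' conventions of Corollary~\ref{cor:opti_sum_noisy} and Theorem~\ref{same_sum_rate} are respected, and by Theorem~\ref{same_sum_rate} it suffices to show $\sigma_{c1}^2\le\sigma_{N}^2$, where $\sigma_{N}^2$ is a maximizer of $R_{\mathrm{sum}}(\sigma^2):=\min\{R_{11},R_{12}\}(\sigma^2)+\min\{R_{21},R_{22}\}(\sigma^2)$ with the functions of \eqref{four_rate}. The case $b\le c$ then follows by interchanging users $1$ and $2$.

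The core claim I would establish is that $R_{\mathrm{sum}}$ is strictly increasing on $(0,\sigma_{c1}^2]$; this immediately forces every maximizer, in particular $\sigma_{N}^2$, to lie in $[\sigma_{c1}^2,\infty)$. To identify $R_{\mathrm{sum}}$ there I would use that $R_{11},R_{21}$ are strictly decreasing and $R_{12},R_{22}$ strictly increasing in $\sigma^2$, and that the pairs cross exactly once (at $\sigma_{e1}^2$ for $R_{11}$ and $R_{12}$, at $\sigma_{e2}^2$ for $R_{21}$ and $R_{22}$). Since $\sigma_{e1}^2-\sigma_{c1}^2=b^2/c^2>0$, on all of $(0,\sigma_{c1}^2]$ we have $\min\{R_{11},R_{12}\}=R_{12}$, while $\min\{R_{21},R_{22}\}$ equals $R_{22}$ on $(0,\min\{\sigma_{e2}^2,\sigma_{c1}^2\}]$ and equals $R_{21}$ on $[\sigma_{e2}^2,\sigma_{c1}^2]$ (the latter possibly empty). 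On the first sub-interval $R_{\mathrm{sum}}=R_{12}+R_{22}$ is a sum of two strictly increasing functions, so the only thing left is $R_{12}+R_{21}$ on $[\sigma_{e2}^2,\sigma_{c1}^2]$.

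For that piece I would differentiate with the Gaussian signalling substituted, obtaining
\[\frac{d}{d\sigma^2}\bigl(R_{12}+R_{21}\bigr)=\frac{1}{2(\sigma^2+1)}\left(\frac{1}{\sigma^2}-\frac{c^2P}{(\sigma^2+1)(1+a^2P)+c^2P}\right),\]
whose sign is that of $(1+a^2P+c^2P)-\sigma^2\,(c^2P-1-a^2P)$: nonnegative for all $\sigma^2>0$ when $c^2P\le 1+a^2P$, and nonnegative exactly for $\sigma^2\le\sigma_{g}^2$ when $c^2P>1+a^2P$. A one-line cross-multiplication then shows $\sigma_{g}^2-\sigma_{c1}^2=\dfrac{c^4P^2+(1+a^2P)^2}{c^2P\,(c^2P-1-a^2P)}>0$ in the latter case, so $\sigma_{c1}^2<\sigma_{g}^2$ and $R_{12}+R_{21}$ is strictly increasing on all of $[\sigma_{e2}^2,\sigma_{c1}^2]$ in either case. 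Combining the two pieces (they agree at the single kink $\sigma_{e2}^2$) gives strict monotonicity of $R_{\mathrm{sum}}$ on $(0,\sigma_{c1}^2]$, hence $\sigma_{N}^2\ge\sigma_{c1}^2$, and Theorem~\ref{same_sum_rate} concludes.

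I expect the work to be bookkeeping rather than conceptual: keeping the relabelling $g_{2r}\le g_{1r}$ consistent with the three separate ``$\ge$'' conventions in the preceding results, correctly locating the single kink of $R_{\mathrm{sum}}$ inside $(0,\sigma_{c1}^2]$ (which rests on $\sigma_{c1}^2<\sigma_{e1}^2$), and carrying out the two rational-function comparisons $\sigma_{c1}^2<\sigma_{e1}^2$ and $\sigma_{c1}^2<\sigma_{g}^2$ cleanly after substituting \eqref{four_rate}, \eqref{c1c2}, \eqref{e1e2}. None of these steps is deep, but each must be performed carefully with the explicit Gaussian expressions, and the derivative-sign computation for $R_{12}+R_{21}$ is the one place where an algebra slip would be easy.
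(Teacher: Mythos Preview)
Your approach mirrors the paper's: both reduce to Theorem~\ref{same_sum_rate} and verify $\sigma_{c1}^2\le\sigma_N^2$, with the decisive algebraic step being that either $\sigma_g^2\le 0$ or $\sigma_{c1}^2<\sigma_g^2$ under the reciprocity hypothesis. Your derivative computation and the cross-multiplication $\sigma_g^2-\sigma_{c1}^2=\dfrac{(1+a^2P)^2+c^4P^2}{c^2P(c^2P-1-a^2P)}$ are correct and are exactly what the paper uses. Where you streamline is in bypassing the case structure of Corollary~\ref{cor:opti_sum_noisy} via a direct monotonicity argument on $R_{\mathrm{sum}}$.

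There is, however, a genuine gap. The quantity $\sigma_N^2$ in Theorem~\ref{same_sum_rate} is the maximizer of the \emph{clipped} sum rate $s(\sigma^2)=\max\{0,\min(R_{11},R_{12})\}+\max\{0,\min(R_{21},R_{22})\}$ (see the opening line of Appendix~\ref{appendix:cor:opti_sum_noisy}), not of your unclipped $R_{\mathrm{sum}}$. These differ wherever $R_{12}<0$, namely on $(0,\sigma_{z1}^2)$ with $\sigma_{z1}^2=1/(a^2P+c^2P)$. One always has $\sigma_{z1}^2<\sigma_{c1}^2$, but $\sigma_{z1}^2>\sigma_{e2}^2$ can occur (e.g.\ $a=c=1$, $P=1$, $b=3$ gives $\sigma_{e2}^2=\tfrac13<\sigma_{z1}^2=\tfrac12<\sigma_{c1}^2=2$). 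In that regime, on $[\sigma_{e2}^2,\sigma_{z1}^2]$ the clipped sum rate equals $R_{21}$, which is \emph{decreasing}; hence $s$ has a local maximum at $\sigma_{e2}^2$ with value $R_{21}(\sigma_{e2}^2)$, and monotonicity of the unclipped $R_{\mathrm{sum}}$ on $(0,\sigma_{c1}^2]$ does not by itself exclude $\sigma_N^2=\sigma_{e2}^2<\sigma_{c1}^2$. The paper handles this case separately by checking $R_{21}(\sigma_{e2}^2)<R_{12}(\sigma_{e1}^2)+R_{21}(\sigma_{e1}^2)$ under the reciprocity hypothesis, which forces $\sigma_N^2\in[\sigma_{z1}^2,\sigma_{e1}^2]$; then your inequality $\sigma_{c1}^2\le\sigma_g^2$ (together with $\sigma_{z1}^2<\sigma_{c1}^2$) finishes. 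You need to add this comparison, or an equivalent one, to close the argument.
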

\begin{proof}
See Appendix \ref{appendix:cor:sumrate_allequ}.
\end{proof}
Figure \ref{s2} plots the sum rates for the same channel configurations as in Figure \ref{s1}, which shows the sum rates of CF without binning and noisy network coding are the same, even though the rate regions are not.\par
As confirmed in Figures \ref{s1}, \ref{s5} and \ref{s2}, CF without binning achieves a larger rate region and sum rate than the original CF scheme in \cite{rankov2006achievable} when the channel is asymmetric for the two users. CF without binning achieves the same rate region as noisy network coding when (\ref{condition_2}) is satisfied and achieves the same sum rate for a more relaxed condition. Furthermore, it has less decoding delay which is only 1 instead of $b$ blocks.\par

\section{Fading Two-way Relay Channel}\label{sec:FTWRC}
\label{sec:Fading_TWRC}
In this section, we derive the ergodic achievable rate regions for the fading two-way relay channel and compare them between CF without binning and noisy network coding. Similar to the Gaussian channel, we also derive the analytical conditions for when CF without binning achieves the same rate region or sum rate as noisy network coding.\par
The mathematical model of the fading TWRC channel can be expressed as in \eqref{GTWRC} with
\begin{align}
g_{12}=&\frac{h_{12}}{d_{12}^{\alpha/2}},\;\;\;g_{21}=\frac{h_{21}}{d_{12}^{\alpha/2}}\nonumber\\
g_{1r}=&\frac{h_{1r}}{d_{1r}^{\alpha/2}},\;\;\;g_{2r}=\frac{h_{2r}}{d_{2r}^{\alpha/2}}\nonumber\\
g_{r1}=&\frac{h_{r1}}{d_{1r}^{\alpha/2}},\;\;\;g_{r2}=\frac{h_{r2}}{d_{2r}^{\alpha/2}}\nonumber
\end{align}
\noindent where $d_{ij}$ is the distance between nodes $i$ and $j$ for $(i,j)\in \{1,2,r\}$, $\alpha$ is the pathloss attenuation exponent,  $h_{ij}$ are zero-mean complex Gaussian random variables (which corresponds to Rayleigh fading) and  $Z_i\sim {\cal CN}(0,1)$. Each $h_{ij}$ is independent from the other fading coefficients, all $X_i$ and all
$Z_i$.



\subsection{Ergodic achievable rate region}
An achievable rate region for the fading TWRC is obtained
by incorporating the randomness of the channel gains into Theorem \ref{cor:G_nobin}. The knowledge of fading coefficients available at each  node is as follows.
\begin{itemize}
  \item
  User 1 knows $h_{12},h_{1r},h_{r1}$ and $h_{r2}$. Hence, the effective output is $\tilde{Y}_1=[Y_1,h_{12},h_{1r},h_{r1},h_{r2}]$.
  \item
  User 2 knows $h_{21},h_{2r},h_{r1}$ and $h_{r2}$. Hence, $\tilde{Y}_2=[Y_2,h_{21},h_{2r},h_{r1},h_{r2}]$.
  \item
  The relay knows $h_{r1}$ and $h_{r2}$. Hence, $\tilde{Y}_r=[Y_r,h_{r1},h_{r2}]$.
\end{itemize}
In other words, each user knows the links from other nodes to itself and all links to the relay, while the relay only needs to know the links to itself.
We assume that each block of length $n$ is long
enough to experience all fading states such that the ergodic rate can be used. Denote
\begin{align}\label{fad1}
\bar{R}_{11}=&\;E\left\{C\left(\frac{|h_{21}|^2P}{d_{12}^\alpha}+\frac{|h_{r1}|^2P}{d_{1r}^\alpha(1+\sigma^2)}\right)\right\}\nonumber\\
\bar{R}_{12}=&\;E\left\{C\left(\frac{|h_{21}|^2P}{d_{12}^\alpha}+\frac{|h_{2r}|^2P}{d_{2r}^\alpha}\right)\right\}-C(1/\sigma^2)\nonumber\\
\bar{R}_{21}=&\;E\left\{C\left(\frac{|h_{12}|^2P}{d_{12}^\alpha}+\frac{|h_{r2}|^2P}{d_{2r}^\alpha(1+\sigma^2)}\right)\right\}\nonumber\\
\bar{R}_{22}=&\;E\left\{C\left(\frac{|h_{12}|^2P}{d_{12}^\alpha}+\frac{|h_{1r}|^2P}{d_{1r}^\alpha}\right)\right\}-C(1/\sigma^2)\nonumber\\
f_1=&\;E\left\{C\left(\frac{\frac{|h_{2r}|^2}{d_{2r}^\alpha}P}{1+\frac{|h_{21}|^2}{d_{12}^\alpha}P}\right)\right\}\nonumber\\
f_2=&\;E\left\{C\left(\frac{\frac{|h_{1r}|^2}{d_{1r}^\alpha}P}{1+\frac{|h_{12}|^2}{d_{12}^\alpha}P}\right)\right\}
\end{align}
\noindent where $C(|x|^2)=\log (1+|x|^2)$ and the expectations are taken over the fading parameters $h_{i,j}$.
Then, we obtain the following theorem:
\begin{thm}
\label{thm:fading_region}
The following ergodic rate region is achievable for the fading TWRC using CF without binning and joint decoding:
\begin{align}\label{fad5}
\bar{R}_1\leq&\;\min(\bar{R}_{11},\bar{R}_{12})\nonumber\\
\bar{R}_2\leq&\;\min(\bar{R}_{21},\bar{R}_{22})
\end{align}
%
for some
\begin{align}\label{consre}
\sigma^2\geq\;\max(\bar{\sigma}_{c1}^2,\bar{\sigma}_{c_2}^2)
\end{align}
where
\begin{align}
\bar{\sigma}_{c1}^2=&\frac{1}{2^{f_1}-1}\nonumber\\
\label{fading_c1_c2}
\bar{\sigma}_{c2}^2=&\frac{1}{2^{f_2}-1}.
\end{align}
and $f_1, f_2, \bar{R}_{11}, \bar{R}_{12}, \bar{R}_{21}, \bar{R}_{22}$ are defined in \eqref{fad1}.
\end{thm}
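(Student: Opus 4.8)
The plan is to obtain Theorem~\ref{thm:fading_region} as a specialization of the discrete-memoryless region of Theorem~\ref{rate_two_way_nobin}: I treat the fading coefficients as part of the channel outputs at the nodes that observe them, choose i.i.d.\ Gaussian inputs together with a Gaussian test channel for the relay description, evaluate each mutual-information term under this choice, and average over the fading. No new coding idea is needed; the content is a computation plus two routine limiting arguments.

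First I would recast the fading TWRC as a memoryless channel with receiver (and partial relay) side information: replace $Y_1,Y_2,Y_r$ in Theorem~\ref{rate_two_way_nobin} by the effective outputs $\tilde Y_1=(Y_1,h_{12},h_{1r},h_{r1},h_{r2})$, $\tilde Y_2=(Y_2,h_{21},h_{2r},h_{r1},h_{r2})$, $\tilde Y_r=(Y_r,h_{r1},h_{r2})$, so the fading is revealed exactly where the model says it is. I then take $X_1,X_2,X_r\sim\mathcal N(0,P)$ mutually independent and $\hat Y_r=Y_r+\hat Z$ with $\hat Z\sim\mathcal{CN}(0,\sigma^2)$ independent of everything else; the relay description is a function of $(Y_r,h_{r1},h_{r2})$ and $X_r$ only, consistent with what the relay observes, and all the Markov/independence requirements of the theorem are met with $y_r$ replaced by $\tilde y_r$. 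Because the fading is independent of the inputs and appears inside each $\tilde Y_k$, every mutual information in Theorem~\ref{rate_two_way_nobin} equals the corresponding quantity conditioned on the fading, then averaged over it.

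Next I would evaluate the terms. Conditioned on a fading realization and on $(X_2,X_r)$, the pair $(Y_2,\hat Y_r)$ is a Gaussian channel seen from $X_1$ with effective gains $g_{21}$ and $g_{r1}/\sqrt{1+\sigma^2}$, so $I(X_1;\tilde Y_2,\hat Y_r\mid X_2,X_r)=\bar R_{11}$, and symmetrically $I(X_2;\tilde Y_1,\hat Y_r\mid X_1,X_r)=\bar R_{21}$. For the second branch of each minimum, $I(X_1,X_r;\tilde Y_2\mid X_2)$ averages to $E\{C(|h_{21}|^2P/d_{12}^\alpha+|h_{2r}|^2P/d_{2r}^\alpha)\}$, while $I(\hat Y_r;Y_r\mid X_1,X_2,X_r,\tilde Y_2)$ collapses to $I(Z_r+\hat Z;Z_r)=C(1/\sigma^2)$ since conditioning on all inputs strips the signal from both $Y_r$ and $Y_2$ and the residual noises are fading-independent; the difference is $\bar R_{12}$, and the symmetric computation gives $\bar R_{22}$, which yields \eqref{fad5}. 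Finally, constraint \eqref{const_1} read with the effective outputs becomes $C(1/\sigma^2)\le I(X_r;\tilde Y_1\mid X_1)$ and $C(1/\sigma^2)\le I(X_r;\tilde Y_2\mid X_2)$; since, for fixed fading, $I(X_r;Y_1\mid X_1)=C\!\left(|g_{1r}|^2P/(1+|g_{12}|^2P)\right)$ averages to $f_2$ (and the other link to $f_1$), the two constraints read $C(1/\sigma^2)\le f_1$ and $C(1/\sigma^2)\le f_2$, i.e.\ $\sigma^2\ge 1/(2^{f_1}-1)=\bar\sigma_{c1}^2$ and $\sigma^2\ge 1/(2^{f_2}-1)=\bar\sigma_{c2}^2$, which is exactly \eqref{consre}--\eqref{fading_c1_c2}.

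The main obstacle is not the algebra but the two passages to the continuous ergodic regime that make Theorem~\ref{rate_two_way_nobin} applicable here: extending the discrete-memoryless coding theorem to the Gaussian alphabet under per-symbol power constraints (the standard quantization argument), and justifying the ergodic replacement --- that a block long enough to sweep all fading states makes the empirical per-block information densities concentrate on the expectations in \eqref{fad1}, so that the joint-typicality decoder underlying Theorem~\ref{rate_two_way_nobin} still succeeds at the rates in \eqref{fad5}. Once these are granted, the evaluation above completes the proof.
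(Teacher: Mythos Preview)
Your proposal is correct and follows essentially the same route as the paper's proof: both replace $Y_1,Y_2,Y_r$ by the effective outputs $\tilde Y_1,\tilde Y_2,\tilde Y_r$ containing the known fading coefficients, apply Theorem~\ref{rate_two_way_nobin} with independent Gaussian inputs and the test channel $\hat Y_r=Y_r+\hat Z$, and then evaluate each mutual-information term by conditioning on the fading and averaging, obtaining \eqref{fad5} and the constraint $C(1/\sigma^2)\le\min(f_1,f_2)$, i.e.\ \eqref{consre}--\eqref{fading_c1_c2}. The paper merely writes out the intermediate differential entropies explicitly (its equations (\ref{dmc1})--(\ref{dmc6}) and (\ref{fad10})), but the substance is identical to what you outline.
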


\begin{proof}
See Appendix \ref{pro8}.
\end{proof}

\begin{rem}
$\sigma^2$ in (\ref{consre}) is the same as that for the Gaussian channel because it results from the compression rate at the relay; however, $\bar{\sigma}_{c1}^2$ and $\bar{\sigma}_{c_2}^2$ are different from those for the Gaussian channel because they come from the decoding at each user and hence are affected by channel fading.
\end{rem}
The following Corollary evaluates the expectation of the rate constraints in Theorem \ref{thm:fading_region}.
\begin{cor}
\label{cor:fading_express}
Evaluating the expectation of the rate constraints in Theorem \ref{thm:fading_region} leads to the following expressions:
\begin{align}\label{fad2}
\bar{R}_{11}=&\;\frac{\lambda_{u_1}\lambda_{v_1}}{4\pi\sqrt{\pi}\ln{2}(\lambda_{v_1}-\lambda_{u_1})}
\times\;\left(G_{4,6}^{6,2}\left[\left(\frac{\lambda_{u_1}}{2}\right)^2\Bigg|_{b_q}^{a_p}\right]
-G_{4,6}^{6,2}\left[\left(\frac{\lambda_{v_1}}{2}\right)^2\Bigg|_{b_q}^{a_p}\right]\right)\\
\bar{R}_{12}=&\;-C(1/\sigma^2)+\frac{\lambda_{u_1}\lambda_{v_3}}{4\pi\sqrt{\pi}\ln{2}(\lambda_{v_3}-\lambda_{u_1})}
\times\;\left(G_{4,6}^{6,2}\left[\left(\frac{\lambda_{u_1}}{2}\right)^2\Bigg|_{b_q}^{a_p}\right]
-G_{4,6}^{6,2}\left[\left(\frac{\lambda_{v_3}}{2}\right)^2\Bigg|_{b_q}^{a_p}\right]\right)\nonumber\\
\bar{R}_{21}=&\;\frac{\lambda_{u_2}\lambda_{v_2}}{4\pi\sqrt{\pi}\ln{2}(\lambda_{v_2}-\lambda_{u_2})}
\times\;\left(G_{4,6}^{6,2}\left[\left(\frac{\lambda_{u_2}}{2}\right)^2\Bigg|_{b_q}^{a_p}\right]
-G_{4,6}^{6,2}\left[\left(\frac{\lambda_{v_2}}{2}\right)^2\Bigg|_{b_q}^{a_p}\right]\right)\nonumber\\
\bar{R}_{22}=&\;-C(1/\sigma^2)+\frac{\lambda_{u_2}\lambda_{v_4}}{4\pi\sqrt{\pi}\ln{2}(\lambda_{v_4}-\lambda_{u_2})}
\times\;\left(G_{4,6}^{6,2}\left[\left(\frac{\lambda_{u_2}}{2}\right)^2\Bigg|_{b_q}^{a_p}\right]
-G_{4,6}^{6,2}\left[\left(\frac{\lambda_{v_4}}{2}\right)^2\Bigg|_{b_q}^{a_p}\right]\right)\nonumber
\end{align}
where $a_p=[-0.5,0,0,0.5],$ $b_q=[0,0.5,-0.5,0,-0.5,0]$, $G[\cdot]$ is the Meijer's G-function \cite{tabofint}, and
\begin{align}
\lambda_{u_1}=&\;\frac{d_{12}^\alpha}{E[|h_{21}|^2]P},\;\;\lambda_{u_2}=\;\frac{d_{12}^\alpha}{E[|h_{12}|^2]P},\nonumber\\
\lambda_{v_3}=&\;\frac{d_{2r}^\alpha}{E[|h_{2r}|^2]P},\;\;\;\lambda_{v_4}=\;\frac{d_{1r}^\alpha}{E[|h_{1r}|^2]P},\nonumber\\
\lambda_{v_1}=&\;\frac{d_{1r}^\alpha(1+\sigma^2)}{E[|h_{r1}|^2]P},\;\;\lambda_{v_2}=\;\frac{d_{2r}^\alpha(1+\sigma^2)}{E[|h_{r2}|^2]P}.
\end{align}
\end{cor}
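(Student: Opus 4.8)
The plan is to obtain each of the six expressions in \eqref{fad2} by first identifying the distribution of the SNR inside $C(\cdot)$ and then reducing to a single Meijer-$G$ integral. Under Rayleigh fading each $|h_{ij}|^2$ is exponentially distributed, so every quantity inside $C(\cdot)$ in \eqref{fad1} is a sum of two \emph{independent} scaled exponential random variables: for $\bar R_{11}$ it is $\frac{|h_{21}|^2P}{d_{12}^\alpha}+\frac{|h_{r1}|^2P}{d_{1r}^\alpha(1+\sigma^2)}$, and for $\bar R_{12}$ it is $\frac{|h_{21}|^2P}{d_{12}^\alpha}+\frac{|h_{2r}|^2P}{d_{2r}^\alpha}$ (and symmetrically for $\bar R_{21},\bar R_{22}$). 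Reading off the rate parameters from the means yields exactly the $\lambda_{u_i},\lambda_{v_i}$ listed in the statement, e.g.\ $\lambda_{u_1}=d_{12}^\alpha/(E[|h_{21}|^2]P)$ and $\lambda_{v_1}=d_{1r}^\alpha(1+\sigma^2)/(E[|h_{r1}|^2]P)$. Since the two rates are generically distinct, the sum has the hypoexponential density
\[
f(w)=\frac{\lambda_u\lambda_v}{\lambda_v-\lambda_u}\bigl(e^{-\lambda_u w}-e^{-\lambda_v w}\bigr),\qquad w\ge 0,
\]
which accounts for the prefactor $\frac{\lambda_u\lambda_v}{\lambda_v-\lambda_u}$ and the two-term difference structure in every line of \eqref{fad2}. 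The deterministic $-C(1/\sigma^2)$ in $\bar R_{12}$ and $\bar R_{22}$ is simply carried over from the definitions in \eqref{fad1}.

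With this, each expectation reduces to the master integral $\mathcal I(\mu)=\int_0^\infty \log(1+w)\,e^{-\mu w}\,dw$ evaluated at $\mu=\lambda_{u_i}$ and $\mu=\lambda_{v_i}$. I would evaluate $\mathcal I(\mu)$ by writing $\log(1+w)$ and $e^{-\mu w}$ as Meijer-$G$ functions and applying the standard integral-of-a-product-of-two-$G$-functions formula from \cite{tabofint}, which produces a single $G$-function of $\mu$. To match the canonical form in the corollary I would then apply the Gauss multiplication (duplication) identity for $\Gamma$ to the resulting Mellin--Barnes integrand: this doubles the number of $\Gamma$ factors, replaces $\mu$ by $(\mu/2)^2$, and generates the normalizing constant $\tfrac{1}{4\pi\sqrt{\pi}}$ (the extra $\ln 2$ being the base conversion in $C$), together with the half-integer parameter vectors $a_p=[-0.5,0,0,0.5]$ and $b_q=[0,0.5,-0.5,0,-0.5,0]$.

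Substituting into $\bar R_{11}=\int_0^\infty C(w)f(w)\,dw$ and the analogous expressions then gives each line of \eqref{fad2} as $\frac{\lambda_{u_i}\lambda_{v_i}}{4\pi\sqrt{\pi}\ln 2\,(\lambda_{v_i}-\lambda_{u_i})}\bigl(G_{4,6}^{6,2}[(\lambda_{u_i}/2)^2|\cdot]-G_{4,6}^{6,2}[(\lambda_{v_i}/2)^2|\cdot]\bigr)$ once the explicit $\lambda$'s are inserted; the quantities $f_1,f_2$ appearing in Theorem~\ref{thm:fading_region} follow from the same computation applied to $E\{C(\cdot)\}$ of a \emph{single} exponential, the degenerate (one-term) case of the above.

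The conceptual content is minimal; the real difficulty is the Meijer-$G$ bookkeeping — selecting the right $G$-representations of $\log(1+w)$ and the exponential, applying the product-integral formula without parameter or sign slips, and invoking the duplication identity precisely enough to land on the stated vectors $a_p,b_q$ and argument $(\mu/2)^2$. Checking that the convergence/contour conditions of the product-integral formula hold over the relevant range of $\mu$ (equivalently, of $\sigma^2$ and the channel gains) is the other place where care is needed.
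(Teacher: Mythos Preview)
Your proposal is correct and follows essentially the same route as the paper: identify each SNR as a sum of two independent exponentials, use the hypoexponential density $\frac{\lambda_u\lambda_v}{\lambda_v-\lambda_u}(e^{-\lambda_u w}-e^{-\lambda_v w})$, and reduce to the master integral $\int_0^\infty \log(1+w)e^{-\mu w}\,dw$. The only difference is that the paper simply quotes the identity $\int_0^\infty \log(1+\gamma)\tfrac{1}{\bar\gamma}e^{-\gamma/\bar\gamma}\,d\gamma=\tfrac{1}{4\pi\sqrt{\pi}\ln 2\,\bar\gamma}G_{4,6}^{6,2}\!\left[(2\bar\gamma)^{-2}\mid_{b_q}^{a_p}\right]$ from an external reference rather than deriving it via the product-of-$G$-functions formula and the duplication identity as you propose; also, the corollary only states the four $\bar R_{ij}$ (not six), so $f_1,f_2$ are not part of what needs to be evaluated here.
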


\begin{proof}
See Appendix \ref{pro9}.
\end{proof}

\subsection{Rate region comparison with Noisy Network Coding}
The following condition shows when noisy network coding and CF without binning achieve the same rate region for the fading TWRC.
\begin{cor}
Compress-forward without binning achieves the same rate region as noisy network region for the fading TWRC if and only if
\begin{align}\label{fad7}
\max\{\bar{\sigma}_{c1}^2,\bar{\sigma}_{c_2}^2\}\leq&\;\min\{\bar{\sigma}_{e1}^2,\bar{\sigma}_{e2}^2\},
\end{align}
where $\bar{\sigma}_{c1}^2,\bar{\sigma}_{c_2}^2$ come from \eqref{fading_c1_c2}; $\bar{\sigma}_{e1}^2$ is the intersection between $\bar{R}_{11}$ and $\bar{R}_{12}$, and $\bar{\sigma}_{e2}^2$ is the intersection between $\bar{R}_{21}$ and $\bar{R}_{22}$ as defined in \eqref{fad1}.
\end{cor}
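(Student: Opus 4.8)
The plan is to run the same argument as in the proof of Theorem~\ref{same_rate_region}, with the closed-form Gaussian rate functions replaced by their ergodic counterparts from \eqref{fad1}. The ergodic noisy network coding region for the fading TWRC is the family \eqref{fad5} with the constraint \eqref{consre} removed (i.e.\ taken over all $\sigma^2>0$), whereas the CF-without-binning region of Theorem~\ref{thm:fading_region} is the same family of rectangles restricted to $\sigma^2\ge\max\{\bar\sigma_{c1}^2,\bar\sigma_{c2}^2\}$; so the two regions coincide exactly when \eqref{consre} does not cut off any rate pair that is otherwise attainable. First I would record the monotonicity of the four rate functions: since $C\!\big(a+b/(1+\sigma^2)\big)$ is non-increasing in $\sigma^2$ and $-C(1/\sigma^2)$ is non-decreasing, and since expectation over the fading coefficients preserves monotonicity, $\bar R_{11}(\sigma^2)$ and $\bar R_{21}(\sigma^2)$ are non-increasing while $\bar R_{12}(\sigma^2)$ and $\bar R_{22}(\sigma^2)$ are non-decreasing; moreover $\bar R_{12},\bar R_{22}\to-\infty$ as $\sigma^2\to0^+$ and, as $\sigma^2\to\infty$, $\bar R_{12}$ exceeds $\bar R_{11}$ and $\bar R_{22}$ exceeds $\bar R_{21}$ because of the extra $|h_{2r}|^2$ and $|h_{1r}|^2$ terms. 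Hence the intersection abscissas $\bar\sigma_{e1}^2$ (where $\bar R_{11}=\bar R_{12}$) and $\bar\sigma_{e2}^2$ (where $\bar R_{21}=\bar R_{22}$) exist and are unique, the map $\sigma^2\mapsto\min(\bar R_{11},\bar R_{12})$ is non-decreasing on $(0,\bar\sigma_{e1}^2]$ and non-increasing afterwards, and similarly for $\min(\bar R_{21},\bar R_{22})$ about $\bar\sigma_{e2}^2$.

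The one genuinely new ingredient is the pair of unconditional inequalities $\bar\sigma_{c1}^2\le\bar\sigma_{e1}^2$ and $\bar\sigma_{c2}^2\le\bar\sigma_{e2}^2$ (these were immediate from the closed forms in the Gaussian proof, but here they need a small identity). To prove the first, write $A=|h_{21}|^2P/d_{12}^\alpha$ and $B=|h_{2r}|^2P/d_{2r}^\alpha$, so $f_1=E\{C(B/(1+A))\}$ and, by definition of $\bar\sigma_{c1}^2=1/(2^{f_1}-1)$, we have $C(1/\bar\sigma_{c1}^2)=f_1$. Therefore $\bar R_{12}(\bar\sigma_{c1}^2)=E\{C(A+B)\}-f_1=E\{C(A+B)-C(B/(1+A))\}=E\{C(A)\}$, using the pointwise identity $\log(1+A+B)-\log\frac{1+A+B}{1+A}=\log(1+A)$. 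On the other hand $\bar R_{11}(\bar\sigma_{c1}^2)=E\{C(A+|h_{r1}|^2P/(d_{1r}^\alpha(1+\bar\sigma_{c1}^2)))\}\ge E\{C(A)\}$, so $\bar R_{11}(\bar\sigma_{c1}^2)\ge\bar R_{12}(\bar\sigma_{c1}^2)$, which by the monotonicity above is equivalent to $\bar\sigma_{c1}^2\le\bar\sigma_{e1}^2$; the statement for $\bar\sigma_{c2}^2$ follows by interchanging the roles of the two users. Consequently condition \eqref{fad7} is equivalent to the ``cross'' inequalities $\bar\sigma_{c1}^2\le\bar\sigma_{e2}^2$ and $\bar\sigma_{c2}^2\le\bar\sigma_{e1}^2$, i.e.\ to $M:=\max\{\bar\sigma_{c1}^2,\bar\sigma_{c2}^2\}\le\min\{\bar\sigma_{e1}^2,\bar\sigma_{e2}^2\}$.

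With these facts, the equivalence closes exactly as in Theorem~\ref{same_rate_region}. For sufficiency, if \eqref{fad7} holds then $M\le\min\{\bar\sigma_{e1}^2,\bar\sigma_{e2}^2\}$, so both $\sigma^2\mapsto\min(\bar R_{11},\bar R_{12})$ and $\sigma^2\mapsto\min(\bar R_{21},\bar R_{22})$ are non-decreasing on $(0,M]$; hence any rate pair attained by noisy network coding at some $\sigma_0^2<M$ is also attained at $\sigma^2=M$, which satisfies \eqref{consre}, so the two regions are equal. For necessity, if \eqref{fad7} fails then (using $\bar\sigma_{c1}^2\le\bar\sigma_{e1}^2$, $\bar\sigma_{c2}^2\le\bar\sigma_{e2}^2$) without loss of generality $\bar\sigma_{c1}^2>\bar\sigma_{e2}^2$; then for every admissible $\sigma^2\ge M\ge\bar\sigma_{c1}^2>\bar\sigma_{e2}^2$ we get $\min(\bar R_{21}(\sigma^2),\bar R_{22}(\sigma^2))\le\bar R_{21}(\sigma^2)<\bar R_{21}(\bar\sigma_{e2}^2)$, the last inequality being strict because $h_{r2}$ is non-degenerate under Rayleigh fading, whereas noisy network coding attains $\bar R_2=\bar R_{21}(\bar\sigma_{e2}^2)$ together with a nonnegative $\bar R_1$ at $\sigma^2=\bar\sigma_{e2}^2$; that rate pair lies outside the CF-without-binning region, which is therefore strictly smaller. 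I expect the main obstacle to be purely in the bookkeeping of the necessity direction --- verifying that the relevant monotonicities are strict (which rests on the non-degeneracy of the fading gains) and exhibiting an explicit separating rate pair --- since everything else is a routine transcription of the Gaussian argument once the identity $\bar R_{12}(\bar\sigma_{c1}^2)=E\{C(A)\}$ is in place.
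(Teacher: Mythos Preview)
Your proposal is correct and follows the same approach as the paper --- reduce to the Gaussian argument of Theorem~\ref{same_rate_region} via monotonicity of the four $\bar R_{ij}(\sigma^2)$, then conclude that \eqref{consre} is redundant iff $\max\{\bar\sigma_{c1}^2,\bar\sigma_{c2}^2\}\le\min\{\bar\sigma_{e1}^2,\bar\sigma_{e2}^2\}$. The paper's own proof is in fact only a two-sentence sketch that asserts the monotonicity and refers back to Theorem~\ref{same_rate_region}; you have supplied substantially more detail than the paper does.

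In particular, your identity $\bar R_{12}(\bar\sigma_{c1}^2)=E\{C(A)\}$ and the resulting inequality $\bar\sigma_{c1}^2\le\bar\sigma_{e1}^2$ are genuinely new content relative to the paper: in the Gaussian case these ``automatic'' inequalities $\sigma_{c1}^2\le\sigma_{e1}^2$, $\sigma_{c2}^2\le\sigma_{e2}^2$ follow by inspection of the closed forms in \eqref{c1c2} and \eqref{e1e2}, but the paper does not verify them in the fading case. Your pointwise identity $C(A+B)-C(B/(1+A))=C(A)$ is the clean way to do this, and your explicit sufficiency/necessity arguments (including the strict-monotonicity observation needed to produce a separating rate pair) go beyond what the paper writes down. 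The only care point you already flagged --- ensuring the exhibited NNC rate pair at $\sigma^2=\bar\sigma_{e2}^2$ is actually in the region (i.e.\ that $R_1=0$ is admissible there, or else perturbing $\sigma^2$ slightly upward) --- is indeed just bookkeeping.
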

\begin{proof}
Following the same analysis for the Gaussian channel in Theorem \ref{same_rate_region}, we can see from (\ref{fad1}) that $\bar{R}_{11}$ and $\bar{R}_{21}$ are non-increasing functions of $\sigma^2$ while $\bar{R}_{12}$ and $\bar{R}_{22}$ are non-decreasing functions. Therefore, the optimal $\bar{\sigma}_{e1}^2$ and $\bar{\sigma}_{e2}^2$ are those resulting from the intersections between $\bar{R}_{11},\bar{R}_{12}$ and $\bar{R}_{21},\bar{R}_{22}$, respectively.
\end{proof}

\subsection{Sum rate comparison with Noisy Network Coding}
Similar to the Gaussian channel case, we first analytically derive the optimal $\sigma^2$ that maximizes the sum rate of noisy network coding for the fading TWRC.
\begin{cor}
\label{cor:fading_sumrate}
The optimal $\bar{\sigma}_N^2$ that maximizes the sum rate of the fading TWRC using noisy network coding is obtained similarly to the Gaussian case
in Corollary \ref{cor:opti_sum_noisy} but with the following replacements:
\begin{itemize}
\item Replacing $\sigma_{z1}^2$ with $\bar{\sigma}_{z1}^2$ where
\begin{align}\label{sigz}
\bar{\sigma}_{z1}^2=&\;\frac{1}{2^{D_1}-1}\;\text{and}\\
D_1=&\;E\left\{C\left(\frac{|h_{21}|^2P}{d_{12}^\alpha}+\frac{|h_{2r}|^2P}{d_{2r}^\alpha}\right)\right\}\nonumber
\end{align}
\item Replacing $\sigma_{e1}, \sigma_{e2}$  with $\bar{\sigma}_{e1}, \bar{\sigma}_{e2}$ as in \eqref{fad7} respectively.
\item Replacing $\sigma_g$ with $\bar{\sigma}_g$ which is obtained by solving the following equation:
\begin{align}\label{define:sigma_g}
\frac{\partial}{\partial \bar{\sigma}}(\bar{R}_{12}+\bar{R}_{21})=0,
\end{align}
where $\bar{R}_{12}, \bar{R}_{21}$ are defined in \eqref{fad1}.
\end{itemize}
\end{cor}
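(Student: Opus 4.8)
The plan is to reduce the claim to the Gaussian result of Corollary~\ref{cor:opti_sum_noisy} (Appendix~\ref{appendix:cor:opti_sum_noisy}) by rerunning that proof line by line, with the Gaussian rate functions $R_{11},R_{12},R_{21},R_{22}$ replaced by the ergodic ones $\bar{R}_{11},\bar{R}_{12},\bar{R}_{21},\bar{R}_{22}$ of \eqref{fad1} and with the compression constraint \eqref{consre} replaced by $\sigma^{2}>0$, the latter being the only restriction on $\sigma^{2}$ for the fading noisy network coding region $\min\{\bar{R}_{11},\bar{R}_{12}\}+\min\{\bar{R}_{21},\bar{R}_{22}\}$. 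First I would isolate the three ingredients the Gaussian argument really uses and check each survives. (a) Monotonicity: $\bar{R}_{11},\bar{R}_{21}$ are non-increasing and $\bar{R}_{12},\bar{R}_{22}$ non-decreasing in $\sigma^{2}$ --- this was already noted below \eqref{fad7} and holds because every integrand in \eqref{fad1} is monotone in $\sigma^{2}$ and expectation preserves monotonicity; hence $\min\{\bar{R}_{11},\bar{R}_{12}\}$ is single-peaked at the intersection $\bar{\sigma}_{e1}^{2}$ and $\min\{\bar{R}_{21},\bar{R}_{22}\}$ peaks at $\bar{\sigma}_{e2}^{2}$. (b) $\bar{R}_{12}(\sigma^{2})=D_{1}-C(1/\sigma^{2})$ (with $D_{1}$ as in \eqref{sigz}) vanishes exactly at $\sigma^{2}=\bar{\sigma}_{z1}^{2}=1/(2^{D_{1}}-1)$, is negative below it, and there the achievable $\bar{R}_{1}$ in \eqref{fad5} is forced to $0$. (c) $\bar{R}_{12}+\bar{R}_{21}$, as a function of $\sigma^{2}$, is strictly increasing and then strictly decreasing, with turning point the unique solution $\bar{\sigma}_{g}^{2}$ of \eqref{define:sigma_g}.

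Granting (a)--(c), the remaining steps mirror the Gaussian proof. Taking without loss of generality $\bar{\sigma}_{e1}^{2}\ge\bar{\sigma}_{e2}^{2}$, the sum rate equals the non-decreasing $\bar{R}_{12}+\bar{R}_{22}$ on $(0,\bar{\sigma}_{e2}^{2}]$, equals $\bar{R}_{12}+\bar{R}_{21}$ on $[\bar{\sigma}_{e2}^{2},\bar{\sigma}_{e1}^{2}]$, and equals the non-increasing $\bar{R}_{11}+\bar{R}_{21}$ on $[\bar{\sigma}_{e1}^{2},\infty)$, so the maximizer lies in $[\bar{\sigma}_{e2}^{2},\bar{\sigma}_{e1}^{2}]$ and it suffices to maximize $\bar{R}_{12}+\bar{R}_{21}$ there under $\bar{R}_{1}\ge 0$. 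If $\bar{\sigma}_{z1}^{2}\le\bar{\sigma}_{e2}^{2}$, then $\bar{R}_{12}\ge 0$ throughout and by (c) the maximizer is $\bar{\sigma}_{g}^{2}$ clipped to $[\bar{\sigma}_{e2}^{2},\bar{\sigma}_{e1}^{2}]$, which is precisely the fading counterpart $\bar{\sigma}_{N1}^{2}$ of $\sigma_{N1}^{2}$ in \eqref{sigma_n1}. If $\bar{\sigma}_{z1}^{2}>\bar{\sigma}_{e2}^{2}$, I would split the interval at $\bar{\sigma}_{z1}^{2}$: on $[\bar{\sigma}_{e2}^{2},\bar{\sigma}_{z1}^{2}]$ one has $\bar{R}_{12}<0$, hence $\bar{R}_{1}=0$ and the usable sum reduces to the non-increasing $\bar{R}_{21}$, maximized at $\bar{\sigma}_{e2}^{2}$ with value $\bar{R}_{21}(\bar{\sigma}_{e2}^{2})$; on $[\bar{\sigma}_{z1}^{2},\bar{\sigma}_{e1}^{2}]$ the maximizer of $\bar{R}_{12}+\bar{R}_{21}$ is $\bar{\sigma}_{g}^{2}$ clipped to that interval, i.e.\ the counterpart $\bar{\sigma}_{N2}^{2}$ of $\sigma_{N2}^{2}$ in \eqref{sigma_n2}, with value $\bar{R}_{12}(\bar{\sigma}_{N2}^{2})+\bar{R}_{21}(\bar{\sigma}_{N2}^{2})$. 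Comparing the two candidate values reproduces the case split \eqref{sigma_on2} with $\sigma_{z1}^{2},\sigma_{g}^{2},\sigma_{e1}^{2},\sigma_{e2}^{2}$ replaced by $\bar{\sigma}_{z1}^{2},\bar{\sigma}_{g}^{2},\bar{\sigma}_{e1}^{2},\bar{\sigma}_{e2}^{2}$ and $R_{12},R_{21}$ by $\bar{R}_{12},\bar{R}_{21}$ --- which is exactly the asserted description of $\bar{\sigma}_{N}^{2}$.

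The main obstacle I expect is item (c): in the Gaussian case the stationarity equation for $R_{12}+R_{21}$ became linear in $\sigma^{2}$ after clearing denominators, which simultaneously produced the closed form $\sigma_{g}^{2}$ and made the single-peak shape self-evident, whereas here \eqref{define:sigma_g} is an expectation of rational functions with no closed form. To handle it I would write $a(h)=|h_{12}|^{2}P/d_{12}^{\alpha}$ and $b(h)=|h_{r2}|^{2}P/d_{2r}^{\alpha}$ for the two per-realization SNR terms of $\bar{R}_{21}$ (so its integrand is $C(a(h)+b(h)/(1+\sigma^{2}))$), and show that $\tfrac{\partial}{\partial\sigma^{2}}(\bar{R}_{12}+\bar{R}_{21})$ has the same sign as $1-E\!\left[\,b(h)\sigma^{2}\,/\,\big((1+a(h)+b(h))+(1+a(h))\sigma^{2}\big)\right]$. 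Rewriting the summand as $b(h)\,/\,\big((1+a(h))+(1+a(h)+b(h))/\sigma^{2}\big)$ shows it is strictly increasing in $\sigma^{2}$, so this quantity is strictly decreasing, from $1$ at $\sigma^{2}\to 0^{+}$ to $1-E[b(h)/(1+a(h))]$ at $\sigma^{2}\to\infty$, hence has at most one zero. When $E[b(h)/(1+a(h))]\le 1$ it remains positive and $\bar{R}_{12}+\bar{R}_{21}$ is increasing on $(0,\infty)$ --- the fading analogue of ``$\sigma_{g}^{2}\le 0$'', so the clips above land on $\bar{\sigma}_{e1}^{2}$; otherwise there is a unique $\bar{\sigma}_{g}^{2}>0$ with the derivative positive before and negative after, giving the required unimodality. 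Ingredients (a) and (b) are routine, with $\bar{\sigma}_{z1}^{2}=1/(2^{D_{1}}-1)$ immediate from $\bar{R}_{12}=D_{1}-C(1/\sigma^{2})$ and the definition of $C(\cdot)$.
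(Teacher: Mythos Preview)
Your proposal is correct and follows the same approach as the paper, which simply states that ``the proof follows similar arguments as in the proof of Corollary~\ref{cor:opti_sum_noisy}.'' In fact you go further: the paper's Gaussian argument in Appendix~\ref{appendix:cor:opti_sum_noisy} applies KKT conditions without checking that the stationary point is a global maximizer, whereas your item~(c) explicitly verifies unimodality of $\bar{R}_{12}+\bar{R}_{21}$ by showing the sign of its derivative is that of a strictly decreasing function of $\sigma^{2}$, thereby supplying a step both proofs need but only yours justifies.
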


\begin{proof}
The proof follows similar arguments as in the proof of Corollary \ref{cor:opti_sum_noisy}.
\end{proof}

\begin{rem}
$\bar{\sigma}_{z1}^2$ and $\bar{\sigma}_g^2$ in \eqref{sigz} and \eqref{define:sigma_g} can be evaluated using the Meijer G-function as shown in Appendix \ref{pro10}.
\end{rem}

Based on Corollary \ref{cor:fading_sumrate} and Theorem \ref{thm:fading_region}, we can now obtain the following theorem for the condition that CF without binning achieves the same sum rate as noisy network coding for the fading TWRC.
\begin{thm}
\label{thm:fading_sumrate}
Compress-forward without binning achieves the same sum rate as noisy network coding for the fading TWRC channel if and only if
$\bar{\sigma}_N^2>\max(\bar{\sigma}_{c1}^2,\bar{\sigma}_{c2}^2)$, where $\bar{\sigma}_N^2$ is obtained as in Corollary \ref{cor:fading_sumrate} and $\bar{\sigma}_{c1}^2,\bar{\sigma}_{c2}^2$ are defined in \eqref{fading_c1_c2}.
\end{thm}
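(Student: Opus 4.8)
The plan is to mirror the proof of Theorem~\ref{same_sum_rate} for the Gaussian case, since once the relevant monotonicity facts are in hand the fading sum rate has exactly the same structure. The first step is to observe that both schemes achieve a rate region of the identical form $\bar R_1\le\min(\bar R_{11},\bar R_{12})$, $\bar R_2\le\min(\bar R_{21},\bar R_{22})$ with $\bar R_{11},\dots,\bar R_{22}$ as in \eqref{fad1}; the only difference is the admissible range of the compression noise variance $\sigma^2$. Noisy network coding allows any $\sigma^2>0$, whereas CF without binning requires $\sigma^2\ge\max(\bar\sigma_{c1}^2,\bar\sigma_{c2}^2)$ by Theorem~\ref{thm:fading_region}. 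Hence the sum rate of either scheme is $S(\sigma^2):=\min(\bar R_{11}(\sigma^2),\bar R_{12}(\sigma^2))+\min(\bar R_{21}(\sigma^2),\bar R_{22}(\sigma^2))$ optimized over the respective $\sigma^2$-set, and the whole statement reduces to comparing $\sup_{\sigma^2>0}S$ with $\sup_{\sigma^2\ge\max(\bar\sigma_{c1}^2,\bar\sigma_{c2}^2)}S$.

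Next I would invoke Corollary~\ref{cor:fading_sumrate}, which already identifies the maximizer $\bar\sigma_N^2$ of $S$ over $\sigma^2>0$, together with the monotonicity facts used there: in \eqref{fad1}, $\bar R_{11}$ and $\bar R_{21}$ are non-increasing in $\sigma^2$ while $\bar R_{12}$ and $\bar R_{22}$ are non-decreasing. From these facts and the case analysis in the proof of Corollary~\ref{cor:fading_sumrate} (equivalently Corollary~\ref{cor:opti_sum_noisy}), I would establish that $S$ is unimodal in $\sigma^2$: non-decreasing on $(0,\bar\sigma_N^2]$ and non-increasing on $[\bar\sigma_N^2,\infty)$, with the maximum attained only at $\bar\sigma_N^2$ (or on a set whose supremum is $\bar\sigma_N^2$). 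This is the key qualitative property that lets the left boundary of the CF feasible region govern the comparison.

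The conclusion then splits into two cases. If $\bar\sigma_N^2>\max(\bar\sigma_{c1}^2,\bar\sigma_{c2}^2)$, then $\bar\sigma_N^2$ lies in the CF feasible set, so CF without binning attains $S(\bar\sigma_N^2)$, exactly the noisy network coding sum rate, and the two coincide. If instead $\bar\sigma_N^2\le\max(\bar\sigma_{c1}^2,\bar\sigma_{c2}^2)$, then the CF feasible set $[\max(\bar\sigma_{c1}^2,\bar\sigma_{c2}^2),\infty)$ lies entirely to the right of $\bar\sigma_N^2$, where $S$ is non-increasing by unimodality; thus CF's best choice is $\sigma^2=\max(\bar\sigma_{c1}^2,\bar\sigma_{c2}^2)$, giving $S(\max(\bar\sigma_{c1}^2,\bar\sigma_{c2}^2))<S(\bar\sigma_N^2)$, strictly smaller by uniqueness of the maximizer. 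Combining the two cases yields the stated ``if and only if''.

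The main obstacle I anticipate is verifying the unimodality of $S$ rigorously. Each inner minimum is itself unimodal, switching its active branch at $\bar\sigma_{e1}^2$ and $\bar\sigma_{e2}^2$ respectively, so on the relevant intervals $S$ is of the form (non-increasing)$+$(non-increasing), (non-increasing)$+$(non-decreasing), or (non-decreasing)$+$(non-decreasing); one must check that the single interior stationary point---captured by $\bar\sigma_g$ from \eqref{define:sigma_g} together with $\bar\sigma_{z1}^2$ from \eqref{sigz}---does not produce a second local maximum. This is precisely the content of the case distinctions in Corollary~\ref{cor:fading_sumrate}, so the work reduces to reading unimodality off that analysis rather than redoing it; the ergodic expectations in \eqref{fad1} are irrelevant here, since only monotonicity in $\sigma^2$, not the closed-form Meijer-$G$ evaluations of Corollary~\ref{cor:fading_express}, is needed.
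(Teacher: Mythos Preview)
Your proposal is correct and follows essentially the same route as the paper, which simply says ``Similar to the proof of Theorem~\ref{same_sum_rate}'' and in turn argues that the NNC sum rate is maximized at $\bar\sigma_N^2$ (by Corollary~\ref{cor:fading_sumrate}) while CF without binning is restricted to $\sigma^2\ge\max(\bar\sigma_{c1}^2,\bar\sigma_{c2}^2)$, so equality of the sum rates holds iff the optimizer lies in the CF feasible set. Your treatment is more careful than the paper's in that you explicitly isolate the unimodality of $S(\sigma^2)$ needed for the ``only if'' direction; the paper leaves this implicit in the case analysis of Corollary~\ref{cor:opti_sum_noisy}/\ref{cor:fading_sumrate}.
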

\begin{proof}
Similar to the proof of Theorem \ref{same_sum_rate}.
\end{proof}

\subsection{Geometrical comparison}
In this section, we present the geometrical channel conditions for which CF without binning achieves the same rate region or sum rate as noisy network coding for both the Gaussian and fading two-way relay channels.\par
In the following discussions, the corresponding fading channel parameters ${g}_{ij}^2$ are related to the Gaussian channel parameters $\bar{g}_{ij}^2$ by the following equation:
\begin{align}
{g}_{ij}=h_{ij}\cdot \bar{g}_{ij}.
\end{align}
where $h_{ij} \sim \mathcal{N}(0,1)$ and $\bar{g}_{ij}=d_{ij}^{-\alpha/2}$ represents the pathloss components for $i,j\in\{1,2,r\}$.


\subsubsection{Equal pathloss ($\bar{g}_{r1}=\bar{g}_{1r}, \bar{g}_{r2}=\bar{g}_{2r}, \bar{g}_{21}=\bar{g}_{12}$)}
This channel configuration models large networks which consider the pathloss as a function of node locations.

%

Figure \ref{fig:case1} shows when CF without binning achieves the same rate region or sum rate as noisy network coding. We fix user 1 at point $(-0.5,0)$ and user 2 at point $(0.5,0)$. Let the pathloss exponent be $\alpha=2$.
For the Gaussian channel, Region 1 illustrates relay locations for which CF without binning achieves the same rate region as noisy network coding, while Region 2 (the whole plane) is for the same sum rate. Regions 3 and 4 are the corresponding ones for the fading channel. We see that in the equal pathloss case, CF without binning achieves the same sum rate as noisy network coding on the whole plane for Gaussian channels (as shown in Corollary \ref{sumrate_allequ}). Both the same-sum-rate and same-rate-region regions for the fading channel appear to be larger than those for the Gaussian channel.




\begin{figure}[t]
    \begin{center}
    \includegraphics[width=0.48\textwidth,height=75mm]{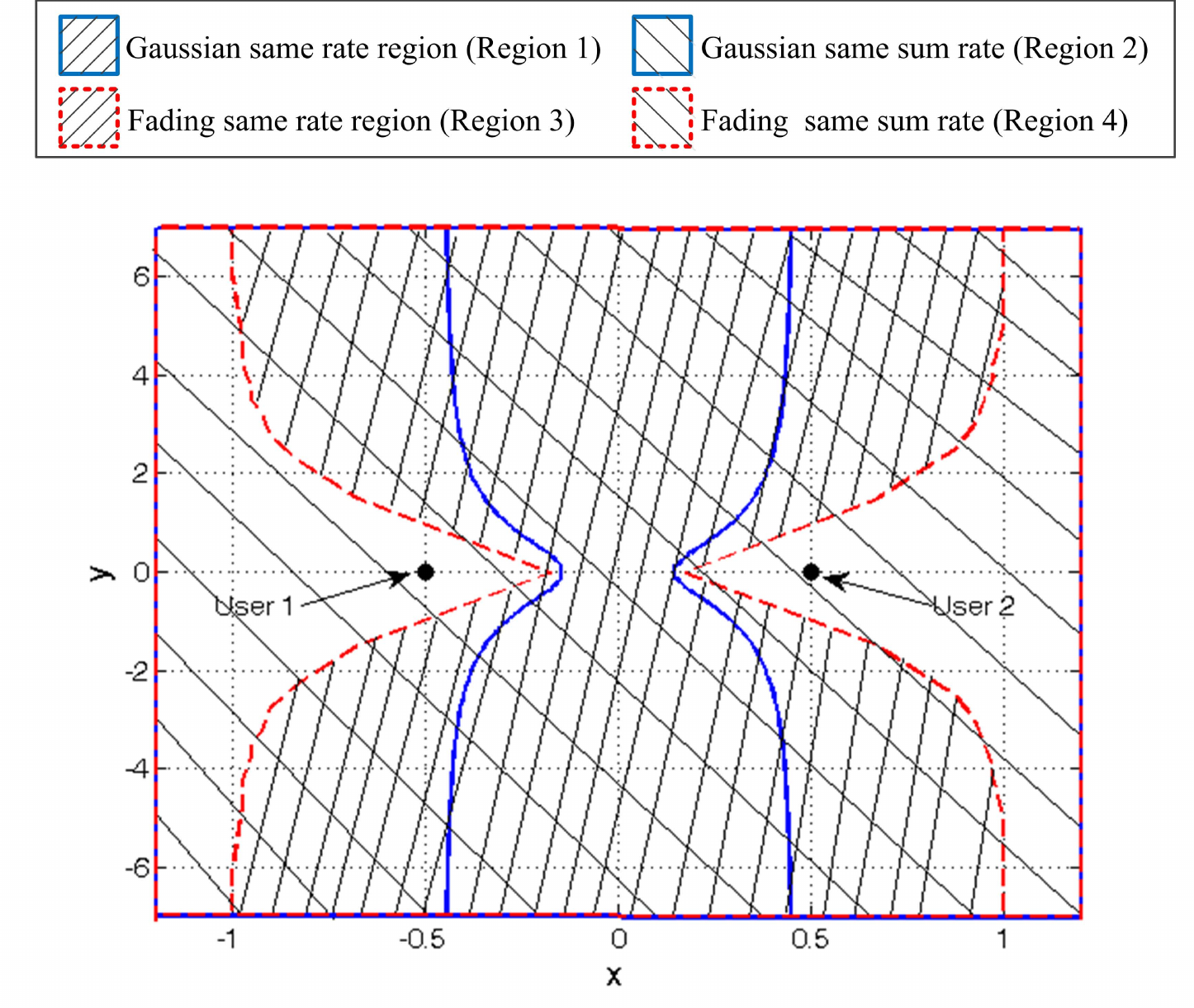}
    \caption{Locations of the relay for $\bar{g}_{r1}=\bar{g}_{1r}, \bar{g}_{r2}=\bar{g}_{2r}, \bar{g}_{21}=\bar{g}_{12}, P=10$ and $\alpha=2$.} \label{fig:case1}
    \end{center}
\end{figure}

\begin{figure}[t]
    \begin{center}
    \includegraphics[width=0.48\textwidth]{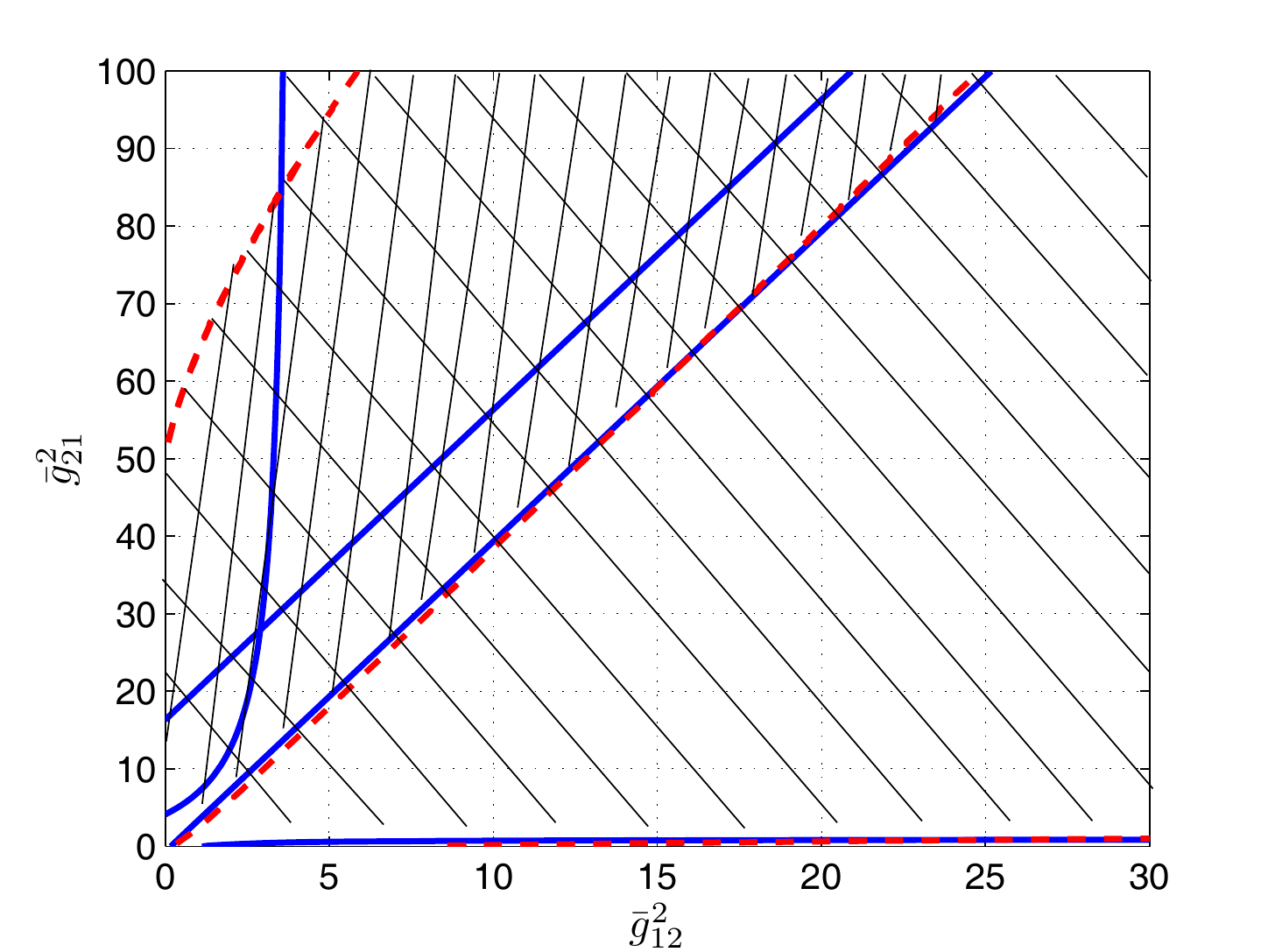}
    \caption{Region of $(\bar{g}_{12}^2,\bar{g}_{21}^2)$ for
    $\bar{g}_{r1}=\bar{g}_{1r}=1$ and $\bar{g}_{r2}=\bar{g}_{2r}=2$.} \label{fig:case2}
    \end{center}
\end{figure}

\begin{figure}[t]
    \begin{center}
    \includegraphics[width=0.48\textwidth]{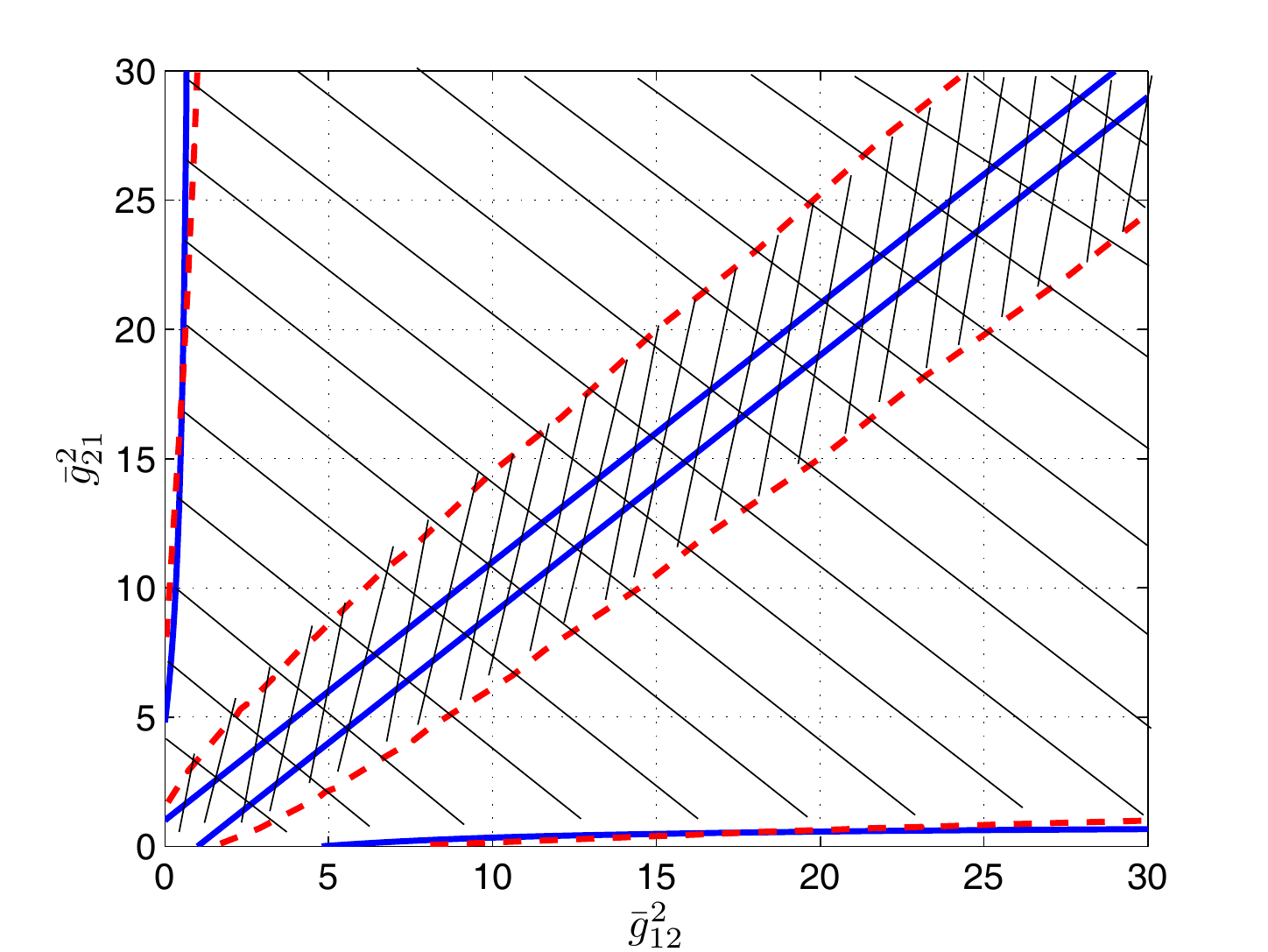}
    \caption{Region of $(\bar{g}_{12}^2,\bar{g}_{21}^2)$  for
    $\bar{g}_{r1}=\bar{g}_{r2}=1$ and $\bar{g}_{1r}=\bar{g}_{2r}=2$.}
    \label{fig:case3}
    \end{center}
\end{figure}

\subsubsection{Reciprocity ($\bar{g}_{r1}=\bar{g}_{1r}, \bar{g}_{r2}=\bar{g}_{2r}$)}
This channel applies when the nodes use time-division duplexing for which channel reciprocity holds.\par
Let $\bar{g}_{r1}=\bar{g}_{1r}=1, \bar{g}_{r2}=\bar{g}_{2r}=2$ and $P=10$. In Figure \ref{fig:case2}, Regions 1, 2, 3, 4 correspond to the same representations as those in Figure \ref{fig:case1}. In this case, CF without binning achieves the same sum rate as noisy network coding for the majority of $(\bar{g}_{12}^2,\bar{g}_{21}^2)$ for both Gaussian and fading channels.

\subsubsection{Uplink-downlink ($\bar{g}_{r1}=\bar{g}_{r2}, \bar{g}_{1r}=\bar{g}_{2r}$)}
This channel configuration applies if nodes use frequency-division duplexing, in which the uplink pathlosses can be similar and the downlinks can be similar.\par
Let $\bar{g}_{r1}=\bar{g}_{r2}=1, \bar{g}_{1r}=\bar{g}_{2r}=2$ and $P=10$. In Figure \ref{fig:case3}, Regions 1, 2, 3, 4 also correspond to the same representations as those in Figure \ref{fig:case1}. In this case, CF without binning also achieves the same sum rate as noisy network coding also for the majority of $(\bar{g}_{12}^2,\bar{g}_{21}^2)$ for both Gaussian and fading channels.  The region in which CF without binning achieves the same rate region as noisy network coding is larger for the fading channel than for the Gaussian channel.


\section{Conclusion}\label{Con}
We have analyzed impacts of the 3 new ideas for compress-forward (CF)
in noisy network coding (NNC): no Wyner-Ziv binning, relaxed
simultaneous decoding and message repetition. For the one-way relay
channel (single-source single-destination), no Wyner-Ziv binning alone
without message repetition can achieve the NNC rate at shorter
decoding delay. The extra requirement is joint (but not necessarily
relaxed) decoding at the destination of both the message and
compression index. However, for multi-source multi-destination
networks such as the two-way relay channel (TWRC), we find that all 3
techniques together is necessary to achieve the NNC rate region.\par

Under certain channel conditions, CF without Wyner-Ziv binning and
without message repetition can achieve the same rate region or sum
rate as NNC. Deriving these conditions explicitly for the Gaussian and
fading TWRC, we find that the difference in rate regions between CF
without binning and NNC is often small. Results also show that these
two schemes achieve the same sum rate for a majority of channel
configurations. The configurations for the same sum rates are more relaxed than for the
same rate regions and also are usually more relaxed for fading than for Gaussian channels.
These configurations may be useful in practice because
of the short encoding and decoding delay in CF without binning.

\appendices
\section{Proof of Theorem \ref{thm_rate}}
\label{appendix:thm_rate}
Assume without loss of generality that $m_j=1$ and $k_{j-1}=k_{j}=1$. First define the following two events:
\begin{align}
\mathcal{E}'_{1j}(k_j)&=\big\{(x_r^n(k_j),y^n(j+1))\in T^{(n)}_{\epsilon}\big\} \nonumber \\
\mathcal{E}'_{2j}(m_j,k_j)&=\big\{(x^n({m}_j),x_r^n(1),\hat{y}_r^n({k}_j|1),y^n(j))\in T^{(n)}_{\epsilon}\big\}. \nonumber
\end{align}
Then the decoder makes an error only if one or more of the following events occur:
\begin{align}
\mathcal{E}_{1j}=&\big\{(\hat{y}_r^n(k_j|1),y^n_r(j),x^n_r(1))\notin T^{(n)}_{\epsilon}~\textrm{for all}~k_j\in[1:2^{nR_r}]\big\} \nonumber \\
\mathcal{E}_{2j}=&\big\{(x_r^n(1),y^n(j+1))\notin T^{(n)}_{\epsilon}~\textrm{or}~(x^n(1),x_r^n(1),\hat{y}_r^n(1|1),y^n(j))\notin T^{(n)}_{\epsilon} \big\} \nonumber\\
\mathcal{E}_{3j}=&\big\{\mathcal{E}'_{1j}(k_j)~\textrm{and}~\mathcal{E}'_{2j}(1,k_j)~\textrm{for some}~k_j\neq 1\big\} \\
\mathcal{E}_{4j}=&\big\{\mathcal{E}'_{1j}(1)~\textrm{and}~\mathcal{E}'_{2j}(m_j,1)~\textrm{for some}~m_j\neq 1\big\} \nonumber \\
\mathcal{E}_{5j}=&\big\{\mathcal{E}'_{1j}(k_j)~\textrm{and}~\mathcal{E}'_{2j}(m_j,k_j)~\textrm{for some}~m_j\neq 1,k_j\neq 1\big\}.\nonumber
\end{align}
Thus, the probability of error is bounded as
\begin{align}
P\{\hat{m_j}\neq 1,\hat{k_j}\neq 1\}\leq P(\mathcal{E}_{1j})+P(\mathcal{E}_{2j}\cap\mathcal{E}_{1j}^c)+P(\mathcal{E}_{3j})+P(\mathcal{E}_{4j})+P(\mathcal{E}_{5j}). \nonumber
\end{align}
By the covering lemma \cite{gamal2010lecture}, $P(\mathcal{E}_{1j})\rightarrow 0$ as $n\rightarrow \infty$, if
\begin{align}
\label{rate_oneway_1}
R_r > I(\hat{Y}_r;Y_r|X_r).
\end{align}
By the conditional typicality lemma \cite{cover2006elements}, the second term $P(\mathcal{E}_{2j}\cap\mathcal{E}_{1j}^c) \rightarrow 0$ as $n\rightarrow \infty$.\par
For the rest of the error events, the decoded joint distribution for each event is as follows.
\begin{align*}
\mathcal{E}'_{1j}(k_j)&: p(x_r)p(y)\\
\mathcal{E}'_{2j}(1,k_j)&: p(x)p(x_r)p(\hat{y}_r|x_r)p(y|x,x_r)\\
\mathcal{E}'_{2j}(m_{j},1)&: p(x)p(x_r)p(y,\hat{y}_r|x_r)\\
\mathcal{E}'_{2j}(m_j,k_j)&: p(x)p(x_r)p(\hat{y}_r|x_r)p(y|x_r),
\end{align*}
where $m_j \neq 1, k_j \neq 1$. Using standard joint typicality analysis with the above decoded joint distribution, we can obtain a bound on each error event as follows.\par
Consider $\mathcal{E}_{3j}$:
\begin{align}
P(\mathcal{E}_{3j})&=P(\cup _{k_j\neq 1}(\mathcal{E}'_{1j}(k_j)\cap \mathcal{E}'_{2j}(1,k_j))) \nonumber \\
&\leq \sum _{k_j\neq 1} P(\mathcal{E}'_{1j}(k_j)) \times P(\mathcal{E}'_{2j}(1,k_j)). \nonumber
\end{align}
Note that if $k_j\neq 1$, then
\begin{align}
&P(\mathcal{E}'_{1j}(k_j))\leq 2^{-n(I(X_r;Y)-\delta(\epsilon))}; \nonumber \\
&P(\mathcal{E}'_{2j}(1,k_j))\nonumber \\
&= \sum _{(x,x_r,\hat{y}_r,y)\in T^{(n)}_{\epsilon}} p(x)p(x_r)p(\hat{y}_r|x_r)p(y|x,x_r) \nonumber \\
&\leq 2^{n(H(X,X_r,\hat{Y}_r,Y)-H(X)-H(X_r)-H(\hat{Y}_r|X_r)-H(Y|X,X_r)-3\delta(\epsilon))} \nonumber \\
&= 2^{n(H(\hat{Y}_r,Y|X,X_r)-H(\hat{Y}_r|X_r)-H(Y|X,X_r)-3\delta(\epsilon))} \nonumber \\
&=2^{n(H(\hat{Y}_r|Y,X,X_r)-H(\hat{Y}_r|X_r)-3\delta(\epsilon))} \nonumber \\
&=2^{-n(I(\hat{Y}_r;X,Y|X_r)-3\delta(\epsilon))}. \nonumber
\end{align}
Therefore
\begin{align}
P(\mathcal{E}_{3j})\leq 2^{nR_r}\cdot 2^{-n(I(X_r;Y)-\delta(\epsilon))} \cdot 2^{-n(I(\hat{Y}_r;X,Y|X_r)-3\delta(\epsilon))} \nonumber
\end{align}
which tends to zero as $n\rightarrow \infty $ if
\begin{align}
\label{rate_oneway_2}
R_r \leq I(X_r;Y)+I(\hat{Y}_r;X,Y|X_r).
\end{align}

Next consider $\mathcal{E}_{4j}$:
\begin{align}
P(\mathcal{E}_{4j})&=P(\cup _{m_{j}\neq 1}(\mathcal{E}'_{1j}(1)\cap \mathcal{E}'_{2j}(m_{j},1))) \nonumber \\
&\leq \sum _{m_{j}\neq 1} P(\mathcal{E}'_{2j}(m_{j},1)). \nonumber
\end{align}
Note that if $m_{j}\neq 1$, then
\begin{align}
&P(\mathcal{E}'_{2j}(m_{j},1))\nonumber \\
&= \sum _{(x,x_r,\hat{y}_r,y)\in T^{(n)}_{\epsilon}} p(x)p(x_r)p(y,\hat{y}_r|x_r) \nonumber \\
&\leq 2^{n(H(X,X_r,\hat{Y}_r,Y)-H(X)-H(X_r)-H(Y,\hat{Y}_r|X_r)-2\delta(\epsilon))} \nonumber \\
&= 2^{n(H(\hat{Y}_r,Y|X,X_r)-H(Y,\hat{Y}_r|X_r)-2\delta(\epsilon))} \nonumber \\
&=2^{-n(I(X;Y,\hat{Y}_r|X_r)-2\delta(\epsilon))}. \nonumber
\end{align}
Therefore
\begin{align}
P(\mathcal{E}_{4j})\leq &2^{nR}\cdot 2^{-n(I(X;Y,\hat{Y}_r|X_r)-2\delta(\epsilon))} \nonumber
\end{align}
which tends to zero as $n\rightarrow \infty $ if
\begin{align}
\label{rate_oneway_3}
R\leq I(X;Y,\hat{Y}_r|X_r).
\end{align}

Now consider $\mathcal{E}_{5j}$:
\begin{align}
P(\mathcal{E}_{5j})&=P(\cup _{m_j\neq 1} \cup _{k_j\neq 1}(\mathcal{E}'_{1j}(k_j)\cap \mathcal{E}'_{2j}(m_j,k_j))) \nonumber \\
&\leq \sum _{m_j\neq 1} \sum _{k_j\neq 1} P(\mathcal{E}'_{1j}(k_j)) \times P(\mathcal{E}'_{2j}(m_j,k_j)). \nonumber
\end{align}
Note that if $m_j\neq 1$ and $k_j\neq 1$, then
\begin{align}
&P(\mathcal{E}'_{1j}(k_j))\leq 2^{-n(I(X_r;Y)-\delta(\epsilon))}; \nonumber \\
&P(\mathcal{E}'_{2j}(m_j,k_j))\nonumber \\
&= \sum _{(x,x_r,\hat{y}_r,y)\in T^{(n)}_{\epsilon}} p(x)p(x_r)p(\hat{y}_r|x_r)p(y|x_r) \nonumber \\
&\leq 2^{n(H(X,X_r,\hat{Y}_r,Y)-H(X)-H(X_r)-H(\hat{Y}_r|X_r)-H(Y|X_r)-3\delta(\epsilon))} \nonumber \\
&= 2^{n(H(\hat{Y}_r,Y|X,X_r)-H(\hat{Y}_r|X_r)-H(Y|X_r)-3\delta(\epsilon))} \nonumber \\
&=2^{n(H(Y|X,X_r)+H(\hat{Y}_r|Y,X,X_r)-H(\hat{Y}_r|X_r)-H(Y|X_r)-3\delta(\epsilon))} \nonumber \\
&=2^{-n(I(X;Y|X_r)+I(\hat{Y}_r;X,Y|X_r)-3\delta(\epsilon))}. \nonumber
\end{align}
Therefore
\begin{align}
P(\mathcal{E}_{5j})\leq\;2^{nR}\cdot 2^{nR_r}\cdot 2^{-n(I(X_r;Y)-\delta(\epsilon))}\cdot 2^{-n(I(X;Y|X_r)+I(\hat{Y}_r;X,Y|X_r)-3\delta(\epsilon))} \nonumber
\end{align}
which tends to zero as $n\rightarrow \infty $ if
\begin{align}
\label{rate_oneway_4}
R+R_r &\leq I(X_r;Y)+I(X;Y|X_r)+I(\hat{Y}_r;X,Y|X_r) \nonumber \\
&= I(X,X_r;Y)+I(\hat{Y}_r;X,Y|X_r).
\end{align}

Combining the bounds (\ref{rate_oneway_1}) and (\ref{rate_oneway_4}), we have
\begin{align}
\label{rate_oneway_5}
R&\leq I(X,X_r;Y)+I(\hat{Y}_r;X,Y|X_r)-I(\hat{Y}_r;Y_r|X_r) \nonumber \\
&=I(X,X_r;Y)+I(\hat{Y}_r;X,Y|X_r)-I(\hat{Y}_r;Y_r,X,Y|X_r) \nonumber \\
&=I(X,X_r;Y)-I(\hat{Y}_r;Y_r|X,X_r,Y).
\end{align}
Combining (\ref{rate_oneway_1}), (\ref{rate_oneway_2}) and (\ref{rate_oneway_3}), (\ref{rate_oneway_5}), we obtain the result of Theorem \ref{thm_rate}.

\setcounter{subsubsection}{0}
\section{Proof of Theorem \ref{rate_two_way_nobin}}
\label{appendix:rate_two_way_nobin}
Assume without loss of generality that $m_{1,j}=m_{1,j+1}=m_{2,j}=1$ and $k_{j-1}=k_{j}=1$. First define the following two events:
\begin{align}
\mathcal{E}'_{1j}(k_j)&=\big\{(x_r^n(k_j),y_1^n(j+1),x_1^n(1))\in T^{(n)}_{\epsilon}\big\} \nonumber \\
\mathcal{E}'_{2j}(m_{2,j},k_j)&=\big\{(x_2^n({m}_{2,j}),x_r^n(1),\hat{y}_r^n({k}_j|1),y_1^n(j),x_1^n(1))\in T^{(n)}_{\epsilon}\big\}. \nonumber
\end{align}
Then the decoder makes an error only if one or more of the following events occur:
\begin{align}
\mathcal{E}_{1j}=&\big\{(\hat{y}_r^n(k_j|1),y^n_r(j),x^n_r(1))\notin T^{(n)}_{\epsilon}~\textrm{for all}~k_j\in[1:2^{nR_r}]\big\} \nonumber \\
\mathcal{E}_{2j}=&\big\{(x_r^n(1),y_1^n(j+1),x_1^n(1))\notin T^{(n)}_{\epsilon}~\textrm{or}~ (x_2^n(1),x_r^n(1),\hat{y}_r^n(1|1),y_1^n(j),x_1^n(1))\notin T^{(n)}_{\epsilon} \big\} \nonumber\\
\label{error:two-way_nobin}
\mathcal{E}_{3j}=&\big\{\mathcal{E}'_{1j}(k_j)~\textrm{and}~\mathcal{E}'_{2j}(1,k_j)~\textrm{for some}~k_j\neq 1\big\} \\
\mathcal{E}_{4j}=&\big\{\mathcal{E}'_{1j}(1)~\textrm{and}~\mathcal{E}'_{2j}(m_{2,j},1)~\textrm{for some}~m_{2,j}\neq 1\big\} \nonumber \\
\mathcal{E}_{5j}=&\big\{\mathcal{E}'_{1j}(k_j)~\textrm{and}~\mathcal{E}'_{2j}(m_{2,j},k_j)~\textrm{for some}~m_{2,j}\neq 1,k_j\neq 1\big\}.\nonumber
\end{align}
Thus, the probability of error is bounded as
\begin{align}
P\{\hat{m}_{2,j}\neq 1,\hat{k_j}\neq 1\}\leq & P(\mathcal{E}_{1j})+P(\mathcal{E}_{2j}\cap\mathcal{E}_{1j}^c)+P(\mathcal{E}_{3j})+P(\mathcal{E}_{4j})+P(\mathcal{E}_{5j}). \nonumber
\end{align}
By the covering lemma, $P(\mathcal{E}_{1j}) \rightarrow 0$ as $n\rightarrow \infty$, if
\begin{align}
\label{rate_1}
R_r > I(\hat{Y}_r;Y_r|X_r).
\end{align}
By the conditional typicality lemma, the second term $P(\mathcal{E}_{2j}\cap\mathcal{E}_{1j}^c)\rightarrow 0$ as $n\rightarrow \infty$. \par
For the rest of the error events, the decoded joint distribution for each event is as follows.
\begin{align}
\mathcal{E}'_{1j}(k_j)&: p(x_1)p(x_r)p(y_1|x_1)\nonumber\\
\mathcal{E}'_{2j}(1,k_j)&: p(x_1)p(x_2)p(x_r)p(\hat{y}_r|x_r)p(y_1|x_2,x_r,x_1)\nonumber\\
\mathcal{E}'_{2j}(m_{2,j},1)&: p(x_1)p(x_2)p(x_r)p(y_1,\hat{y}_r|x_r,x_1)\nonumber\\
\mathcal{E}'_{2j}(m_{2,j},k_j)&: p(x_1)p(x_2)p(x_r)p(\hat{y}_r|x_r)p(y_1|x_r,x_1),
\label{joint_distribution_1}
\end{align}
where $m_{2,j} \neq 1, k_j \neq 1$. Using standard joint typicality analysis with the above decoded joint distribution, we can obtain a bound on each error event as follows.\par
Consider $\mathcal{E}_{3j}$:
\begin{align}
P(\mathcal{E}_{3j})&=P(\cup _{k_j\neq 1}(\mathcal{E}'_{1j}(k_j)\cap \mathcal{E}'_{2j}(1,k_j))) \nonumber \\
&\leq \sum _{k_j\neq 1} P(\mathcal{E}'_{1j}(k_j)) \times P(\mathcal{E}'_{2j}(1,k_j)). \nonumber
\end{align}
Note that if $k_j\neq 1$, then
\begin{align}
&P(\mathcal{E}'_{1j}(k_j))\leq 2^{-n(I(X_r;Y_1|X_1)-\delta(\epsilon))}; \nonumber \\
&P(\mathcal{E}'_{2j}(1,k_j))\nonumber \\
&= \sum _{(x_1,x_2,x_r,\hat{y}_r,y_1)\in T^{(n)}_{\epsilon}} p(x_1)p(x_2)p(x_r)p(\hat{y}_r|x_r)p(y_1|x_2,x_r,x_1)\nonumber\\
&\leq 2^{n(H(X_1,X_2,X_r,\hat{Y}_r,Y_1)-H(X_1)-H(X_2)-H(X_r)-H(\hat{Y}_r|X_r)-H(Y_1|X_2,X_r,X_1)-4\delta(\epsilon))}\nonumber\\
&= 2^{n(H(\hat{Y}_r,Y_1|X_2,X_r,X_1)-H(\hat{Y}_r|X_r)-H(Y_1|X_2,X_r,X_1)-4\delta(\epsilon))} \nonumber \\
&=2^{n(H(\hat{Y}_r|Y_1,X_2,X_r,X_1)-H(\hat{Y}_r|X_r)-4\delta(\epsilon))} \nonumber \\
&=2^{-n(I(\hat{Y}_r;X_1,X_2,Y_1|X_r)-4\delta(\epsilon))}.\nonumber
\end{align}
Therefore
\begin{align}
P(\mathcal{E}_{3j})&\leq 2^{nR_r}\cdot 2^{-n(I(X_r;Y_1|X_1)-\delta(\epsilon))}\cdot 2^{-n(I(\hat{Y}_r;X_1,X_2,Y_1|X_r)-4\delta(\epsilon))} \nonumber
\end{align}
which tends to zero as $n\rightarrow \infty $ if
\begin{align}
\label{rate_Rr}
R_r \leq I(X_r;Y_1|X_1)+I(\hat{Y}_r;X_1,X_2,Y_1|X_r).
\end{align}

Next consider $\mathcal{E}_{4j}$:
\begin{align}
P(\mathcal{E}_{4j})&=P(\cup _{m_{2,j}\neq 1}(\mathcal{E}'_{1j}(1)\cap \mathcal{E}'_{2j}(m_{2,j},1))) \nonumber \\
&\leq \sum _{m_{2,j}\neq 1} P(\mathcal{E}'_{2j}(m_{2,j},1)). \nonumber
\end{align}
Note that if $m_{2,j}\neq 1$, then
\begin{align}
&P(\mathcal{E}'_{2j}(m_{2,j},1))\nonumber \\
&= \sum _{(x_1,x_2,x_r,\hat{y}_r,y_1)\in T^{(n)}_{\epsilon}} p(x_1)p(x_2)p(x_r)p(y_1,\hat{y}_r|x_r,x_1) \nonumber \\
&\leq 2^{n(H(X_1,X_2,X_r,\hat{Y}_r,Y_1)-H(X_1)-H(X_2)-H(X_r)-H(Y_1,\hat{Y}_r|X_r,X_1)-3\delta(\epsilon))}\nonumber\\
&= 2^{n(H(\hat{Y}_r,Y_1|X_2,X_r,X_1)-H(Y_1,\hat{Y}_r|X_r,X_1)-3\delta(\epsilon))} \nonumber \\
&=2^{-n(I(X_2;Y_1,\hat{Y}_r|X_r,X_1)-3\delta(\epsilon))}. \nonumber
\end{align}
Therefore
\begin{align}
P(\mathcal{E}_{4j})\leq &2^{nR_2}\cdot 2^{-n(I(X_2;Y_1,\hat{Y}_r|X_r,X_1)-3\delta(\epsilon))} \nonumber
\end{align}
which tends to zero as $n\rightarrow \infty $ if
\begin{align}
\label{rate_twoway_5}
R_2\leq I(X_2;Y_1,\hat{Y}_r|X_r,X_1).
\end{align}

Now consider $\mathcal{E}_{5j}$:
\begin{align}
P(\mathcal{E}_{5j})&=P(\cup _{m_{2,j}\neq 1} \cup _{k_j\neq 1}(\mathcal{E}'_{1j}(k_j)\cap \mathcal{E}'_{2j}(m_{2,j},k_j))) \nonumber \\
&\leq \sum _{m_{2,j}\neq 1} \sum _{k_j\neq 1} P(\mathcal{E}'_{1j}(k_j)) \times P(\mathcal{E}'_{2j}(m_{2,j},k_j)). \nonumber
\end{align}
Note that if $m_{2,j}\neq 1$ and $k_j\neq 1$, then
\begin{align}
&P(\mathcal{E}'_{1j}(k_j))\leq 2^{-n(I(X_r;Y_1|X_1)-\delta(\epsilon))}; \nonumber \\
&P(\mathcal{E}'_{2j}(m_{2,j},k_j))\nonumber \\
&= \sum _{(x_1,x_2,x_r,\hat{y}_r,y_1)\in T^{(n)}_{\epsilon}} p(x_1)p(x_2)p(x_r)p(\hat{y}_r|x_r)p(y_1|x_r,x_1) \nonumber \\
&\leq 2^{n(H(X_1,X_2,X_r,\hat{Y}_r,Y_1)-H(X_1)-H(X_2)-H(X_r)-H(\hat{Y}_r|X_r)-H(Y_1|X_r,X_1)-4\delta(\epsilon))}\nonumber \\
&= 2^{n(H(\hat{Y}_r,Y_1|X_2,X_r,X_1)-H(\hat{Y}_r|X_r)-H(Y_1|X_r,X_1)-4\delta(\epsilon))} \nonumber \\
&=2^{n(H(Y_1|X_2,X_r,X_1)+H(\hat{Y}_r|Y_1,X_2,X_r,X_1)-H(\hat{Y}_r|X_r)-H(Y_1|X_r,X_1)-4\delta(\epsilon))}\nonumber \\
&=2^{-n(I(X_2;Y_1|X_1,X_r)+I(\hat{Y}_r;X_1,X_2,Y_1|X_r)-4\delta(\epsilon))}. \nonumber
\end{align}
Therefore
\begin{align}
P(\mathcal{E}_{5j})\leq 2^{nR_2}\cdot 2^{nR_r}\cdot 2^{-n(I(X_r;Y_1|X_1)-\delta(\epsilon))}\cdot 2^{-n(I(X_2;Y_1|X_1,X_r)+I(\hat{Y}_r;X_1,X_2,Y_1|X_r)-4\delta(\epsilon))} \nonumber
\end{align}
which tends to zero as $n\rightarrow \infty $ if
\begin{align}
\label{rate_4}
R_2+R_r &\leq I(X_r;Y_1|X_1)+I(X_2;Y_1|X_1,X_r)
+I(\hat{Y}_r;X_1,X_2,Y_1|X_r) \nonumber \\
&= I(X_2,X_r;Y_1|X_1)+I(\hat{Y}_r;X_1,X_2,Y_1|X_r).
\end{align}

Combining the bounds (\ref{rate_1}) and (\ref{rate_4}), we have
\begin{align}
\label{rate_7}
R_2 &\leq I(X_2,X_r;Y_1|X_1)+I(\hat{Y}_r;X_1,X_2,Y_1|X_r)
-I(\hat{Y}_r;Y_r|X_r)\nonumber\\
&=I(X_2,X_r;Y_1|X_1)-I(\hat{Y}_r;Y_r|X_1,X_2,X_r,Y_1).
\end{align}
Combining the bounds (\ref{rate_twoway_5}) and (\ref{rate_7}), we obtain the rate constraint on $R_2$ in Theorem \ref{rate_two_way_nobin}. Similar for $R_1$. From (\ref{rate_1}) and (\ref{rate_Rr}), we obtain constraint (\ref{const_1}).

\setcounter{subsubsection}{0}
\section{Proof of Theorem \ref{thm_rate_ori}}
\label{appendix:thm_rate_ori}
We use a block coding scheme in which each user sends $b-1$ messages over $b$ blocks of $n$ symbols each.\par
\subsubsection{Codebook generation}
 Fix $p(x_1)p(x_2)p(x_r)p(\hat{y}_r|x_r,y_r)$. We randomly and independently generate a codebook for each block $j\in [1:b]$
\begin{itemize}
\item Independently generate $2^{nR_1}$ sequences $x_1^n(m_{1,j})\sim\prod ^n_{i=1}p(x_{1i}) $, where $m_{1,j} \in [1:2^{nR_1}]$.
\item Independently generate $2^{nR_2}$ sequences $x_2^n(m_{2,j})\sim\prod ^n_{i=1}p(x_{2i}) $, where $m_{2,j} \in [1:2^{nR_2}]$.
\item Independently generate $2^{nR_r}$ sequences $x_r^n(q_{j-1})\sim\prod ^n_{i=1}p(x_{ri})$, where $q_{j-1} \in [1:2^{nR_r}]$.
\item For each $q_{j-1}\in [1:2^{nR_r}]$, independently generate $2^{n(R_r+R'_r)}$  sequences $\hat{y}^n_r(q_j,r_j|q_{j-1})\sim\prod ^n_{i=1}p(\hat{y}_{ri}|x_{ri}(q_{j-1}))$. Throw them into $2^{nR_r}$ bins, where $q_j\in [1:2^{nR_r}]$ denotes the bin index and $r_j\in [1:2^{nR'_r}]$ denotes the relative index within a bin.
\end{itemize}

\subsubsection{Encoding}
 User 1 and user 2 transmits $x_1^n(m_{1,j})$ and $x_2^n(m_{2,j})$ in block $j$ separately. The relay, upon receiving $y^n_r(j)$, finds an index pair $(q_j,r_j)$ such that
 \begin{align*}
(\hat{y}_r^n(q_j,r_j|q_{j-1}),y^n_r(j),x^n_r(q_{j-1}))\in T^{(n)}_{\epsilon}.
\end{align*}
Assume that such $(q_j,r_j)$ is found, the relay sends $x^n_r(q_j)$ in block $j+1$. By the covering lemma, the probability that there is no such $(q_j,r_j)$ tends to 0 as $n\rightarrow \infty$ if
\begin{align}
\label{rate_twoway_ori_1}
R_r+R'_r > I(\hat{Y}_r;Y_r|X_r).
\end{align}

\subsubsection{Decoding}
At the end of block $j$, user 1 determines the unique $\hat{q}_{j-1}$ such that
\begin{align*}
(x^n_r(\hat{q}_{j-1}),y^n_1(j),x^n_1(m_{1,j})) \in T^{(n)}_{\epsilon}.
\end{align*}
Similar for user 2. Both succeed with high probability if
\begin{align}
\label{rate_twoway_ori_2}
R_r\leq \min \{I(X_r;Y_1|X_1),I(X_r;Y_2|X_2)\}.
\end{align}
Then user 1 uses $y^n_1(j-1)$ to determine the unique $\hat{r}_{j-1}$ such that
\begin{align*}
(\hat{y}^n_r(\hat{q}_{j-1},\hat{r}_{j-1}|\hat{q}_{j-2}),y^n_1(j-1),x^n_r(\hat{q}_{j-2}),x^n_1(m_{1,j-1}))\in T^{(n)}_{\epsilon}.
\end{align*}
Similar for user 2. Both succeed with high probability if
\begin{align}
\label{rate_twoway_ori_3}
R'_r\leq \min \{I(\hat{Y}_r;X_1,Y_1|X_r),I(\hat{Y}_r;X_2,Y_2|X_r)\}.
\end{align}
Finally, user 1 uses both $y^n_1(j-1)$ and $\hat{y}^n_r(j-1)$ to determine the unique $\hat{m}_{2,j-1}$ such that
\begin{align*}
(x^n_2(\hat{m}_{2,j-1}),\hat{y}^n_r(\hat{q}_{j-1},\hat{r}_{j-1}|\hat{q}_{j-2}),y^n_1(j-1),x^n_r(\hat{q}_{j-2}), x^n_1(m_{1,j-1}))\in T^{(n)}_{\epsilon}.
\end{align*}
Similar for user 2. Both succeed with high probability if
\begin{align}
R_1& \leq I(X_1;Y_2,\hat{Y}_r|X_2,X_r)\nonumber\\
\label{rate_twoway_ori_5}
R_2& \leq I(X_2;Y_1,\hat{Y}_r|X_1,X_r).
\end{align}
The constraint (\ref{const_2}) comes from combining (\ref{rate_twoway_ori_1}), (\ref{rate_twoway_ori_2}) and (\ref{rate_twoway_ori_3}).

\begin{rem}
\label{rem:error}
We note an error in the proof in \cite{rankov2006achievable}. In \cite{rankov2006achievable}, when user 1 determines the unique $\hat{r}_{j-1}$, it is stated that it succeeds with high probability if
\begin{align}
\label{wrong}
R'_r\leq I(\hat{Y}_r;Y_1|X_1,X_r),
\end{align}
which corresponds to step (\ref{rate_twoway_ori_3}) in our analysis. However, this is incorrect since the decoded joint distribution of this error event is $p(x_1)p(x_r)p(\hat{y}_r|x_r)p(y_1|x_1,x_r)$. Therefore, the error probability can be bounded as
\begin{align*}
&P(\mathcal{E})=\sum _{(x_1,x_r,\hat{y}_r,y_1)\in T^{(n)}_{\epsilon}} p(x_1)p(x_r)p(\hat{y}_r|x_r)p(y_1|x_1,x_r)\\
&\leq 2^{n(H(X_1,X_r,\hat{Y}_r,Y_1)-H(X_1)-H(X_r)-H(\hat{Y}_r|X_r)-H(Y_1|X_1,X_r)-3\delta(\epsilon))}\\
&=2^{-n(I(\hat{Y}_r;X_1,Y_1|X_r)-3\delta(\epsilon))},
\end{align*}
which tends to zero as as $n\rightarrow \infty $ if (\ref{rate_twoway_ori_3}) is satisfied instead of (\ref{wrong}).
\end{rem}

\setcounter{subsubsection}{0}
\section{Proof of Corollary \ref{cor:no_index_decoding}}
\label{appendix:cor:no_index_decoding}
The codebook generation and encoding is the same as that in Theorem \ref{rate_two_way_nobin}.
The decoding rule is changed as follows: in block $j+1$, user 1 finds a unique $\hat{m}_{2,j}$ such that
\begin{align}
(x_2^n(\hat{m}_{2,j}),x_r^n(\hat{k}_{j-1}),\hat{y}_r^n(\hat{k}_j|\hat{k}_{j-1}),y_1^n(j),x_1^n(m_{1,j}))&\in T^{(n)}_{\epsilon} \nonumber\\
\textrm{and}~~~~~~~~~~~~~~(x_r^n(\hat{k}_j),y_1^n(j+1),x_1^n(m_{1,j+1}))&\in T^{(n)}_{\epsilon} \nonumber
\end{align}
for some pair of $(\hat{k}_{j-1},\hat{k}_{j})$.
Next we present the error analysis.\par
Assume without loss of generality that $m_{1,j}=m_{1,j+1}=m_{2,j}=1$ and $k_{j-1}=k_{j}=1$. First define the following two events:
\begin{align}
\mathcal{E}'_{1j}(k_j)&=\big\{(x_r^n(k_j),y_1^n(j+1),x_1^n(1))\in T^{(n)}_{\epsilon}\big\} \nonumber \\
\mathcal{E}'_{2j}(m_{2,j},k_j,k_{j-1})&=\big\{(x_2^n({m}_{2,j}),x_r^n(k_{j-1}),\hat{y}_r^n({k}_j|k_{j-1}),y_1^n(j),x_1^n(1))\in T^{(n)}_{\epsilon}\big\}. \nonumber
\end{align}
Then the decoder makes an error only if one or more of the following events occur:

\begin{align}
\mathcal{E}_{1j}=&\big\{(\hat{y}_r^n(k_j|1),y^n_r(j),x^n_r(1))\notin T^{(n)}_{\epsilon}~\textrm{for all}~k_j\in[1:2^{nR_r}]\big\} \nonumber \\
\mathcal{E}_{2j}=&\big\{(x_r^n(1),y_1^n(j+1),x_1^n(1))\notin T^{(n)}_{\epsilon}~\textrm{or}~(x_2^n(1),x_r^n(1),\hat{y}_r^n(1|1),y_1^n(j),x_1^n(1))\notin T^{(n)}_{\epsilon} \big\} \nonumber\\
\tilde{\mathcal{E}}_{4j}=&\big\{\mathcal{E}'_{1j}(1)~\textrm{and}~\mathcal{E}'_{2j}(m_{2,j},1,1)~\textrm{for some}~m_{2,j}\neq 1\big\} \nonumber \\
\tilde{\mathcal{E}}_{5j}=&\big\{\mathcal{E}'_{1j}(k_j)~\textrm{and}~\mathcal{E}'_{2j}(m_{2,j},k_j,1)~\textrm{for some}~m_{2,j}\neq 1,k_j\neq 1\big\} \nonumber\\
\mathcal{E}_{6j}=&\big\{\mathcal{E}'_{1j}(1)~\textrm{and}~\mathcal{E}'_{2j}(m_{2,j},1,k_{j-1})\textrm{~for some}
~m_{2,j}\neq 1,k_{j-1}\neq 1\big\}\nonumber\\
\mathcal{E}_{7j}=&\big\{\mathcal{E}'_{1j}(k_j)~\textrm{and}~\mathcal{E}'_{2j}(m_{2,j},k_j,k_{j-1})~\textrm{for some}
~m_{2,j}\neq 1,k_j\neq 1,k_{j-1}\neq 1\big\}, \nonumber
\end{align}
where $\mathcal{E}_{1j}, \mathcal{E}_{2j}$ are the same as in \eqref{error:two-way_nobin} in the proof of Theorem \ref{rate_two_way_nobin} (Appendix \ref{appendix:rate_two_way_nobin}); $\mathcal{E}_{3j}$ in \eqref{error:two-way_nobin} is not an error here. $\tilde{\mathcal{E}}_{4j}$ and $\tilde{\mathcal{E}}_{5j}$ are similar to ${\mathcal{E}}_{4j}$ and ${\mathcal{E}}_{5j}$ in \eqref{error:two-way_nobin}. $\mathcal{E}_{6j}$ and $\mathcal{E}_{7j}$ are new error events.\par
The probability of error is bounded as
\begin{align}
P\{\hat{m}_{2,j}\neq 1\} \leq  P(\mathcal{E}_{1j})+P(\mathcal{E}_{2j}\cap\mathcal{E}_{1j}^c) +P(\tilde{\mathcal{E}}_{4j})+P(\tilde{\mathcal{E}}_{5j})+P(\mathcal{E}_{6j})+P(\mathcal{E}_{7j}). \nonumber
\end{align}
Similar to the proof of Theorem \ref{rate_two_way_nobin} in Appendix \ref{appendix:rate_two_way_nobin},
by the covering lemma, $P(\mathcal{E}_{1j})\rightarrow 0$ as $n\rightarrow \infty$, if
\begin{align}
\label{rate_1_1}
R_r > I(\hat{Y}_r;Y_r|X_r).
\end{align}
By the conditional typicality lemma, the second term $P(\mathcal{E}_{2j}\cap\mathcal{E}_{1j}^c) \rightarrow 0$ as $n\rightarrow \infty$. \par
By the packing lemma, $P(\tilde{\mathcal{E}}_{4j})\rightarrow 0$ as $n\rightarrow \infty$ if
\begin{align}
R_2 \leq I(X_2;Y_1,\hat{Y}_r|X_r,X_1).
\end{align}
Similarly, $P(\tilde{\mathcal{E}}_{5j})\rightarrow 0$ as $n\rightarrow \infty$ if
\begin{align}
\!\! R_2+R_r \leq I(X_2,X_r;Y_1|X_1)+I(\hat{Y}_r;X_1,X_2,Y_1|X_r).
\end{align}

For the new error events $\mathcal{E}_{6j}$ and $\mathcal{E}_{7j}$, the decoded joint distributions are as follows.
\begin{align}
\mathcal{E}'_{2j}(m_{2,j},1,k_{j-1})&: p(x_1)p(x_2)p(x_r)p(\hat{y}_r|x_r)p(y_1|x_1)\nonumber\\
\!\!\!\! \mathcal{E}'_{2j}(m_{2,j},k_j,k_{j-1})&: p(x_1)p(x_2)p(x_r)p(\hat{y}_r|x_r)p(y_1|x_1),
\label{joint_distribution_2}
\end{align}
where $m_{2,j} \neq 1, k_j \neq 1,k_{j-1} \neq 1$. Using standard joint typicality analysis, we can obtain a bound on each error event as follows.
$P(\mathcal{E}_{6j})\rightarrow 0$ as $n\rightarrow \infty$ if
\begin{align}
\!\! R_2+R_r \leq I(X_2,X_r;Y_1|X_1)+I(\hat{Y}_r;X_1,X_2,Y_1|X_r).
\end{align}
$P(\mathcal{E}_{7j})\rightarrow 0$ as $n\rightarrow \infty$ if
\begin{align}
R_2+2R_r \leq I(X_r;Y_1|X_1)+I(X_2,X_r;Y_1|X_1)
+I(\hat{Y}_r;X_1,X_2,Y_1|X_r).
\label{rate:new_no_repetitiion}
\end{align}
Combining the above inequalities, we obtain the rate region in Corollary \ref{cor:no_index_decoding}. Compared to the rate region of noisy network coding in \eqref{rate_noisy}, the only new constraint is \eqref{rate:new_no_repetitiion}, which occurs when both compression indices are wrong in addition to a wrong message.

\setcounter{subsubsection}{0}
\section{Proof of Corollary \ref{cor:repeat_once_decoding}}
\label{appendix:cor:repeat_once_decoding}
Assume without loss of generality that $m_{1,j}=m_{1,j+1}=m_{2,j}=1$ and $k_{2j-2}=k_{2j-1}=k_{2j}=1$. First define the following three events:
\begin{align}
\mathcal{E}'_{1j}(k_{2j})
&=\big\{(x_r^n(k_{2j}),y_1^n(2j+1),x_{1,2j+1}^n(1))\in T^{(n)}_{\epsilon}\big\}\nonumber\\
\mathcal{E}'_{2j}(m_{2,j},k_{2j},k_{2j-1})
&=\big\{(x_{2,2j}^n({m}_{2,j}),x_r^n(k_{2j-1}),\hat{y}_r^n({k}_{2j}|k_{2j-1}),y_1^n(2j),x_{1,2j}^n(1))\in T^{(n)}_{\epsilon}\big\} \nonumber\\
\mathcal{E}'_{3j}(m_{2,j},k_{2j-1},k_{2j-2})
&=\big\{(x_{2,2j-1}^n({m}_{2,j}),x_r^n(k_{2j-2}),\hat{y}_r^n({k}_{2j-1}|k_{2j-2}),y_1^n(2j-1),x_{1,2j-1}^n(1))\in T^{(n)}_{\epsilon}\big\}. \nonumber
\end{align}
Then the decoder makes an error only if one or more of the following events occur:
\begin{align}
\mathcal{E}_{1j}&=\big\{(\hat{y}_r^n(k_{j}|1),y^n_r(j),x^n_r(1))\notin T^{(n)}_{\epsilon}~\textrm{for all}~k_j\in[1:2^{nR_r}]\big\} \nonumber \\
\mathcal{E}_{2j}&=\big\{(x_r^n(1),y_1^n(2j+1),x_{1,2j+1}^n(1))\notin T^{(n)}_{\epsilon}~\textrm{or}~\nonumber \\
&~~~~~\;(x_{2,2j}^n(1),x_r^n(1),\hat{y}_r^n(1|1),y_1^n(2j),x_{1,2j}^n(1))\notin T^{(n)}_{\epsilon} ~\textrm{or}~\nonumber \\
&~~~~~\;(x_{2,2j-1}^n(1),x_r^n(1),\hat{y}_r^n(1|1),y_1^n(2j-1),x_{1,2j-1}^n(1))\notin T^{(n)}_{\epsilon}\big\} \nonumber\\
\mathcal{E}^r_{3j}&=\big\{\mathcal{E}'_{1j}(1), \mathcal{E}'_{2j}(m_{2,j},1,1)~\textrm{and}~\mathcal{E}'_{3j}(m_{2,j},1,1)~\textrm{for some}~m_{2,j}\neq 1\big\} \nonumber \\
\mathcal{E}^r_{4j}&=\big\{\mathcal{E}'_{1j}(1), \mathcal{E}'_{2j}(m_{2,j},1,1)~\textrm{and}~\mathcal{E}'_{3j}(m_{2,j},1,k_{2j-2})~\textrm{for some}~m_{2,j}\neq 1,k_{2j-2}\neq 1\big\} \nonumber \\
\mathcal{E}^r_{5j}&=\big\{\mathcal{E}'_{1j}(1), \mathcal{E}'_{2j}(m_{2,j},1,k_{2j-1})~\textrm{and}~\mathcal{E}'_{3j}(m_{2,j},k_{2j-1},1)~\textrm{for some}~m_{2,j}\neq 1,k_{2j-1}\neq 1\big\} \nonumber \\
\mathcal{E}^r_{6j}&=\big\{\mathcal{E}'_{1j}(k_{2j}), \mathcal{E}'_{2j}(m_{2,j},k_{2j},1)~\textrm{and}~\mathcal{E}'_{3j}(m_{2,j},1,1)~\textrm{for some}~m_{2,j}\neq 1,k_{2j}\neq 1\big\}  \nonumber
\end{align}
\begin{align}
\mathcal{E}^r_{7j}&=\big\{\mathcal{E}'_{1j}(k_{2j}), \mathcal{E}'_{2j}(m_{2,j},k_{2j},k_{2j-1})~\textrm{and}~\mathcal{E}'_{3j}(m_{2,j},k_{2j-1},1)~\textrm{for some}~m_{2,j}\neq 1,k_{2j}\neq 1,k_{2j-1}\neq 1\big\} \nonumber \\
\mathcal{E}^r_{8j}&=\big\{\mathcal{E}'_{1j}(k_{2j}), \mathcal{E}'_{2j}(m_{2,j},k_{2j},1)~\textrm{and}~\mathcal{E}'_{3j}(m_{2,j},1,k_{2j-2})~\textrm{for some}~m_{2,j}\neq 1,k_{2j}\neq 1,k_{2j-2}\neq 1\big\} \nonumber \\
\mathcal{E}^r_{9j}&=\big\{\mathcal{E}'_{1j}(1), \mathcal{E}'_{2j}(m_{2,j},1,k_{2j-1})~\textrm{and}~\mathcal{E}'_{3j}(m_{2,j},k_{2j-1},k_{2j-2})~\textrm{for some}~m_{2,j}\neq 1,k_{2j-1}\neq 1,k_{2j-2}\neq 1\big\} \nonumber \\
\mathcal{E}^r_{10j}&=\big\{\mathcal{E}'_{1j}(k_{2j}), \mathcal{E}'_{2j}(m_{2,j},k_{2j},k_{2j-1})~\textrm{and}~\mathcal{E}'_{3j}(m_{2,j},k_{2j-1},k_{2j-2})\nonumber\\
&~~~~~~\textrm{for some}~m_{2,j}\neq 1,k_{2j}\neq 1,k_{2j-1}\neq 1,k_{2j-2}\neq 1\big\} \nonumber
\end{align}
Similar to the proof of Theorem \ref{rate_two_way_nobin} in Appendix \ref{appendix:rate_two_way_nobin},
by the covering lemma, $P(\mathcal{E}_{1j})\rightarrow 0$ as $n\rightarrow \infty$, if
\begin{align}
\label{rate_1_2}
R_r > I(\hat{Y}_r;Y_r|X_r).
\end{align}
By the conditional typicality lemma, the second term $P(\mathcal{E}_{2j}\cap\mathcal{E}_{1j}^c) \rightarrow 0$ as $n\rightarrow \infty$. \par
Let symbol "$*$" represent the wrong message or compression index.  For $l\in\{2j,3j\}$, similar to \eqref{joint_distribution_1} and \eqref{joint_distribution_2}, the joint decoded distributions for the rest of the error events are as follows.
\begin{align*}
\mathcal{E}'_{1j}(*)&:p(x_1)p(x_r)p(y_1|x_1)\\
\mathcal{E}'_{l}(*,1,1)&:p(x_1)p(x_2)p(x_r)p(y_1,\hat{y}_r|x_r,x_1)\\
\mathcal{E}'_{l}(*,1,*)&:p(x_1)p(x_2)p(x_r)p(\hat{y}_r|x_r)p(y_1|x_1)\\
\mathcal{E}'_{l}(*,*,1)&:p(x_1)p(x_2)p(x_r)p(\hat{y}_r|x_r)p(y_1|x_r,x_1)\\
\mathcal{E}'_{l}(*,*,*)&:p(x_1)p(x_2)p(x_r)p(\hat{y}_r|x_r)p(y_1|x_1).
\end{align*}
Using standard joint typicality analysis, we can obtain a bound on each error event as follows.\\
$P(\mathcal{E}^r_{3j})\rightarrow 0$ as $n\rightarrow \infty$ if
\begin{align}
2R_2 \leq 2I(X_2;Y_1,\hat{Y}_r|X_r,X_1).
\end{align}
$P(\mathcal{E}^r_{4j})\rightarrow 0$ as $n\rightarrow \infty$ if
\begin{align}
2R_2+R_r \leq\; I(X_2;Y_1,\hat{Y}_r|X_r,X_1)+I(X_2,X_r;Y_1|X_1)+I(\hat{Y}_r;X_1,X_2,Y_1|X_r).
\end{align}
$P(\mathcal{E}^r_{5j})\rightarrow 0$ as $n\rightarrow \infty$ if
\begin{align}
\!\!2R_2+R_r \leq\; I(X_2,X_r;Y_1|X_1)+I(\hat{Y}_r;X_1,X_2,Y_1|X_r)+I(X_2;Y_1|X_1,X_r)+I(\hat{Y}_r;X_1,X_2,Y_1|X_r).
\end{align}
$P(\mathcal{E}^r_{6j})\rightarrow 0$ as $n\rightarrow \infty$ if
\begin{align}
2R_2+R_r \leq\; I(X_2;Y_1,\hat{Y}_r|X_r,X_1)+I(X_2,X_r;Y_1|X_1)+I(\hat{Y}_r;X_1,X_2,Y_1|X_r).
\end{align}
$P(\mathcal{E}^r_{7j})\rightarrow 0$ as $n\rightarrow \infty$ if
\begin{align}
2R_2+2R_r \leq\;& I(X_r;Y_1|X_1)+I(X_2,X_r;Y_1|X_1)\nonumber\\
&+I(\hat{Y}_r;X_1,X_2,Y_1|X_r)+I(X_2;Y_1|X_1,X_r)+I(\hat{Y}_r;X_1,X_2,Y_1|X_r).
\end{align}
$P(\mathcal{E}^r_{8j})\rightarrow 0$ as $n\rightarrow \infty$ if
\begin{align}
2R_2+2R_r \leq 2[I(X_2,X_r;Y_1|X_1)+I(\hat{Y}_r;X_1,X_2,Y_1|X_r)].
\end{align}
$P(\mathcal{E}^r_{9j})\rightarrow 0$ as $n\rightarrow \infty$ if
\begin{align}
2R_2+2R_r \leq 2[I(X_2,X_r;Y_1|X_1)+I(\hat{Y}_r;X_1,X_2,Y_1|X_r)].
\end{align}
$P(\mathcal{E}^r_{10j})\rightarrow 0$ as $n\rightarrow \infty$ if
\begin{align}
\label{rate_10j}
2R_2+3R_r \leq\; I(X_r;Y_1|X_1)+2[I(X_2,X_r;Y_1|X_1)+I(\hat{Y}_r;X_1,X_2,Y_1|X_r)].
\end{align}
Combining the above inequalities, we obtain the rate region in Corollary \ref{cor:repeat_once_decoding}. The extra rate constraints \eqref{repeat_two_more_1} and \eqref{repeat_two_more_2} come only from combing \eqref{rate_10j} with \eqref{rate_1_2}. Again we see the boundary effect when the message and all compression indices are wrong.

\section{Proofs of Corollary \ref{cor:opti_sum_noisy} and Corollary \ref{sumrate_allequ}}
\subsection{Proof of Corollary \ref{cor:opti_sum_noisy}}
\label{appendix:cor:opti_sum_noisy}
Our objective is to find the optimal $\sigma^2$ which maximizes the sum rate
\begin{align}
s(\sigma^2) \triangleq \max\left\{0,\min\{R_{11}(\sigma^2),R_{12}(\sigma^2)\}\right\}+\max\left\{0,\min\{R_{21}(\sigma^2),R_{22}(\sigma^2)\}\right\}. \nonumber
\end{align}
Recall that we have assumed $\sigma_{e1}^2 \geq \sigma_{e2}^2$, where $\sigma_{e1}^2,\sigma_{e2}^2$ are defined in (\ref{e1e2}). Note that both $R_{11}(\sigma^2),R_{21}(\sigma^2)$ are non-increasing and $R_{12}(\sigma^2),R_{22}(\sigma^2)$ are non-decreasing. Also,
\begin{align}
R_{11}(\sigma_{e1}^2)&=R_{12}(\sigma_{e1}^2)\nonumber\\
R_{21}(\sigma_{e2}^2)&=R_{22}(\sigma_{e2}^2).\nonumber
\end{align}
Therefore, the problem of maximizing $s(\sigma^2)$ is equivalent to
\begin{align}
\max~~~&q(\sigma^2) \triangleq \max \{0,R_{12}(\sigma^2)\}+R_{21}(\sigma^2)\nonumber\\
\textrm{s.t.}~~~&\sigma^2 \leq \sigma_{e1}^2,\nonumber\\
&\sigma^2 \geq \sigma_{e2}^2.
\label{opti_prob}
\end{align}
Note that $R_{12}(\sigma_{z1}^2)=0$ and $R_{12}(\sigma^2)<0$ for $\sigma^2 \in (0,\sigma_{z1}^2)$, $R_{12}(\sigma^2)>0$ for $\sigma^2 \in (\sigma_{z1}^2,\infty)$. Therefore, the optimization problem in (\ref{opti_prob}) can be derived into the following two cases:\par
\textit{Case 1}: $\sigma_{z1}^2 \leq \sigma_{e2}^2$\\
In this case, the objective function $q(\sigma^2)$ can be simplified to $q(\sigma^2)=R_{12}(\sigma^2)+R_{21}(\sigma^2)$, which is continuously differentiable for $\sigma^2 \in [\sigma_{e2}^2,\sigma_{e1}^2]$. Thus the optimization problem is equivalent to
\begin{align*}
\min~~~&f(\sigma^2)=-R_{12}(\sigma^2)-R_{21}(\sigma^2)\\
\textrm{s.t.}~~~&c_1(\sigma^2)=\sigma_{e1}^2-\sigma^2 \geq 0,\\
&c_2(\sigma^2)=\sigma^2-\sigma_{e2}^2 \geq 0.
\end{align*}
Form the Lagrangian as  $\mathcal{L}(\sigma^2,\lambda_1,\lambda_2)=f(\sigma^2)-\lambda_1c_1(\sigma^2)-\lambda_2c_2(\sigma^2)$. With the KKT conditions, we have:
\begin{align}
\nabla_{\sigma^2}\mathcal{L}(\sigma_{N1}^2,\lambda_1,\lambda_2)&=0,\nonumber\\
\sigma_{e1}^2-\sigma_{N1}^2 &\geq 0,\nonumber\\
\sigma_{N1}^2-\sigma_{e2}^2 &\geq 0,\nonumber\\
\lambda_1,\lambda_2 &\geq 0,\nonumber\\
\lambda_1c_1(\sigma_{N1}^2)&=0,\nonumber\\
\lambda_2c_2(\sigma_{N1}^2)&=0.
\end{align}
By solving the above conditions, the optimal $\sigma_{N1}^2$ for this case is characterized as in (\ref{sigma_n1}).\par
\textit{Case 2}: $\sigma_{z1}^2 > \sigma_{e2}^2$\\
In this case, the objective function $q(\sigma^2)$ is no longer continuously differentiable for $\sigma^2 \in [\sigma_{e2}^2,\sigma_{e1}^2]$. To solve this problem, we divide this interval into two parts. When $\sigma^2 \in [\sigma_{e2}^2,\sigma_{z1}^2)$, the objective function is simplified to $q(\sigma^2)=R_{21}(\sigma^2)$ and is maximized at $\sigma^2=\sigma_{e2}^2$. When $\sigma^2 \in [\sigma_{z1}^2,\sigma_{e1}^2]$, the objective function is simplified to $q(\sigma^2)=R_{12}(\sigma^2)+R_{21}(\sigma^2)$ and is continuously differentiable for $\sigma^2 \in [\sigma_{z1}^2,\sigma_{e1}^2]$. With the KKT conditions, the optimal $\sigma^2$ for this case is characterized as $\sigma_{N2}^2$ in (\ref{sigma_n2}). Combining the two intervals, the optimal $\sigma_N^2$ for this case can be characterized as in (\ref{sigma_on2}).
\subsection{Proof of Corollary \ref{sumrate_allequ}}\label{appendix:cor:sumrate_allequ}

From Corollary \ref{cor:opti_sum_noisy} and Theorem \ref{same_sum_rate}, we have:\\
If $\sigma_{z1}^2 \leq \sigma_{e2}^2$, then the two schemes achieve the same sum rate if and only if at least one of the following three conditions holds:
\begin{align}
\sigma_{c1}^2 &\leq \sigma_{e2}^2\\
\label{three_two}
0<\sigma_{c1}^2 &\leq \sigma_{g}^2\\
\label{three_three}
\sigma_{g}^2 &\leq 0
\end{align}
For the channel configuration of $g_{r1}=g_{1r}, g_{r2}=g_{2r}, g_{21}=g_{12}$, we can show that either (\ref{three_two}) or (\ref{three_three}) will hold.\par
If $\sigma_{z1}^2 > \sigma_{e2}^2$, we can show that $R_{21}(\sigma_{e2}^2) < R_{12}(\sigma_{e1}^2)+R_{21}(\sigma_{e1}^2)$ always holds. Therefore, according to (\ref{sigma_n2}) and (\ref{sigma_on2}), the optimal $\sigma_{N}^2$ can be characterized as:
\begin{align*}
&\text{if}~\sigma_{e1}^2 \geq \sigma_{g}^2 \geq 0,~\sigma_{N}^2=\sigma_{g}^2;\\
&\text{if}~\sigma_{g}^2 > \sigma_{e1}^2~\text{or}~\sigma_{g}^2 < 0,~\sigma_{N}^2=\sigma_{e1}^2.
\end{align*}
Since either (\ref{three_two}) or (\ref{three_three}) holds and $\sigma_{c1}^2 \leq \sigma_{e1}^2$, we obtain $\sigma_{c1}^2 \leq \sigma_{N}^2$. According to Theorem \ref{same_sum_rate}, the two schemes then achieve the same sum rate.

\section{Proof of Theorem \ref{thm:fading_region}}\label{pro8}
From the assumption of CSI available at each node, the constraint $R_{11}$ in Theorem \ref{rate_two_way_nobin} can be expressed as
\begin{align}\label{dmc1}
R_{11}\leq&\; I(X_1;\tilde{Y}_2,\hat{Y}_r|X_2,X_r)\nonumber\\
=&\;I(X_1;Y_2,h_{21},h_{2r},h_{r1},h_{r2},\hat{Y}_r|X_2,X_r)\nonumber\\
\stackrel{(a)}{=}&\;I(X_1;Y_2,\hat{Y}_r|X_2,X_r,h_{21},h_{2r},h_{r1},h_{r2})\nonumber\\
=&\;h(Y_2,\hat{Y}_r|X_2,X_r,h_{21},h_{2r},h_{r1},h_{r2})-\;h(Y_2,\hat{Y}_r|X_1,X_2,X_r,h_{21},h_{2r},h_{r1},h_{r2})\nonumber\\
=&\;h \left( \left(\frac{h_{21}}{d_{12}^{\alpha/2}}X_1+Z_2\right),
\left(\frac{h_{r1}}{d_{1r}^{\alpha/2}}X_1+Z_r+\hat{Z}\right) \Bigg| h_{r_1},h_{21}\right)-\;h(Z_2,Z_r+\hat{Z})
\end{align}
\noindent where $(a)$ follows since each $h_{ij}$ is independent from all $X_i$. Similarly, we obtain $R_{21}$ as
\begin{align}\label{dmc2}
R_{21}\leq\;h\left( \left(\frac{h_{12}}{d_{12}^{\alpha/2}}X_2+Z_1\right),
\left(\frac{h_{r2}}{d_{2r}^{\alpha/2}}X_2+Z_r+\hat{Z}\right)\Bigg|h_{r_2},h_{12}\right)-\;h(Z_1,Z_r+\hat{Z}).
\end{align}
Moving to $R_{12}$, we can express it as
\begin{align}\label{dmc3}
R_{12}
\leq&\;I(X_1,X_r;\tilde{Y}_2|X_2)-I(\hat{Y_r};\tilde{Y}_r|X_1,X_2,X_r,\tilde{Y}_2)\nonumber\\
=&\;I(X_1,X_r;Y_2,h_{21},h_{2r},h_{r1},h_{r2}|X_2)-I(\hat{Y_r};Y_r,h_{r1},h_{r2}|X_1,X_2,X_r,Y_2,h_{21},h_{2r},h_{r1},h_{r2})\nonumber\\
=&\;I(X_1,X_r;Y_2|X_2,h_{21},h_{2r},h_{r1},h_{r2})-I(\hat{Y_r};Y_r,h_{r1},h_{r2}|X_1,X_2,X_r,Y_2,h_{21},h_{2r},h_{r1},h_{r2})\nonumber\\
=&\;h(Y_2|X_2,h_{21},h_{2r})-h(Y_2|X_1,X_r,X_2,h_{21},h_{2r})-h(\hat{Y_r}|X_1,X_2,X_r,Y_2,h_{21},h_{2r},h_{r1},h_{r2})\nonumber\\
&+I(\hat{Y_r}|X_1,X_2,X_r,Y_2,h_{21},h_{2r},h_{r1},h_{r2},Y_r)\nonumber\\
=&\;h\left(\frac{h_{21}}{d_{12}^{\alpha/2}}X_1+\frac{h_{2r}}{d_{2r}^{\alpha/2}}X_r+Z_2\Bigg|h_{21},h_{2r}\right)-h(Z_2)-h(Z_r+\hat{Z})+h(\hat{Z}).
\end{align}
Similarly, we obtain $R_{22}$ as
\begin{align}\label{dmc4}
R_{22}\leq&\;h\left(\frac{h_{12}}{d_{12}^{\alpha/2}}X_2+\frac{h_{1r}}{d_{1r}^{\alpha/2}}X_r+Z_1\Bigg|h_{12},h_{1r}\right)-h(Z_1)-h(Z_r+\hat{Z})+h(\hat{Z}).
\end{align}
Moving to the compression rate constraints, we obtain
\begin{align}\label{dmc5}
&I(\hat{Y_r};\tilde{Y}_r|X_1,X_2,X_r,\tilde{Y}_1)\leq\;I(X_r;\tilde{Y}_1|X_1)\nonumber\\
&\Leftrightarrow h(Z_r+\hat{Z})-h(\hat{Z})\leq\;I(X_r;Y_1|X_1,h_{12},h_{1r},h_{r1},h_{r2})\nonumber\\
&\Leftrightarrow h(Z_r+\hat{Z})-h(\hat{Z})\leq h\left(\frac{h_{12}}{d_{12}^{\alpha/2}}X_2+\frac{h_{1r}}{d_{1r}^{\alpha/2}}X_r+Z_1\Bigg|h_{12},h_{1r}\right)-h\left(\frac{h_{12}}{d_{12}^{\alpha/2}}X_2+Z_1\Bigg|h_{12}\right).
\end{align}
Similarly, the second constraint is equivalent to
\begin{align}\label{dmc6}
&I(\hat{Y_r};\tilde{Y}_r|X_1,X_2,X_r,\tilde{Y}_2)\leq\;I(X_r;\tilde{Y}_2|X_2)\nonumber\\
&\Leftrightarrow h(Z_r+\hat{Z})-h(\hat{Z})\leq h\left(\frac{h_{21}}{d_{12}^{\alpha/2}}X_1+\frac{h_{2r}}{d_{2r}^{\alpha/2}}X_r+Z_2\Bigg|h_{21},h_{2r}\right)-h\left(\frac{h_{21}}{d_{12}^{\alpha/2}}X_1+Z_2\Bigg|h_{21}\right).
\end{align}
Regarding the optimal input distributions, although Gaussian distributions are not necessarily optimal for the compress-forward strategy, we choose $X_1$, $X_2$, $X_r$ and $\hat{Z}$ to be Gaussian for simplicity. With this assumption, equations (\ref{dmc1}-\ref{dmc4}) become as those in (\ref{fad1}) \cite{gamal2010lecture}. Hence, we obtain the rate constraints in (\ref{fad5}). Moreover, constraints (\ref{dmc5}, \ref{dmc6}) become
\begin{align}\label{fad10}
C(1/\sigma^2)\leq&\;E\left\{C\left(\frac{\frac{|h_{2r}|^2}{d_{2r}^\alpha}P}{1+\frac{|h_{21}|^2}{d_{12}^\alpha}P}\right)\right\}\nonumber\\
C(1/\sigma^2)\leq&\;E\left\{C\left(\frac{\frac{|h_{1r}|^2}{d_{1r}^\alpha}P}{1+\frac{|h_{12}|^2}{d_{12}^\alpha}P}\right)\right\}.
\end{align}
\noindent From (\ref{fad10}), we obtain the constraints in (\ref{consre}).

\section{Proof of Corollary \ref{cor:fading_express}}\label{pro9}
Since $|h_{12}|$ is Rayleigh, then $|h_{12}|^2$ is exponential with mean $\lambda=1/E[|h_{12}|^2]$. Let $\gamma=u+v$, where $u$ and $v$ are independent exponential random variables with
parameters $\lambda_u$ and $\lambda_v$, respectively. Then the CDF of $\gamma$ is given as \cite{Lantse}
\begin{align*}
P(\gamma)=
\left\{\begin{array}{cc}
                  1-\left[\frac{\lambda_v}{\lambda_v-\lambda_u}e^{-\lambda_u\gamma}+
                  \frac{\lambda_u}{\lambda_u-\lambda_v}e^{-\lambda_v\gamma}\right], & \lambda_u\neq \lambda_v \\
                  1-(1+\lambda\gamma)e^{\lambda\gamma}, & \lambda_u= \lambda_v
                \end{array}\right\}.
\end{align*}
\noindent Then, the pdf of $\gamma$ for $\lambda_u\neq \lambda_v$ can be expressed as
\begin{align}\label{fad3}
p(\gamma)=\frac{\lambda_u\lambda_v}{\lambda_v-\lambda_u}\left[e^{-\lambda_u\gamma}-e^{-\lambda_v\gamma}\right].
\end{align}
In \cite{sagwei}, it is shown that
\begin{align}\label{fad4}
\int_0^{\infty}\log(1+\gamma)\cdot \frac{1}{\bar{\gamma}}e^{-\frac{\gamma}{\bar{\gamma}}}d\gamma
=\frac{1}{4\pi\sqrt{\pi}\ln{2}\bar{\gamma}}G_{4,6}^{6,2}\left[\frac{1}{(2\bar{\gamma})^2}\mid_{0,0.5,-0.5,0,-0.5,0}^{-0.5,0,0,0.5}\right].
\end{align}
To evaluate $\bar{R}_{11}$ in (\ref{fad1}), set $u_1=\frac{|h_{21}|^2P}{d_{12}^\alpha}$, $v_1=\frac{|h_{r1}|^2P}{d_{1r}^\alpha(1+\sigma^2)}$ and $\gamma=u_1+v_1$. Then, by using (\ref{fad3}) and (\ref{fad4}),  $\bar{R}_{11}$ can be expressed as
\begin{align*}
\bar{R}_{11}=\!\!\int_0^{\infty}\!\!\!\log(1+\gamma)\cdot
\frac{\lambda_{u_1}\lambda_{v_1}}{\lambda_{v_1}-\lambda_{u_1}}\left[e^{-\lambda_{u_1}\gamma}-e^{-\lambda_{v_1}\gamma}\right]d\gamma
\end{align*}
\noindent which can be solved as in (\ref{fad2}). Similar derivation holds for the other rates.

\section{Evaluate the expressions in Corollary \ref{cor:fading_sumrate}}\label{pro10}
Staring with $\bar{\sigma}_{z1}^2$, we use the second equation in (\ref{fad2}) to obtain the same $\bar{\sigma}_{z1}^2$ in (\ref{sigz}) with
\begin{align}
D_1=\;\frac{\lambda_{u_1}\lambda_{v_3}}{4\pi\sqrt{\pi}\ln{2}(\lambda_{v_3}-\lambda_{u_1})}
\times\left(G_{4,6}^{6,2}
\left[\left(\frac{\lambda_{u_1}}{2}\right)^2\Bigg|_{b_q}^{a_p}\right]
-G_{4,6}^{6,2}\left[\left(\frac{\lambda_{v_3}}{2}\right)^2\Bigg|_{b_q}^{a_p}\right]\right).
\end{align}
Next, the maximum sum rate is obtained by solving the following problem:
\begin{align*}
\max_{\substack{\sigma}}\;\bar{R}_{12}&+\bar{R}_{21}\nonumber\\
\text{s.t.}\;c_1=\bar{\sigma}_{e1}^2-\sigma^2&\geq\;0,\nonumber\\
c_2=\sigma^2-\bar{\sigma}_{e2}^2&\geq\;0.
\end{align*}
Since the constraints $c_1$ and $c_2$ are inactive for the interval $(\bar{\sigma}_{e2}^2,\bar{\sigma}_{e1}^2)$, their corresponding Lagrange multipliers are equal to zero and the optimal $\sigma$ is obtained by taking the derivative of $\bar{R}_{12}+\bar{R}_{21}$ with respect to $\sigma$, which can be expressed as
\begin{align}\label{derf}
\frac{\partial}{\partial \sigma}(\bar{R}_{12}+\bar{R}_{21})=\frac{\partial \bar{R}_{12}}{\partial \bar{\sigma}}+\frac{\partial \bar{R}_{21}}{\partial \bar{\sigma}}
=\frac{2}{\bar{\sigma}(1+\bar{\sigma}^2)}+\frac{\partial \bar{R}_{21}}{\partial \bar{\sigma}}.
\end{align}
\noindent Since
\begin{align}\label{ders}
\frac{\partial \bar{R}_{21}}{\partial \bar{\sigma}}=\frac{\partial \bar{R_{21}}}{\partial \lambda_{v_2}}\frac{\partial \lambda_{v_2}}{\partial \bar{\sigma}}=\frac{2d_{2r}^{\alpha}\bar{\sigma} }{E[|g_{r_2}|^2]P}\frac{\partial \bar{R}_{21}}{\partial \lambda_{v_2}},
\end{align}
\noindent we can express the rate $\bar{R}_{21}$ as $\acute{R}_{21}+\grave{R}_{21}$ where
\begin{align}
\acute{R}_{21}=&\;\frac{\lambda_{u_2}\lambda_{v_2}}{4\pi\sqrt{\pi}\ln{2}(\lambda_{v_2}-\lambda_{u_2})}
G_{4,6}^{6,2}\left[\left(\frac{\lambda_{u_2}}{2}\right)^2\Bigg|_{b_q}^{a_p}\right],\;\text{and}\nonumber\\
\grave{R}_{21}=&\;-\frac{\lambda_{u_2}\lambda_{v_2}}{4\pi\sqrt{\pi}\ln{2}(\lambda_{v_2}-\lambda_{u_2})}
G_{4,6}^{6,2}\left[\left(\frac{\lambda_{v_2}}{2}\right)^2\Bigg|_{b_q}^{a_p}\right]\nonumber\\
=&\;-\frac{\lambda_{u_2}}{4\pi\sqrt{\pi}\ln{2}}\frac{2}{(\lambda_{v_2}-\lambda_{u_2})}
\times\;((0.5\lambda_{v_2})^2)^{1-0.5}
G_{4,6}^{6,2}\left[\left(\frac{\lambda_{v_2}}{2}\right)^2\Bigg|_{b_q}^{a_p}\right]\nonumber\\
=&\;-\frac{\lambda_{u_2}}{4\pi\sqrt{\pi}\ln{2}}\frac{2}{(\lambda_{v_2}-\lambda_{u_2})}(Q^{1-0.5})
G_{4,6}^{6,2}\left[Q\mid_{b_q}^{a_p}\right]\nonumber\\
=&\;W\cdot F_1\cdot F_2
\end{align}
\noindent with
\begin{align*}
W=&\;-\frac{\lambda_{u_2}}{4\pi\sqrt{\pi}\ln{2}}\nonumber\\
F_1=&\;\frac{2}{(\lambda_{v_2}-\lambda_{u_2})}\nonumber\\
F_2=&\;(Q^{1-0.5})
G_{4,6}^{6,2}\left[Q\mid_{b_q}^{a_p}\right].
\end{align*}
\noindent Then, we have
\begin{align}\label{awr12}
\!\!\frac{\partial \acute{R}_{21}}{\partial \lambda_{v_2}}=&\;\frac{-\lambda_{u_2}^2}{(4\pi\sqrt{\pi}\ln{2})(\lambda_{v_2}-\lambda_{u_2})^2}
G_{4,6}^{6,2}\left[\left(\frac{\lambda_{u_2}}{2}\right)^2\Bigg|_{b_q}^{a_p}\right]
\end{align}
\noindent and the derivative of $\grave{R}_{21}$ can be expressed as
\begin{align}\label{der1}
\frac{\partial \grave{R}_{21}}{\partial \lambda_{v_2}}=&\;W\left(F_1\cdot \frac{\partial F_2}{\partial Q}\cdot \frac{\partial Q}{\partial \lambda_{v_2}}+F_2\frac{\partial F_1}{\partial \lambda_{v_2}}\right)
\end{align}
\noindent where each term can be expressed as
\begin{align}\label{der2}
\frac{\partial F_2}{\partial Q}\stackrel{(a)}{=}&\;-Q^{-0.5}G_{4,6}^{6,2}\left[Q\mid_{b_q}^{-0.5,0,0,-0.5}\right]\nonumber\\
\frac{\partial Q}{\partial \lambda_{v_2}}=&\;\frac{\lambda_{v_2}}{2}\nonumber\\
\frac{\partial F_1}{\partial \lambda_{v_2}}=&\;\frac{-2}{(\lambda_{v_2}-\lambda_{u_2})^2}
\end{align}
\noindent with $(a)$ follows from the relationship
\begin{align}
\frac{\partial}{\partial z}\left[z^{1-a_p}G_{p,q}^{m,n}\left(z\mid_{\boldsymbol{b}_q}^{\boldsymbol{a}_p}\right)\right]=-z^{-a_p}G_{p,q}^{m,n}\left(z\mid_{\boldsymbol{b}_q}^{a_1,\cdot\cdot\cdot,a_{p-1},a_p-1}\right),\;n<p.
\end{align}
\noindent By substituting (\ref{der2}) into (\ref{der1}) and combining it with (\ref{awr12}), we obtain
\begin{align}\label{derl}
\frac{\partial \bar{R}_{21}}{\partial \lambda_{v_2}}=\;&\frac{\lambda_{u_2}}{(4\pi\sqrt{\pi}\ln{2})(\lambda_{v_2}-\lambda_{u_2})}\nonumber\\
&\Bigg(\frac{\lambda_{v_2}}{\lambda_{v_2}-\lambda_{u_2}}G_{4,6}^{6,2}\left[\left(\frac{\lambda_{v_2}}{2}\right)^2\Bigg|_{b_q}^{a_p}\right]
-\;\frac{\lambda_{u_2}}{\lambda_{v_2}-\lambda_{u_2}}G_{4,6}^{6,2}\left[\left(\frac{\lambda_{u_2}}{2}\right)^2\Bigg|_{b_q}^{a_p}\right]+2G_{4,6}^{6,2}\left[\left(\frac{\lambda_{v_2}}{2}\right)^2\Bigg|_{b_q}^{z_p}\right]\Bigg).
\end{align}
Finally, by substituting (\ref{derl}) into (\ref{ders}) and by setting (\ref{derf})$=0$, $\bar{\sigma}_g^2$ is obtained by solving the following equation:
\begin{align}
1=&\;\frac{d_{2r}^{\alpha}\bar{\sigma}_g^2(1+\bar{\sigma}_g^2)\lambda_{u_2}}{(4\pi\sqrt{\pi}\ln{2})E[|h_{r2}|^2]P(\lambda_{v_2}-\lambda_{u_2})}\nonumber\\
&\;\Bigg(\frac{-\lambda_{v_2}}{\lambda_{v_2}-\lambda_{u_2}}G_{4,6}^{6,2}\left[\left(\frac{\lambda_{v_2}}{2}\right)^2\Bigg|_{b_q}^{a_p}\right]+\;\frac{\lambda_{u_2}}{\lambda_{v_2}-\lambda_{u_2}}G_{4,6}^{6,2}\left[\left(\frac{\lambda_{u_2}}{2}\right)^2\Bigg|_{b_q}^{a_p}\right]-\;2G_{4,6}^{6,2}\left[\left(\frac{\lambda_{v_2}}{2}\right)^2\Bigg|_{b_q}^{z_p}\right]\Bigg)
\end{align}
\noindent where $a_p=[-0.5,0,0,0.5],$ $b_q=[0,0.5,-0.5,0,-0.5,0]$ and $z_p=[-0.5,0,0,-0.5].$

\bibliographystyle{IEEEtran}
\bibliography{reflist}

\end{document}